\documentclass[graybox,envcountsect,envcountsame]{svmult}
\usepackage[
    backend=biber,
    style=alphabetic,
    url=false,
    doi=false,
    isbn=false,
    backref=true,
    maxbibnames=99,
    useprefix=false,
]{biblatex}


\newcommand{\myhref}[1]{%
 \ifboolexpr{%
   test {\ifhyperref}
   and
   not test {\iftoggle{bbx:url}}
   and
   not test {\iftoggle{bbx:doi}}
  }
  {\href{\doiorurl}{#1}}
  {#1}%
}

\DeclareFieldFormat{title}{\myhref{\mkbibemph{#1}}}
\DeclareFieldFormat
  [article,inbook,incollection,inproceedings,patent,thesis,unpublished]
  {title}{\myhref{\mkbibquote{#1\isdot}}}


\addbibresource{ref.bib}
\usepackage{orcidlink}
\usepackage{type1cm}

\usepackage{makeidx}         
\usepackage{graphicx}        

\usepackage{multicol}        
\usepackage[bottom]{footmisc}

\usepackage{newtxtext}       %
\usepackage[varvw]{newtxmath}       

\linespread{1.25}

\usepackage{tikz}
\usepackage[x11names]{xcolor}
\usetikzlibrary{decorations.pathreplacing,calligraphy}
\usetikzlibrary{positioning}
\usetikzlibrary{decorations}
\usetikzlibrary{decorations.pathmorphing}
\usetikzlibrary{shapes, fit, arrows.meta}
\usetikzlibrary{calc}
\usepackage{pgfplots}
\usepgfplotslibrary{groupplots}
\pgfplotsset{compat=1.18}

\usepackage[activate={true,nocompatibility},final,tracking=true,kerning=true,spacing=true,factor=1100,stretch=10,shrink=10]{microtype}

\makeindex

\def\Caratheodory{Carath\'{e}odory}
\newcommand{\Conv}{\operatorname{Conv}}

\usepackage{mathtools, stmaryrd}
\usepackage{dirtytalk}

\usepackage{hyperref}
\usepackage{listings}
\usepackage{etoolbox}

\hypersetup{
    hidelinks,
    colorlinks,
    hypertexnames=false,
    linkcolor=[rgb]{0.3,0.3,0.6},
    citecolor=[rgb]{0.2, 0.6, 0.2},
    urlcolor=[rgb]{0.6, 0.2, 0.2}
}

\usepackage[nameinlink,capitalise]{cleveref}

\usepackage{macros}

\begin{document}

\title*{A primer on the closure of algebraic complexity classes under factoring}

\author{C.S. Bhargav\orcidlink{0000-0002-6920-4998} and Prateek Dwivedi\orcidlink{0000-0002-0572-3721} and Nitin Saxena\orcidlink{0000-0001-6931-898X}}
\institute{C.S. Bhargav \at University of Regensburg, Germany, and IIT Kanpur, India, \email{bhargav.cs@ur.de}
\and Prateek Dwivedi \at ITU Copenhagen, Denmark, \email{prdw@itu.dk}
\and Nitin Saxena \at IIT Kanpur, India, \email{nitin@cse.iitk.ac.in}}

{\def\addcontentsline#1#2#3{}\maketitle}
\renewcommand{\leftmark}{Closure of algebraic complexity classes}
\renewcommand{\rightmark}{Bhargav, Dwivedi \& Saxena}

\abstract{
    Polynomial factorisation is a fundamental problem in computational algebra. Over the past half century, a variety of algorithmic techniques have been developed to tackle different variants of this problem. In parallel, algebraic complexity theory classifies polynomials into complexity classes based on their computational hardness. This raises a natural question: Are these complexity classes closed under factorisation?\\
    \indent
    In this survey, we revisit pivotal techniques in polynomial factorisation: Hensel lifting, Newton iteration, and Lagrange inversion. These techniques have played an essential role in resolving key factoring questions in algebraic complexity for more than half a century. We examine and organise the known results through the lens of these techniques, discussing their underlying mathematical equivalence while reflecting on how their applications vary depending on the problem context.\\
    \indent
    We focus on prominent algebraic complexity classes, including $\mathsf{VP}$ (circuits of polynomial size and degree), its closure $\overline{\mathsf{VP}}$, the class $\mathsf{VNP}$ (verifier circuits of polynomial size and degree), $\mathsf{VBP}$ (polynomial-size branching programs), $\mathsf{VF}$ (polynomial-size formulas), and $\mathsf{VP}_{\text{nb}}$ (circuits of polynomial size and exponential degree). We also discuss bounded-depth circuits and sparse polynomials. Along the way, we highlight several unresolved open problems.  
}

\vspace{5cm}
\begingroup
\let\clearpage\relax 
\setcounter{tocdepth}{2}
\tableofcontents
\endgroup
\clearpage
\section{Introduction}
\label{sec:intro}

The problem of finding a nontrivial factor of a polynomial is a classical and fundamental problem, with a rich history spanning centuries~\cite{von2006}. Kaltofen \cite{Kal1982,kaltofen1990,Kal1992} gave a thorough treatment of several foundational results, while von zur Gathen and Panario \cite{vP2001} focused on factorisation over finite fields. 
These developments are also covered in greater depth in standard textbooks such as~\cite{vG2013} and~\cite{Sho2009}.

In this survey, we focus on the core techniques and ideas that drive factorisation results within algebraic circuit complexity. Our aim is to illustrate how these tools have contributed to resolving long-standing questions in the field. Notably, Forbes and Shpilka~\cite{FS2015} have provided an accessible exposition of the high-level ideas behind some factoring algorithms, emphasising their relevance to other problems in algebraic complexity. Building on this, our survey aims to delve into the technical intricacies of these results by organising them according to the underlying methods. More than algorithms, we will be concerned with whether the factors of a `structured' polynomial are also structured. We begin with a gentle introduction to each technique, followed by a discussion of the key results that it enables. The simplest case, perhaps, is when the polynomial is of a single variable.

\subsection{Univariate factoring} \label{sec:univariate-factor} 

The polynomial ring $\F[x]$ over a field $\F$ is a well-known unique factorisation domain. This ensures that any non-constant polynomial $f \in \F[x]$ can be uniquely decomposed into the product of a constant and a collection of irreducible polynomials. Consequently, computing this decomposition presents a very natural computational challenge. The quest to factor polynomials traces its roots back to ancient Babylonian times, and it has consistently been considered a problem of profound importance due to its extensive theoretical and practical applications.

\begin{questype}{Problem} 
\label{prob:uni-factor}
    \noindent Given a univariate polynomial $f(x)$ over a field $\F$, compute pairwise distinct irreducible polynomials $f_1, \ldots, f_r \in \F[x]$ and a tuple $(e_1, \ldots, e_r) \in \N^r$ such that
    $f \;=\; f_1^{e_1} \cdots f_r^{e_r}$.
\end{questype}

Considered over the rational numbers $\Q$, the polynomial $x^2 - 2$ is irreducible, whereas it factors as $(x-3)(x-4) \bmod 7$. Evidently, the problem critically depends on the field $\F$. We begin by considering factorisation over a \emph{finite field} $\F_q$ of order $q = p^a$, for some prime $p$. Despite its simplicity, this case involves several non-trivial ideas that serve as the foundation for more general algorithms, including those over the rationals and algebraic number fields.

The input polynomial $f(x) = \sum_{i=0}^{d} c_i x^i \in \F_q[x]$ is given in the \emph{dense representation} as a list of (coefficient, exponent) pairs $(c_i, i)$ for all $0 \leq i \leq d$. Let $t \in \F_p[y]$ be some irreducible polynomial of degree $a$, such that $\F_q \cong \F_p[y]/\langle t(y) \rangle$. Given such a $t$, arithmetic operations in $\F_q$ can be performed in time $\poly(\log q)$. Hence, we assume that the irreducible polynomial $t$ is part of the input to the algorithm. A thorough exposition of computational issues in finite fields is available in~\cite{GS2013} and \cite[Section A.4]{AB2009}. The goal is then to solve univariate factoring over $\F_q$ in time polynomial in $d$ and $\log q$. We will be describing two classical algorithms, each offering a different perspective on the problem.

An important subroutine that underlies most of the factoring algorithms is computing the greatest common divisor (GCD) of two polynomials.
The classical Euclidean algorithm can be used to efficiently compute the $\gcd$~\cite[Section 3]{vG2013}. 
Throughout this text, we use the notation $\poly(n)$ to denote any function of the form $O(n^c)$ for some constant $c$. In particular, when we say that an algorithm takes $\poly(n)$ operations over a field $\F$, we refer to a count of field-level arithmetic operations performed within $\F$. This complexity measure abstracts away the bit-level details of the field operations, which depend on the representation of $\F$.

\begin{lemma}[GCD] \label{lem:gcd}
    Let $\F$ be a field and $f, g \in \F[x]$ be polynomials of degree at most $d$. Then, $\gcd(f,g)$ can be computed in $\poly(d)$ operations over $\F$. Moreover, the algorithm returns $a, b \in \F[x]$ such that $a \cdot f + b \cdot g = \gcd(f,g)$. 
\end{lemma}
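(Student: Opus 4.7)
The plan is to establish the lemma via the classical Extended Euclidean Algorithm for polynomials. I would first recall polynomial long division: given $f, g \in \F[x]$ with $g \neq 0$ and $\deg g \leq \deg f \leq d$, one can compute in $O(d^2)$ field operations the unique $q, r \in \F[x]$ satisfying $f = qg + r$ with $\deg r < \deg g$. This is the atomic operation on which everything else rests.

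Next I would set up the remainder sequence. Define $r_0 := f$, $r_1 := g$, and inductively $r_{i+1} := r_{i-1} \bmod r_i$ via polynomial division, stopping at the first index $k$ where $r_{k+1} = 0$. Since $\deg r_{i+1} < \deg r_i$, the sequence strictly decreases in degree, so $k \leq d+1$. A standard induction on $i$ shows $\gcd(r_{i-1}, r_i) = \gcd(r_i, r_{i+1})$, because any common divisor of two consecutive terms divides all others; hence $\gcd(f, g) = r_k$ (up to a unit, which we can normalise by making $r_k$ monic with one extra scalar multiplication).

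For the Bézout part, I would maintain auxiliary sequences $(a_i, b_i)$ with the invariant $a_i f + b_i g = r_i$. The base cases are $(a_0, b_0) = (1, 0)$ and $(a_1, b_1) = (0, 1)$; if division gives $r_{i-1} = q_i r_i + r_{i+1}$, then taking $a_{i+1} = a_{i-1} - q_i a_i$ and $b_{i+1} = b_{i-1} - q_i b_i$ preserves the invariant, as a one-line substitution verifies. At termination, $(a_k, b_k)$ are the desired Bézout coefficients.

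For complexity, the loop runs at most $d+1$ times, each iteration performing one polynomial division and a constant number of polynomial additions and multiplications by quotients $q_i$ of degree at most $d$. Naively each step costs $O(d^2)$ field operations, giving a total of $O(d^3) = \poly(d)$, which suffices for the stated bound. I do not see any genuine obstacle here; the only mild subtlety is bounding the degrees of the $a_i, b_i$ as the recursion proceeds (one can show by induction that $\deg a_i, \deg b_i \leq d$), which is needed to keep the per-step cost polynomial in $d$ rather than blowing up. Faster $\widetilde{O}(d)$ variants via the half-GCD are known but unnecessary for this lemma.
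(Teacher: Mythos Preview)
Your proposal is correct and follows exactly the approach the paper has in mind: the paper does not actually prove this lemma but simply invokes the classical (extended) Euclidean algorithm with a reference to a standard textbook, and your write-up is a faithful unpacking of that algorithm together with the standard Bézout-coefficient recursion and a correct $\poly(d)$ cost analysis.
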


Most factoring algorithms over finite fields, and beyond, follow a standard three-step structure (see \Cref{fig:fact-steps}):

\begin{enumerate}
    \item \textbf{Square-free Factorisation:}  
    Note that each $e_i \geq 1$ in \Cref{prob:uni-factor}. The \emph{square-free} part of $f$, denoted by $\operatorname{rad}(f) = f_1 \cdots f_r$, is called the \emph{radical} of $f$. 
    This can be efficiently obtained by dividing $f$ by $\gcd\left(f, \partial_x f\right)$. If $\partial_x f=0$ then $f$ is a perfect $p$-th power. We can apply the Frobenius automorphism to extract its $p$-th root, effectively computing $f^{1/p}$.

    \item \textbf{Distinct Degree Factorisation:}
    The goal is to decompose $\operatorname{rad}(f)$ into a product $\tilde{f}_1 \cdots \tilde{f}_d$, where each $\tilde{f}_i$ is the product of all irreducible degree-$i$ factors of $f$. A classical result from field theory shows that $$\tilde{f}_i = \gcd(x^{q^i} - x, f),$$ which can be computed efficiently using repeated squaring and \Cref{lem:gcd}.
    The efficiency arises from two observations: (1) computing $\gcd(x^{q^i} - x, f)$ is equivalent to $\gcd(f, (x^{q^i} - x) \bmod f)$, and (2) the powers $x^{q^i} \bmod f$ can be computed iteratively via $$x^{q^i} \;\equiv\; (x^{q^{i-1}})^q \pmod f.$$
    In practice, for uniformly random polynomials of high degree, this step often dominates the running time of the overall factoring process.

    \item \textbf{Equal Degree Factorisation:}  
    After the above preprocessing, we are left with the task of factoring a polynomial into irreducibles of the same degree. Two classical algorithms for this purpose are Cantor--Zassenhaus and Berlekamp’s algorithm. While Berlekamp's method is deterministic, Cantor--Zassenhaus is randomised and tends to scale better for large finite fields. Both approaches fundamentally exploit the structure given by the Chinese Remainder Theorem.
\end{enumerate}

\begin{figure}
\centering
\begin{tikzpicture}[
    every node/.style={font=\tiny},
    box/.style={rectangle, draw, minimum height=0.6cm, minimum width=0.5cm, align=center},
    >=Stealth
]
\begin{scope}[xshift=1.7cm]  

    \definecolor{c1}{rgb}{0.85,0.7,0.85}
    \definecolor{c2}{rgb}{0.7,0.9,1.0}
    \definecolor{c3}{rgb}{0.7,1.0,0.7}
    \definecolor{c4}{rgb}{0.75,0.75,0.95}
    \definecolor{c5}{rgb}{1.0,0.8,0.75}

    \node[box, fill=c1, minimum width=0.9cm] (s1top) at (0,0.5) {$\left(x+1\right)$};
    \node[box, fill=c1, minimum width=0.9cm] (s1bot) at (0,0) {$\left(x+1\right)$};
    \node[box, fill=c2, minimum width=1.4cm] (b2) at ($(s1bot.east)+(0.7,0)$) {$\left(x^2+1\right)$};
    \node[box, fill=c3, minimum width=1.0cm] (b3) at ($(b2.east)+(0.5,0)$) {$\left(x^2+3\right)$};
    \node[rectangle, draw, fill=c4, minimum height=0.6cm, minimum width=1.6cm] (b4) at ($(b3.east)+(0.8,0)$) {$\left(x^3+x+1\right)$};
    \node[rectangle, draw, fill=c5, minimum height=0.6cm, minimum width=1.8cm] (b5) at ($(b4.east)+(0.9,0)$) {$\left(x^4+1\right)$};

    \node[box, fill=c1, minimum width=0.9cm] (s1sep2) at (0,-1.5) {$\left(x+1\right)$};
    \node[box, fill=c1, minimum width=0.9cm] (s1sep1) at (0,-2.2) {$\left(x+1\right)$};
    \node[box, fill=c2, minimum width=1.4cm] (b2d) at (2.0,-2.0) {$\left(x^2+1\right)$};
    \node[box, fill=c3, minimum width=1.0cm] (b3d) at ($(b2d.east)+(0.5,0)$) {$\left(x^2+3\right)$};
    \node[rectangle, draw, fill=c4, minimum height=0.6cm, minimum width=1.6cm] (b4d) at ($(b3d.east)+(0.8,0)$) {$\left(x^3+x+1\right)$};
    \node[rectangle, draw, fill=c5, minimum height=0.6cm, minimum width=1.8cm] (b5d) at ($(b4d.east)+(0.9,0)$) {$\left(x^4+1\right)$};

    \coordinate (xbase) at (-1.2,-4.0);
    
    \node[box, fill=c2, minimum width=1.4cm] (b2dd) at ($(xbase)+(1.0,0)$) {$\left(x^2+1\right)$};
    \node[box, fill=c3, minimum width=1.0cm] (b3dd) at ($(b2dd.east)+(0.5,0)$) {$\left(x^2+3\right)$};
    
    \node[rectangle, draw, fill=c4, minimum height=0.6cm, minimum width=1.6cm] (b4dd) at ($(xbase)+(5.0,0)$) {$\left(x^3+x+1\right)$};
    \node[rectangle, draw, fill=c5, minimum height=0.6cm, minimum width=1.8cm] (b5dd) at ($(xbase)+(7.3,0)$) {$\left(x^4+1\right)$};

    \node[box, fill=c2] (b2ddd) at ($(xbase)+(0.5,-2.0)$) {$\left(x^2+1\right)$};
    \node[box, fill=c3] (b3ddd) at ($(xbase)+(2.0,-2.0)$) {$\left(x^2+3\right)$};

    \draw[->] ([yshift=-0.1cm]s1bot.south) -- ([yshift=0.1cm]s1sep2.north);
    \draw[->] ([yshift=-0.1cm]b4.south) -- ([yshift=0.1cm]b4d.north);
    
    \draw[->] ([yshift=-0.1cm]b2d.south) -- ([yshift=0.1cm]b2dd.north);
    \draw[->] ([yshift=-0.1cm]b4d.south) -- ([yshift=0.1cm]b4dd.north);
    \draw[->] ([yshift=-0.1cm]b5d.south) -- ([yshift=0.1cm]b5dd.north);
    
    \draw[->] ([yshift=-0.1cm]b2dd.south) -- ([yshift=0.1cm]b2ddd.north);
    \draw[->] ([yshift=-0.1cm]b3dd.south) -- ([yshift=0.1cm]b3ddd.north);

    \node[right, align=left] at (7.2,-0.8) {Square-free\\Factorisation};
    \node[right, align=left] at (7.2,-3.2) {Distinct Degree\\Factorisation};
    \node[right, align=left] at (7.2,-5.2) {Equal Degree\\Factorisation};

\end{scope}
\end{tikzpicture}
\caption{Polynomial factorisation steps. The input polynomial has two degree-1 factors, two degree-2 factors, one degree-3 factor, and one degree-4 factor. Adapted from \cite{vS1992}.}
\label{fig:fact-steps}
\end{figure}

In the remainder of this section, we focus on algorithms for \emph{Equal Degree Factorisation}. After performing the two preprocessing steps, we are left with factoring a polynomial of the form  $f = f_1 \cdot f_2 \cdots f_r$,
where each $f_i$ is an irreducible polynomial of the same degree.

\subsubsection{Berlekamp's algorithm}

At the core of Berlekamp's algorithm lies a fundamental number-theoretic observation: suppose there exists a polynomial $h \in \F_q[x]$ such that
$$
h^p - h \equiv 0 \pmod{f},
$$
and $1 \leq \deg(h) < \deg(f)$. Then, by the identity $h^p - h = \prod_{\alpha \in \F_q} (h - \alpha)$, the polynomial $f$ can be factored by computing $\gcd(f, h - \alpha)$ for all $\alpha \in \F_q$.

To see that such a non-constant $h$ exists, consider the Chinese Remainder Theorem isomorphism $\F_q[x]/\langle f \rangle \cong \bigoplus_{i=1}^r \F_q[x]/\langle f_i \rangle$, where the $f_i$ are the distinct irreducible factors of the square-free $f$ ($r \geq 2$).
Any polynomial $h$ mapping to $(\alpha_1, \dots, \alpha_r)$ under this isomorphism, with each $\alpha_i \in \F_p$, satisfies $h^p \equiv h \pmod{f}$.
Such an $h$ is non-constant precisely when not all $\alpha_i$ are equal.

To find such an $h$, the algorithm considers the $\F_p$-vector space
\begin{equation}
    \label{eq:berlekamp_space}
    V \;\coloneqq\; \setdef{h(x)}{\deg(h) \leq d, h^p - h =0 \pmod f}.
\end{equation}  
Recall $q = p^a$. This is a subspace of $\F_q[x]/\langle f \rangle$ of dimension at most $\deg(f) \cdot a$, and a basis for $V$ can be efficiently computed using standard linear algebra over $\F_p$. Any non-trivial (i.e., non-constant) basis element $h \in V$ yields a non-trivial factor of $f$ via GCD computations.

Notably, $q$-many GCD computations would be inefficient. To avoid that, elements of $\F_q$ are represented as vectors over $\F_p$.
Notably, Berlekamp’s algorithm does not require the input polynomial to have irreducible factors of the same degree; thus, the distinct-degree factorisation step can be bypassed. With careful implementation, the total runtime is bounded by $\poly(p, d, \log q)$, making it a deterministic polynomial-time algorithm in small characteristic, e.g., when $p = (d a)^{O(1)}$. For an accessible exposition, see the lecture notes by Kopparty~\cite[Lecture~10]{Kop2014}.

\subsubsection{Cantor--Zassenhaus Algorithm}\label{sec-CZ}

As described earlier, Berlekamp's algorithm is an efficient deterministic algorithm for fields of small characteristic.
By contrast, the Cantor--Zassenhaus algorithm leverages randomness to offer a more efficient alternative for factoring over large-characteristic fields, such as when $p = (da)^{\omega(1)}$.
The key idea is to choose a polynomial $h \in \F_q[x]$ of degree at most $d-1$ uniformly at random, and check whether the following $\gcd$ yields a non-trivial factor:
$$
\gcd\left(f,\, h^{\frac{q^d - 1}{2}} - 1 \bmod f\right).
$$

To see why this works, recall that $f = f_1 \cdot f_2 \cdots f_r$, obtained after the preprocessing steps. Since each $f_i$ is irreducible of degree at most $d$, the Chinese Remainder Theorem yields an isomorphism:
$$
\frac{\F_q[x]}{\inangle{f}} \cong \frac{\F_q[x]}{\inangle{f_1}} \times \cdots \times \frac{\F_q[x]}{\inangle{f_r}}.
$$
Consider a random polynomial $h \in \F_q[x]/\inangle{f}$ that maps to $(h_1, \dots, h_r)$ under this isomorphism, with the condition that $\gcd(f, h) = 1$ (i.e., none of the $h_i$ are zero, which would already yield a factor of $f$).
Then, $h^{q^d} - h = 0$. Assuming $q$ is odd, it follows that
$$
h \cdot \left(h^{\frac{q^d - 1}{2}} - 1\right) \cdot \left(h^{\frac{q^d - 1}{2}} + 1\right) = 0.
$$

The identity $h^{q^d} - h = \prod_{\alpha \in \mathbb{F}_{q^d}} (h - \alpha)$ implies that, excluding $0$, exactly half the elements of $\mathbb{F}_{q^d}$ satisfy $h^{\frac{q^d - 1}{2}} = 1$ and the other half satisfy $h^{\frac{q^d - 1}{2}} = -1$. Under the CRT map, the components of $h' \coloneqq h^{\frac{q^d - 1}{2}} - 1 \pmod f$ are therefore distributed such that each $h'_i \in \{0, -2\}$. 

A non-trivial factor is obtained if $h'$ contains a mix of both $0$ and $-2$ entries. 
If all $h'_i = 0$, then $h' \equiv 0 \pmod f$ and $\gcd(f, h')$ yields the trivial factor $f$. Similarly, if all $h'_i = -2$, then $h'$ is coprime to $f$ and the $\gcd$ yields $1$.
Since the components are chosen independently and at random, the probability of obtaining such a mixed vector is $1 - (1/2)^{r-1} \ge 1/2$ (for $r \ge 2$). Thus, computing $\gcd(f, h^{\frac{q^d - 1}{2}} - 1)$ yields a non-trivial factor of $f$ with high probability.

When $q$ is even, we instead compute $h' = h + h^2 + h^{2^2} + \cdots + h^{2^{rd - 1}}$ and check whether $\gcd(f, h')$ is non-trivial. This approach works because one can show that $h'(h' + 1) = h^{2^{rd}} + h$, enabling a similar probabilistic argument. The algorithm relies primarily on computing $\gcd$s and repeated squaring. The probabilistic bounds imply that, in expectation, only a constant number of repetitions (at most two) are required. Thus, the overall complexity of the Cantor--Zassenhaus algorithm is $\poly(d, \log q)$.

    



\begin{questype}{Problem}
    \noindent Devise a deterministic algorithm that, given a univariate polynomial $f \in \mathbb{F}_q[x]$ of degree $d$, computes its complete factorisation in time polynomial in $d$ and $\log q$.
\end{questype}

\begin{remark}
    It is unknown whether computing square roots modulo a prime $p$ (i.e., determining $x$ such that $x^2 \equiv a \pmod p$) is achievable in deterministic time polynomial in $\log p$. Readers are referred to \cite{IKRS2012} for a detailed survey on this line of work.
\end{remark}


\subsection{Fine-grained developments}

The complexity of univariate factoring via the Cantor--Zassenhaus algorithm is primarily dominated by modular exponentiation, requiring $O(d^{2+ o(1)} \log q)$ operations in $\mathbb{F}_q$. By optimising the Equal Degree Factorisation step, von zur Gathen and Shoup reduced this cost to $O(d^{2+o(1)} + d^{1+o(1)}\log q)$ operations \cite{vS1992}. Kaltofen and Lobo subsequently matched this complexity by adapting Berlekamp's algorithm to a black-box linear algebra model \cite{KL1994}. The quadratic barrier was breached by Kaltofen and Shoup, who utilised fast matrix multiplication to achieve a complexity of $O(d^{1.815}\log q)$ \cite{KS1998}. More recently, Kedlaya and Umans leveraged a novel approach for modular composition to further reduce the complexity to $O(d^{1.5 + o(1)} + d^{1+o(1)}\log q)$ operations in $\mathbb{F}_q$ \cite{KU2011}.

\section{Model of computation}
\label{sec:comp-model}

We will mostly focus on \emph{arithmetic/algebraic circuits}, a model very natural for computing multivariate polynomials in the variables ${\vecx \coloneqq (x_1,\ldots,x_n)}$\footnote{We will use bold letters to denote tuples of variables.} over a field $\F$. Introduced by Valiant~\cite{Val1979} to develop the algebraic analogue of $\NP$-completeness~\cite{Val1982}, it has led to the development of a rich and varied theory of algebraic complexity (see~\cite{Bur2024}). 

\begin{figure}
    \centering
    \begin{tikzpicture}
    \begin{scope}[scale=0.50]
        \node[rectangle,scale=0.8] (f1) at (4.0, -1) {$f \in \mathbb{F}[x_1,\ldots,x_n]$};
        \node[circle,fill=Thistle1] (g6) at (4.0, -3) {$+$} edge[->] (4.0,-1.5);
        \draw[dashed] (1.0,-3) -- (g6);
        \draw[dashed] (g6) -- (8.0,-3);
        \node[rectangle] (s1) at (9.0, -3) {$\sum$};
        \node[circle,fill=PeachPuff1] (g4) at (3.0, -5) {$\times$} edge[->] (g6);
        \node[circle,fill=PeachPuff1] (g5) at (5.0, -5) {$\times$} edge[->] (g6);
        \draw[dashed] (1.0,-5) -- (g4);
        \draw[dashed] (g4) -- (g5);
        \draw[dashed] (g5) -- (8.0,-5);
        \node[rectangle] (s2) at (9.0, -5) {$\prod$};
        \node[circle,fill=Thistle1] (g1) at (2.0, -7) {$+$} edge[->] node[left] {} (g4) edge[->] (g5);
        \node[circle,fill=Thistle1] (g2) at (4.0, -7) {$+$} edge[->] (g4);
        \node[circle,fill=Thistle1] (g3) at (6.0, -7) {$+$}  edge[->] (g5);
        \draw[dashed] (1.0,-7) -- (g1);
        \draw[dashed] (g1) -- (g2);
        \draw[dashed] (g2) -- (g3);
        \draw[dashed] (g3) -- (8.0,-7);
        \node[rectangle] (s2) at (9.0, -7) {$\sum$};
        \node[circle,thick] (x1) at (1,-9) {$x_1$} edge[->] (g2) edge[->] (g1);
        \node[circle,thick] (x2) at (3.0, -9) {$x_2$} edge[->] (g2);
        \node[circle,thick] (x3) at (5.0, -9) {$\ldots$} edge[->] (g1) edge[->] (g3);
        \node[circle,thick] (x4) at (7.0, -9) {$x_n$} edge[->] (g3);
        \node[rectangle, scale=0.8] (c1) at (4.0, -10) {Circuit};
    \end{scope}

    \tikzstyle{gate}=[circle,draw=black!40,thick]
    \tikzstyle{leaf}=[circle,thick,inner sep=0]
    \begin{scope}[xshift=5cm, scale=0.50]
        \node[rectangle,scale=0.8] (f1) at (4.0, -1) {$f \in \mathbb{F}[x_1,\ldots,x_n]$};
        \node[circle,fill=Thistle1] (g6) at (4.0, -3) {$+$} edge[->] (4.0,-1.5);
        \draw[dashed] (0,-3) -- (g6);
        \draw[dashed] (g6) -- (7.0,-3);
        \node[circle,fill=PeachPuff1] (g4) at (3.0, -5) {$\times$} edge[->] (g6);
        \node[circle,fill=PeachPuff1] (g5) at (5.0, -5) {$\times$} edge[->] (g6);
        \draw[dashed] (0,-5) -- (g4);
        \draw[dashed] (g4) -- (g5);
        \draw[dashed] (g5) -- (7.0,-5);
        \node[circle,fill=Thistle1] (g1) at (2.0, -7) {$+$} edge[->] node[left] {} (g4);
        \node[circle,fill=Thistle1] (g2) at (4.0, -7) {$+$} edge[->] (g4);
        \node[circle,fill=Thistle1] (g3) at (6.0, -7) {$+$}  edge[->] (g5);
        \draw[dashed] (0,-7) -- (g1);
        \draw[dashed] (g1) -- (g2);
        \draw[dashed] (g2) -- (g3);
        \draw[dashed] (g3) -- (7.0,-7);
        \node[circle,thick] (x1) at (1,-9) {$x_1$} edge[->] (g1);
        \node[circle,thick] (x1) at (5,-9) {$x_1$} edge[->] (g2);
        \node[circle,thick] (x2) at (3.0, -9) {$x_2$} edge[->] (g2);
        \node[circle,thick] (x3) at (6.0, -9) {$\ldots$};
        \node[circle,thick] (x4) at (7.0, -9) {$x_n$} edge[->] (g3);
        \node[rectangle, scale=0.8] (c1) at (4.0, -10) {Formula};
    \end{scope}
    \end{tikzpicture}
    
    \caption{An algebraic circuit and a formula of depth $3$.}
    \label{fig:alg-ckt-formula}
\end{figure}
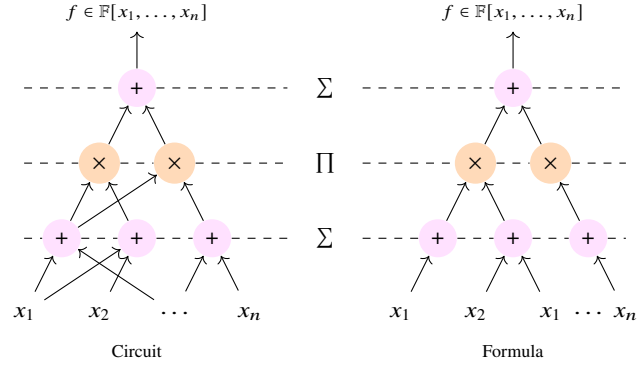
    
\begin{definition}[Algebraic Circuits and Formulas] \label{def:circuits}    
    An \emph{algebraic circuit}, defined over a field $\F$, is a \emph{layered directed acyclic graph} with alternating layers of `$+$' and `$\times$' gates, and a single root, called the `output' gate. The `input' \emph{leaf} gates are labelled by either a variable from $x_1, \dots, x_n$ or a constant from $\F$. If the graph is a \emph{tree}, then we call it a \emph{formula}. See \Cref{fig:alg-ckt-formula} for an illustration.
    
    A circuit computes a polynomial $f \in \F[\vecx]$ in the natural way: a `$+$' gate sums up the polynomials from its children, whereas a `$\times$' gate computes their product, with the root finally computing $f$. The \emph{size} of a circuit is the total number of vertices in the graph. The \emph{depth} of the circuit is the number of layers in the circuit, or equivalently, the length of the longest path from the root to a leaf. 
\end{definition}

A \emph{complexity class} in the algebraic setting comprises families (or sequences) of polynomials, where a family $(f_n)_{n \in \N}$, contains multivariate polynomials $f_n$ over a field $\F$, and the number of variables in $f_n$ grows polynomially with $n$.

In Boolean complexity, the notion of efficient computation is captured by the class $\P$ of problems solvable in polynomial time. The algebraic analogue over a field $\F$ is $\VP_{\F}$ (for Valiant's $\P$, called $p$-computable by Valiant) and consists of all polynomial families where $f_n$ has degree $\poly(n)$ and the smallest circuit (over $\F$) computing $f_n$ has size $\poly(n)$ (\Cref{def:vp}). We will usually drop the field from the notation when the context makes it clear. Note that the notion of computation is \emph{non-uniform} --- the circuits of $f_n$ for different $n$ need not be related to one another. A prime example of a polynomial family in $\VP$ is $(\Det_n)$, defined by the \emph{determinant} of the $n \times n$ symbolic matrix $(x_{ij})_{1\leq i,j \leq n}$:

$$\Det_n = \sum_{\sigma \in S_n} \left( \sgn(\sigma)\prod_{i=1}^n x_{i,\sigma(i)} \right).$$

The algebraic analogue of the class $\NP$ is called $\VNP$. Informally\footnote{For the formal version, see \Cref{def:vnp}.}, it consists of polynomial families which are `explicit', in the sense that given a monomial of $f_n$, we can compute the corresponding coefficient efficiently, say in polynomial time. It is not hard to show that $\VP \subseteq \VNP$, and the long-standing conjecture of Valiant~\cite{Val1979} is that there are explicit polynomial families that cannot be computed efficiently, i.e., $\VP \subsetneq \VNP$. A prominent `explicit' candidate for this separation is the family of \emph{permanents}, 

$$\Perm_n = \sum_{\sigma \in S_n} \left(\prod_{i=1}^n x_{i,\sigma(i)} \right).$$

The determinant and permanent families essentially characterise the classes $\VP$ and $\VNP$, respectively. Hence, Valiant's conjecture is also sometimes called the Permanent versus Determinant problem~\cite{Agr2006}. It is the algebraic version of Cook's hypothesis~\cite{Coo1971}, the famous $\P$ vs. $\NP$ problem (see~\cite{AB2009} for more details). There is a formal sense in which the $\VP$ vs. $\VNP$ problem is a `stepping stone' towards the $\P$ vs. $\NP$ problem~\cite{Bur2000a}. For details on the connection between Valiant's and Cook's hypotheses, and the progress on Valiant's conjecture, we encourage readers to consult~\cite{BCS1997,Bur1999,Bur2000,SY2010,CKW2010,Mah2014,Sap2021}.

\subsection{Structural results}
\label{subsec:struct-results}

Algebraic circuits impose a combinatorial structure on the polynomials being computed. We will now list (without proofs) some structural properties of algebraic circuits that showcase the robustness of the model and will be useful for us in the future. For proofs, see~\cite{SY2010,Sap2021}. To begin with, we can extract coefficients (with respect to a single variable) of a polynomial computed by a small circuit efficiently.

\begin{lemma}[Interpolation]
\label{lem:interpolation}
    Let $\F$ be a field with $|\F| > k$, and $f \in \F[\vecx,y]$ be a polynomial with $\deg_y(f) = k$. Suppose that $f(\vecx,y)=\sum_{j=0}^k f_j(\vecx)y^j$ where $f_j \in \F[\vecx]$ for all $j \in \{0,1,\ldots,k\}$. 
    
    If $f(\vecx,y)$ can be computed by a circuit of size $s$ and depth $\Delta$, then for all $j \in \{0,1,\ldots,k\}$, $f_j(\vecx)$ can be computed by a circuit of size $O(sk)$ and depth $\Delta +1$.
\end{lemma}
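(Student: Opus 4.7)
The plan is to apply Lagrange interpolation in the variable $y$. Since $|\F| > k$, I pick $k+1$ distinct field elements $\alpha_0, \ldots, \alpha_k \in \F$. For each $i$, substituting $y = \alpha_i$ into the given size-$s$, depth-$\Delta$ circuit for $f$ produces a circuit of size at most $s$ and depth at most $\Delta$ that computes
\[
    f(\vecx, \alpha_i) \;=\; \sum_{j=0}^k \alpha_i^j\, f_j(\vecx).
\]
These $k+1$ identities assemble into a Vandermonde linear system with matrix $M = (\alpha_i^j)_{0 \le i,j \le k}$, which is invertible because the $\alpha_i$ are pairwise distinct. Inverting $M$ exhibits fixed constants $c_{ij} \in \F$ with $f_j(\vecx) = \sum_{i=0}^k c_{ij}\, f(\vecx, \alpha_i)$ for every $j$.

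To realise $f_j$ as a circuit, I would instantiate the $k+1$ substituted sub-circuits in parallel, contributing aggregate size at most $(k+1)s = O(sk)$, and combine their outputs with the scalars $c_{ij}$ via a single weighted sum. Counting gates, the total is $O(sk) + O(k) = O(sk)$, matching the claimed size bound. Note that this construction is uniform in $j$: the $k+1$ substituted sub-circuits are shared across the $k+1$ target coefficients $f_0, \ldots, f_k$, only the top-level scalars $c_{ij}$ change.

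The one point that I expect to require care is the depth. A naive implementation of the final linear combination would add a scalar-multiplication layer and an addition layer on top, inflating the depth by $2$. To stay at depth exactly $\Delta$, I will invoke the standard convention that edges of an algebraic circuit may carry $\F$-constants, so the scalars $c_{ij}$ are absorbed into wires rather than introducing fresh gates; the remaining top-level $+$ gate can then be merged with the additive top layer of each substituted circuit (when the output gate of the given circuit is a $+$, distributing $c_{ij}$ through it is free, and when it is a $\times$ the constant $c_{ij}$ is absorbed into a single incoming edge of that product). This depth bookkeeping is the only non-routine part of the argument; the substantive content is just Vandermonde inversion applied to $y$-substitutions of the original circuit.
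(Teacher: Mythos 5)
Your proposal is correct and is exactly the standard argument: the paper states this lemma without proof (deferring to the cited surveys), and the proof there is the same Lagrange/Vandermonde interpolation in $y$, running $k+1$ substituted copies of the circuit in parallel and taking a fixed $\F$-linear combination, giving size $O(sk)$. The only loose end is your handling of the case where the output gate is a $\times$: absorbing $c_{ij}$ into an edge does not remove the need for a fresh top $+$ gate to sum the $k+1$ copies, but this is harmless since under the paper's layered convention (its figures have a $+$ root) the merge you describe goes through, and otherwise one only pays depth $\Delta+1$.
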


As we are concerned only with multivariate \emph{polynomials}, we can always formally define the partial derivative of a polynomial with respect to a variable (and by extension, multiple variables) over any field $\F$. Computing the partial derivatives of a circuit with respect to a variable of bounded individual degree is an efficient operation.

\begin{lemma}[Partial Derivatives]
\label{lem:partial-derivative}
    Let $\F$ be a field with $|\F| > r$, and $f \in \F[\vecx,y]$ be a polynomial with $\deg_y(f)=r$. If $f(\vecx,y)$ can be computed by a circuit of size $s$ and depth $\Delta$, then for all $0 \leq j \leq r$, the partial derivative $\partial_y^j f(\vecx,y)$ can be computed by a circuit of size $O(sr^3)$ and depth $\Delta$. 
\end{lemma}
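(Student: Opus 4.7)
The plan is to realise each partial derivative as a coefficient in a Taylor expansion and then extract it using the interpolation machinery already provided by \Cref{lem:interpolation}. Concretely, I would introduce a fresh variable $z$ and construct a circuit for the shifted polynomial $g(\vecx, y, z) := f(\vecx, y+z)$. Such a circuit is produced directly from the given circuit for $f$ by replacing every leaf labelled $y$ with a tiny sum-gate gadget computing $y+z$. The resulting circuit has size $O(s)$ and essentially the same depth $\Delta$, since the new sum layer sits immediately above the inputs and can be absorbed into the input layer without increasing the overall depth.

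The point of the shift is the Taylor identity
$$g(\vecx, y, z) \;=\; f(\vecx, y+z) \;=\; \sum_{j=0}^{r} \frac{(\partial_y^{\,j} f)(\vecx, y)}{j!}\, z^{j},$$
which displays $g$ as a polynomial of degree exactly $r$ in $z$ whose coefficients are the scaled partial derivatives we want. Applying \Cref{lem:interpolation} with $z$ playing the role of the distinguished variable (the hypothesis $|\F| > r$ supplies the required number of distinct interpolation points) produces, for each $j \in \{0,1,\ldots,r\}$, a circuit of depth $\Delta$ and size $O(sr)$ computing the coefficient $\tfrac{1}{j!}\,\partial_y^{\,j} f$. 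Multiplying by the scalar $j!$ recovers $\partial_y^{\,j} f$ at the cost of a single extra gate, and tracking the hidden constants yields the claimed $O(sr^{3})$ bound on the size.

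The main obstacle, and the reason I avoid the tempting alternative of differentiating the circuit gate by gate via Leibniz's rule, is depth preservation. A direct Leibniz recursion forces each product gate of the original circuit to be replaced by a sub-circuit that takes a sum of $O(r)$ products, which inflates the depth multiplicatively and breaks the `same depth $\Delta$' guarantee. By contrast, the Taylor shift concentrates all of the higher-derivative information into a single fresh variable $z$, and extraction is then one invocation of \Cref{lem:interpolation}, which preserves depth exactly. A secondary subtlety is the characteristic: the assumption $|\F|>r$ gives distinct interpolation points, and under the standard convention that $\mathrm{char}(\F)=0$ or $\mathrm{char}(\F)>r$ the factorials $j!$ are invertible; in small positive characteristic the same construction directly yields the Hasse derivative $D_y^{(j)} f$, which is arguably the correct object in that setting.
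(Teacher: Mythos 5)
Your proposal is correct and takes essentially the route the paper has in mind for this (unproved, deferred to \cite{SY2010,Sap2021}) lemma: reduce the derivative to coefficient extraction via \Cref{lem:interpolation} with respect to a single distinguished variable, which is exactly what the Taylor shift $y\mapsto y+z$ followed by interpolation in $z$ accomplishes, well within the $O(sr^3)$ size bound and preserving depth. The only loose end, invertibility of $j!$, is harmless for the statement as written: whenever $j!=0$ in $\F$ the formal iterated derivative $\partial_y^j f$ is identically zero and trivially computable, and otherwise your rescaling of the Hasse/Taylor coefficient is valid.
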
 

The observant reader will have noticed the conspicuous absence of divisions in our definition of algebraic circuits. It turns out that our notions of complexity do not depend on this exclusion, for the most part. Division can be eliminated efficiently over any field~\cite{Str1973a, HY2011}. 

\begin{lemma}[Division Elimination] \label{lem:division-elimination}
   Let $f \in \F[\vecx]$ be a polynomial computed by a circuit (with division gates) of size $s$.
   Then for every $d \in \N$, there exists a circuit without division gates of size $\poly(s, d)$ that computes $f \bmod \inangle{\vecx}^{d+1}$.
\end{lemma}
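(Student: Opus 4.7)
The plan is to follow Strassen's classical argument, which removes divisions by replacing them with truncated geometric series after a suitable shift of the origin. Given a circuit $\Phi$ of size $s$ with division gates computing $f \in \F[\vecx]$ of degree $d$, the first step is to choose a point $\veca \in \F^n$ (passing to a polynomial-size algebraic extension if $\F$ is too small) such that every denominator appearing in $\Phi$ is nonzero when evaluated at $\veca$. Such an $\veca$ exists whenever $|\F|$ exceeds the total degree of the product of all denominators, which is at most $\exp(s)$; over small fields one simply adjoins a formal parameter or works in $\F'$ of size $> 2^s$, and the final polynomial-in-$s$ size bound is preserved because arithmetic in $\F'$ is absorbed into the constant-on-wire cost.

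Next, perform the variable shift $\vecx \mapsto \vecx + \veca$, so that every denominator $D$ in the modified circuit has nonzero constant term $D(\vec 0) = c \neq 0$ and may be written as $D = c\,(1 + g)$ with $g(\vec 0) = 0$. The key identity is then
\[
\frac{1}{c(1+g)} \;=\; \frac{1}{c}\sum_{i=0}^{d}(-g)^{i} \pmod{\langle \vecx \rangle^{d+1}},
\]
which replaces each division gate by a division-free subcircuit in the formal power series ring truncated at degree $d+1$. Because the final output $f(\vecx + \veca)$ is a polynomial of degree $d$, truncation at degree $d$ is lossless.

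To manage this truncation globally, carry along every wire its homogeneous components of degree $0,1,\ldots,d$ separately: a $+$ gate becomes $d+1$ parallel $+$ gates (one per degree), and a $\times$ gate becomes the convolution sum $h_k = \sum_{i+j=k} f_i g_j$ for $k \leq d$, costing $O(d^2)$ new gates. Division gates are first replaced by the geometric-series subcircuit above, which has size $O(d)$ in terms of multiplications of $g$ with itself, and each such multiplication then expands via the same homogeneous-component gadget. The total blow-up is $\poly(s,d,n)$, as required. Finally, undo the shift by substituting $\vecx \mapsto \vecx - \veca$ to recover $f(\vecx)$, which costs no extra gates beyond relabelling the leaves.

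The main obstacle is the first step: guaranteeing a good evaluation point $\veca$. The naive bound on the denominator's degree is exponential in $s$, so one must either assume $|\F|$ is large (standard in the algebraic model), or genuinely pass to an extension / introduce a fresh indeterminate and specialise. Once $\veca$ is fixed, the remaining construction is syntactic and the size accounting is routine.
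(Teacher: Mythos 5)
Your proposal is correct and is precisely the classical Strassen argument that the paper invokes for this lemma (the survey states it without proof, citing Strassen's division-elimination result): shift so every denominator is a unit in the power-series ring, replace each division by a degree-$d$ truncated geometric series, track homogeneous components of degree at most $d$ per wire, and undo the shift. The only caveat, which you already flag and handle adequately, is the need for a large enough field (or a polynomial-degree extension, simulable over $\F$ with polynomial overhead) to find the non-vanishing point $\veca$.
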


In the univariate factoring algorithm, we saw that $\gcd$ computation was an essential ingredient. The following lemma shows that it can be computed efficiently in the algebraic circuit model. Refer to \cite[Lemma 2.9]{KSS2015} for the complete proof.

\begin{lemma}[GCD in circuits] \label{lem:gcd-ckt}
    Let $f,g \in \F[\vecx]$ be two $n$-variate polynomials of degree at most $d$ computed by circuits of size $s$. Then, the $\gcd(f,g)$ can be computed by a circuit of size $\poly(s, d, n)$.
\end{lemma}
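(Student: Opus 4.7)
The plan is to reduce the multivariate GCD to a univariate computation over the rational function field $K = \F(x_2,\ldots,x_n)$, express the answer via Cramer's rule applied to the Sylvester matrix of $f$ and $g$ viewed as polynomials in $x_1$, and then remove divisions via \Cref{lem:division-elimination}. Writing $f = \sum_{i=0}^{d} f_i(x_2,\ldots,x_n)\, x_1^i$ and $g = \sum_{j=0}^{d} g_j(x_2,\ldots,x_n)\, x_1^j$, I would first use \Cref{lem:interpolation} to obtain circuits of size $O(sd)$ computing each coefficient $f_i$ and $g_j$. These are the atoms on which the rest of the construction is built.

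Next, I would work symbolically with the Bezout identity $a f + b g = h$ in $K[x_1]$, where $h$ is the GCD in $K[x_1]$. Treating the $x_1$-coefficients of $a$, $b$, and $h$ as unknowns, equating coefficients in $x_1$ on both sides yields a linear system over $K$ whose matrix is (a submatrix of) the Sylvester matrix of $f$ and $g$, with entries $f_i$ and $g_j$. By Cramer's rule, every coefficient of $h$ is a ratio of determinants of such matrices. Since the determinant of an $m \times m$ matrix whose entries are themselves computable by size-$s'$ circuits has a circuit of size $\poly(m,s')$ (a standard $\VP$-membership fact), each such ratio is computable by a circuit with a single division gate of size $\poly(s,d,n)$. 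Reassembling the $O(d)$ coefficients of $h$ with powers of $x_1$, and finally invoking \Cref{lem:division-elimination}, produces a division-free circuit for $h$ of size $\poly(s,d,n)$.

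The main obstacle is bridging the GCD in $K[x_1]$ with the true GCD in $\F[\vecx]$, which can differ by a content polynomial in $\F[x_2,\ldots,x_n]$. I would handle this by first applying a generic linear shift $x_1 \mapsto x_1 + \sum_{i \geq 2} \alpha_i x_i$ with suitable $\alpha_i \in \F$ (passing to a small field extension if $\F$ is too small), which makes both $f$ and $g$ monic in $x_1$ after normalisation; Gauss's Lemma then identifies the univariate-over-$K$ GCD with the multivariate GCD up to a scalar. A secondary subtlety is that the $x_1$-degree of the GCD is unknown a priori, so one does not know which submatrix of the Sylvester matrix to invert. This is resolved structurally via the subresultant correspondence: each subresultant $S_k(f,g)$ admits a determinantal presentation and hence has a $\poly(s,d,n)$-size circuit, and $\gcd(f,g)$ coincides up to scalar with the smallest-index nonzero $S_k$. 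Since the lemma asserts only the existence of a small circuit, selecting the correct $k$ is non-constructive and therefore harmless.
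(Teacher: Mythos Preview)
The paper does not actually prove this lemma; it simply points to \cite[Lemma~2.9]{KSS2015}. Your subresultant-based argument is exactly the standard route taken there: extract the $x_1$-coefficients by interpolation, form the Sylvester matrix, use the determinantal subresultants $S_k$ (each of which has a $\poly(s,d,n)$-size circuit since $\Det\in\VP$), take the smallest-index nonzero one, normalise by its leading coefficient, and invoke \Cref{lem:division-elimination}. The non-constructive choice of $k$ is fine for an existential size bound, and your handling of the content via Gauss's Lemma after making both polynomials monic in $x_1$ is the right move.

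One concrete slip: the shift you wrote, $x_1 \mapsto x_1 + \sum_{i\ge 2}\alpha_i x_i$, does \emph{not} change $\deg_{x_1}$ and so cannot make $f$ and $g$ monic in $x_1$. The map you want is $x_i \mapsto x_i + \alpha_i x_1$ for $i\ge 2$ (equivalently, an invertible linear change sending the top homogeneous part $f_d$ to $f_d(1,\alpha_2,\ldots,\alpha_n)\,x_1^d + \cdots$), after which a generic $\alpha$ makes the leading $x_1$-coefficients of $f$ and $g$ nonzero scalars. With this corrected, your sketch is complete and coincides with the cited proof.
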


\section{Applications of factoring}
\label{sec:factor-appl}

Polynomial factoring is not only a great mathematical problem. The techniques developed for its solution have a wide range of applications in various areas of computer science. We briefly describe some of them.

\subsection{Equivalence to identity testing}
\label{sec:factor-identity}

Consider a class of polynomials $\calC$ that can be computed by algebraic circuits. The \emph{polynomial identity testing} (PIT) problem for $\calC$ asks whether a given polynomial $f \in \calC$ is identically zero. Due to the PIT lemma~\cite{ore1922hohere,DL1978,Zip1979,,Sch1980}, a simple randomised algorithm for this problem has been known for decades. However, designing a polynomial-time deterministic algorithm for PIT for the class $\VP$ remains a long-standing open problem. Nevertheless, several efficient deterministic algorithms have been developed for restricted circuit classes. The problem is studied in two settings: the \emph{black-box} setting, where only polynomial evaluation is permissible, and the \emph{white-box} setting, where the internals of the circuit are accessible. 

The importance of this fundamental problem stems from its applications to equivalence checking, perfect matching, primality testing, and more. The assumption of efficient identity testing algorithms has led to a variety of derandomisation results. Most notably, PIT is tightly connected to proving strong lower bounds for algebraic circuits (more in \Cref{sec:hardness-randomness}). For a comprehensive treatment, we refer the reader to classical surveys such as \cite{SY2010, Sax2009, Sax2014}, as well as the recent exposition in \cite{DG2024} and the references therein.

PIT was first linked to factorisation when Shpilka and Volkovich~\cite{SV2010} observed that the polynomial $f(\vecx) + yz$ has two irreducible factors over disjoint sets of variables if and only if $f(\vecx)$ is identically zero. This observation implies that a deterministic algorithm for multivariate polynomial factorisation would suffice to derandomise PIT.

The connection between the two problems was further solidified by the work of Kopparty, Saraf, and Shpilka~\cite{KSS2015}, who showed that a derandomised PIT algorithm also leads to a deterministic multivariate factoring algorithm. Together, these results establish the equivalence between derandomising PIT and polynomial factorisation in both black-box and white-box settings.

A natural and simpler question related to polynomial factoring is that of \emph{divisibility testing}: given two polynomials $f$ and $g$, determine whether $g$ divides $f$. One could factor both $f$ and $g$ followed by comparing the irreducible factors using PIT to solve this problem using randomisation. However, since non-trivial deterministic PIT algorithms are known in several restricted settings, it is natural to ask whether divisibility testing is easier in such settings.

This question was first studied by Saha, Saptharishi, and Saxena~\cite{SSS2013}, who reduced the problem of testing divisibility of sparse polynomials by a linear polynomial to PIT for expressions of the form $\sum_{i} \vecx^{\veca_i} \cdot f_i(\vecx)^{d_i}$, where each $\vecx^{\veca_i}$ is a monomial and $\deg f_i \leq 1$. Known PIT algorithms for such structured polynomials were applied to obtain efficient and deterministic divisibility testing algorithms (see~\cite{RS2005, FS2013}). Later, Forbes~\cite{For2015} extended this line of work to test divisibility of sparse polynomials by constant-degree polynomials. 
The problem in this case was reduced to PIT for almost similar polynomials of the form $\sum_{i} \vecx^{\veca_i} \cdot f_i(\vecx)^{d_i}$ with $\deg f_i \leq 2$, and gave a quasi-polynomial time algorithm for this case. For certain classes of polynomials, Forbes also showed a general reduction to PIT \cite[Section 7]{For2015}. For a more comprehensive discussion on these equivalences, we refer the reader to the survey by Shpilka and Forbes~\cite[Section 5]{FS2015}.

\subsection{Hardness vs Randomness} \label{sec:hardness-randomness}

In the previous section, we hinted at the tight connection between the derandomisation of PIT and strong lower bounds for algebraic circuits. Informally, this connection implies that PIT can be efficiently derandomised if and only if there exist explicit polynomials of high circuit complexity. Interestingly, factoring results on the algebraic circuit classes play a surprising yet pivotal role in establishing this connection.

Heintz and Schnorr~\cite{HS1980}, and subsequently Agrawal~\cite{Agr2005}, proved that efficient black-box identity testing implies strong circuit lower bounds. Specifically, they showed that a deterministic polynomial-time black-box PIT algorithm for circuits of size $s$ guarantees the existence of a polynomial with coefficients in $\mathsf{PSPACE}$ that requires algebraic circuits of size exponential in $s$.
This yields an exponential lower bound from a polynomial-time PIT algorithm. A central open question is whether the complexity of computing the coefficients can be reduced to $\sPbyPoly$, as this would imply a separation of $\VP$ from $\VNP$~\cite[Open Problem 17]{SY2010}. Kabanets and Impagliazzo~\cite{KI2004} further strengthened this direction of the connection by proving that derandomising PIT, even in the weaker white-box setting, would imply either $\VP \neq \VNP$ or $\mathsf{NEXP} \not\subset \P/\poly$.

Kabanets and Impagliazzo strengthened this connection by proving the reverse direction, drawing inspiration from Nisan and Wigderson's results in the Boolean setting~\cite{NW1994}. To describe this, we introduce an alternative notion of black-box identity testing. Consider a polynomial map $\calG \coloneqq (g_1, \dots, g_n): \F^r \to \F^n$ with seed length $r < n$. The map $\calG$ is a \emph{hitting-set generator} for a class of polynomials $\calC$ if, for every non-zero $f \in \calC$, the composed polynomial $f \circ \calG = f(g_1(\vecy), \dots, g_n(\vecy))$ is also non-zero. For a detailed discussion of the connections between hitting-set generators and PIT, see~\cite[Section 3.2.2]{For2014}.
The high-level idea of Kabanets and Impagliazzo is as follows: to test whether an $n$-variate polynomial $f \in \calC$ is identically zero, construct a hitting-set generator $\calG$ of seed length $r = O(1)$, where each $g_i$ has degree $\poly(n)$. Identity testing of $f$ then reduces to checking whether the constant-variate (degree-$r$) polynomial $f \circ \calG$ is identically zero---a problem efficiently solvable via the PIT lemma.

To illustrate the idea of a hitting-set generator in action, consider a generator with seed length $(n-1)$ defined as $\calG(\vecy) = (y_1, \dots, y_{n-1}, g(\vecy))$, where $g$ is a hard polynomial. This generator extends $r = n-1$ variables by one additional coordinate using a hard polynomial. Now, suppose $f(\vecy, x)$ is a non-zero polynomial in $\calC$. For the sake of contradiction, assume that $f \circ \calG = f(y_1, \dots, y_{n-1}, g(\vecy)) = 0$. This implies that $(x - g)$ divides $f$. 
As we will discuss later (\Cref{subsec:small-circuits}), if $f$ is computable by a small algebraic circuit, then so are all of its factors.
This phenomenon is known as \emph{closure under factoring}.
Therefore, $(x - g)$ must also be computable by a small circuit, contradicting our assumption that $g$ has large circuit complexity. Hence, $\calG$ is as a valid hitting-set generator for $\calC$.

Moving beyond the single-variable stretch, Kabanets and Impagliazzo~\cite{KI2004} adapted the Nisan-Wigderson design~\cite{NW1994} to the algebraic setting.
Their construction yields a generator $\calG_{\mathsf{KI}}$ that stretches a seed of length $r = \poly(\log n)$ to $n$ variables, using a polynomial $g$ that is assumed to be hard for arithmetic circuits.
Consequently, this generator leads to a black-box PIT algorithm.
Crucially, the correctness of this derandomisation relies fundamentally on the closure of algebraic circuits under factoring, which will be explored in detail in the upcoming parts of this survey.

\begin{theorem}[Combinatorial Hardness Implies PIT] \label{thm:hardness-pit-combinatorial}
    Suppose there exists a family $\{f_n\}_{n \in \mathbb{N}}$ of $n$-variate multilinear polynomials such that each $f_n$ requires algebraic circuits of size $2^{\Omega(n)}$. 
    Then, there exists a black-box PIT algorithm running in time $2^{\poly(\log n)}$ for the class of $n$-variate polynomials of degree $\poly(n)$ computable by circuits of size $\poly(n)$.
\end{theorem}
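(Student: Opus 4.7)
The strategy is to instantiate the Kabanets--Impagliazzo hitting-set generator by feeding the hypothesized hard family into each output coordinate of a Nisan--Wigderson design. Fix a parameter $m \coloneqq \lceil \log^2 n\rceil$ and set $g \coloneqq f_m$, so by assumption every circuit for $g$ has size at least $2^{\Omega(m)} = n^{\Omega(\log n)}$, which is super-polynomial in $n$. A standard combinatorial construction yields subsets $S_1,\ldots, S_n \subseteq [r]$ with $|S_i|=m$, $|S_i \cap S_j|\leq \log n$ for $i\neq j$, and seed length $r = O(m^2/\log n) = \poly(\log n)$. Define the generator $\calG\colon \F^r \to \F^n$ by $\calG(\vecy) \coloneqq \bigl(g(\vecy|_{S_1}),\ldots,g(\vecy|_{S_n})\bigr)$. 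To test a candidate $f$ of size and degree $\poly(n)$, the algorithm simply checks whether $f \circ \calG$ is identically zero; since $f \circ \calG$ is an $r$-variate polynomial of degree $\poly(n)$, the PIT lemma decides this by grid evaluation in time $\poly(n)^r = 2^{\poly(\log n)}$, matching the claimed running time.

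The substance is proving soundness: $f \not\equiv 0 \Rightarrow f \circ \calG \not\equiv 0$. I would argue via the standard hybrid argument. Define the hybrids $P_i(\vecy, z_{i+1},\ldots,z_n) \coloneqq f\bigl(g(\vecy|_{S_1}),\ldots, g(\vecy|_{S_i}), z_{i+1},\ldots,z_n\bigr)$; since $P_0 = f\neq 0$ while $P_n = f\circ \calG = 0$, some first index $i^*$ satisfies $P_{i^*-1}\neq 0$ and $P_{i^*} = 0$. Viewed as a polynomial in the single variable $z_{i^*}$, this transition forces $z_{i^*} - g(\vecy|_{S_{i^*}})$ to divide $P_{i^*-1}$. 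Next, fix the outside variables $\vecy|_{[r]\setminus S_{i^*}}$ to a tuple $\vec{\alpha}$ (over a small extension of $\F$ if necessary) chosen so that the restriction $Q$ of $P_{i^*-1}$ remains nonzero; such $\vec{\alpha}$ exists by the PIT lemma. The crucial design property $|S_j \cap S_{i^*}|\leq \log n$ ensures that each earlier coordinate $g(\vecy|_{S_j})$, after this restriction, depends on at most $\log n$ of the surviving $\vecy|_{S_{i^*}}$ variables; since $g = f_m$ is multilinear, each such restricted coordinate has at most $2^{\log n} = n$ monomials and hence a trivial brute-force circuit of size $O(n)$. Substituting these into the given $\poly(n)$-size circuit for $f$ produces a $\poly(n)$-size circuit for $Q$, and $z_{i^*} - g(\vecy|_{S_{i^*}})$ is still a factor of $Q$.

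The finish invokes factor closure of $\VP$ --- the central topic of \Cref{subsec:small-circuits} below --- to conclude that the linear factor $z_{i^*} - g(\vecy|_{S_{i^*}})$ of the $\poly(n)$-size circuit $Q$ itself admits a $\poly(n)$-size circuit; specializing $z_{i^*}\mapsto 0$ then yields a $\poly(n)$-size circuit for $g$ (relabelled to its original $m$ variables), contradicting the hypothesized $n^{\Omega(\log n)}$ lower bound. I expect this final appeal to factor closure to be the principal obstacle: the combinatorial design, the hybrid, and the $\log n$-variate brute-force interpolation are all elementary bookkeeping, but promoting \emph{``divides a small circuit''} to \emph{``is itself a small circuit''} is precisely the deep Kaltofen-style theorem (via Hensel lifting or Newton iteration) that the remainder of this survey is devoted to. A minor subsidiary point is ensuring the base field is large enough for the hybrid restriction $\vec{\alpha}$ to exist; as usual this is handled by passing to a polynomial-size extension without affecting circuit complexity.
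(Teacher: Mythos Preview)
Your proposal is correct and follows essentially the same approach the paper outlines in the discussion surrounding the theorem statement: the Nisan--Wigderson design with a hard polynomial in each coordinate, the hybrid argument to locate a pivot index, restriction of the non-pivot seed variables (using the small pairwise intersections and multilinearity to keep the earlier coordinates cheap), and finally the appeal to factor closure of $\VP$ to derive a small circuit for $g$ and reach a contradiction. The paper does not give a self-contained proof here but cites \cite{KI2004} and sketches exactly these ingredients; your write-up is a faithful and accurate expansion of that sketch.
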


The core argument in \cite{KI2004} proceeds by contrapositive: if a non-zero polynomial $f$ vanishes on the image of the generator (i.e., $f \circ \calG_{\mathsf{KI}} = 0$), then the hard polynomial $g$ used to construct $\calG_{\mathsf{KI}}$ must essentially be computable by a small circuit, contradicting the hardness assumption.

However, over fields of characteristic $p > 0$, this factorisation argument encounters a significant obstacle.
The original proof could only deduce that some power $g^{p^k}$ (where $p^k \leq \deg(f)$) has a small circuit, rather than $g$ itself.
This gap remained open until Andrews~\cite{And2020} showed that $p$-th roots of circuits can be computed efficiently, provided the number of variables is small.
Andrews' result thus allows for the efficient recovery of $g$ from $g^{p^k}$, extending the hardness-to-randomness tradeoffs to fields of low characteristic.

\Cref{thm:hardness-pit-combinatorial} was subsequently improved with better parameters by the authors of \cite{GKSS2022}. They showed that a constant-variate hard polynomial can imply a polynomial-time black-box PIT algorithm. A key contribution in their work is the use of an \emph{algebraic} generator, in contrast to the \emph{combinatorial design}-based generator used in~\cite{KI2004}. The combinatorial design inherently suffers from limitations---most notably, it cannot yield better than a quasi-polynomial time PIT algorithm.

\begin{theorem}[Non-combinatorial Hardness-to-PIT] \label{thm:hardness-pit}
    Let $k \coloneqq O(1)$. Suppose $\{ f_d \}_{d \in \N}$ is a family of $k$-variate polynomials of degree $d$, and each $f_d$ requires algebraic circuits of size at least $d^{0.1}$. Then, there exists a black-box PIT algorithm running in time $s^{O(k^2)}$ for the class of $s$-variate, degree-$s$ polynomials computable by algebraic circuits of size $s$.
\end{theorem}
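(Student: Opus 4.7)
The plan is to convert the hard constant-variate polynomial $f_d$ into a low-seed \emph{algebraic} hitting-set generator $\calG \colon \F^k \to \F^s$ for $s$-variate, degree-$s$, size-$s$ circuits, and then derandomize PIT by enumerating the seed. Once such $\calG$ exists, PIT of $C$ reduces to checking the non-vanishing of $C(\calG(\vecy)) \in \F[\vecy]$, a $k$-variate polynomial of degree $\poly(s,d)$; a grid enumeration of the appropriate size, combined with black-box evaluation of $C$, yields the claimed $s^{O(k^2)}$ bound once $d$ is a suitable power of $s$.

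The construction picks $s$ shifts $\veca_1, \dots, \veca_s \in \F^k$ in general position and sets
\[
    \calG(\vecy) \;\coloneqq\; \bigl(f_d(\vecy + \veca_1),\, f_d(\vecy + \veca_2),\, \dots,\, f_d(\vecy + \veca_s)\bigr),
\]
possibly augmented so that a few coordinates expose $y_1, \dots, y_k$ directly, ensuring that the map has generically full transcendence degree $k$. Unlike the Nisan--Wigderson-style combinatorial generator behind \Cref{thm:hardness-pit-combinatorial}, here the seed length is the constant $k$ rather than $\poly(\log n)$, so enumeration over the seed is genuinely polynomial in $s$.

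Correctness of $\calG$ is established by contradiction. Suppose a nonzero size-$s$ circuit $C$ satisfies $C(\calG) \equiv 0$. A hybrid argument successively replaces $s-(k+1)$ of the arguments of $C$ by generic field elements, stopping at the first step that leaves a nonzero residual $A$ in the remaining $k+1$ variables; then $A$ is an algebraic annihilator of $f_d(\vecy + \veca_{j_0}), \dots, f_d(\vecy + \veca_{j_k})$. By \Cref{lem:interpolation} and the factor-closure results for $\VP$ that are the subject of \Cref{subsec:small-circuits}, we may further pass to an $x_{j_0}$-irreducible factor of $A$ while keeping its circuit size polynomial in $s$. Viewing this factor as a univariate polynomial in $x_{j_0}$ over $\F(x_{j_1},\dots,x_{j_k})$, the identity
\[
    A\bigl(f_d(\vecy + \veca_{j_0}),\, f_d(\vecy + \veca_{j_1}),\, \dots,\, f_d(\vecy + \veca_{j_k})\bigr) = 0
\]
marks $f_d(\vecy + \veca_{j_0})$ as an algebraic root over the subfield generated by the other $k$ shifted copies of $f_d$. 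Applying univariate factoring inside the circuit model---realized via the Hensel lifting and Newton iteration machinery that is the central theme of this survey---solves for this root and yields a $\poly(s)$-size circuit for $f_d$ (after a harmless translation by $-\veca_{j_0}$), contradicting $\mathrm{size}(f_d) \ge d^{0.1}$ whenever $d$ is chosen as a sufficiently large power of $s$.

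The main obstacle is the final root-extraction step: solving the univariate equation for $f_d(\vecy + \veca_{j_0})$ in terms of the other shifted copies must be carried out \emph{inside} the algebraic circuit model with only polynomial blow-up, and this is where factor closure of small circuits is crucial---precisely the reason the theorem cannot be proved by the combinatorial design route of \cite{KI2004}. The parameters interlock through B\'ezout-type bounds: the total degree of $A$ can be as large as $d^{k}$, so to make the extracted circuit for $f_d$ smaller than the hardness threshold $d^{0.1}$ one takes $d = s^{\Theta(k)}$, which in turn makes $\deg(C \circ \calG) = s^{\Theta(k)}$ and pushes the grid-enumeration cost to $s^{\Theta(k) \cdot k} = s^{O(k^2)}$, matching the claim.
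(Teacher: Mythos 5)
The survey does not actually prove \Cref{thm:hardness-pit}; it only cites \cite{GKSS2022} and records that the key novelty there is an \emph{algebraic} generator. Your proposal matches that one-line description in spirit (constant seed, hard polynomial plugged in algebraically), but the construction is not the one from \cite{GKSS2022} and, more importantly, its correctness argument has a genuine gap at the root-extraction step. After your hybrid step you know that $f_d(\vecy+\veca_{j_0})$ is a root, in $x_{j_0}$, of $A\bigl(x_{j_0}, f_d(\vecy+\veca_{j_1}),\ldots,f_d(\vecy+\veca_{j_k})\bigr)$, and you want to invoke circuit-model root finding / factor closure (\Cref{thm:vp-closure} and its relatives) to conclude that $f_d$ has a small circuit. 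But those results require a small circuit for the polynomial being factored, and the only representations you have for the remaining $k$ shifted copies of $f_d$ are either circuits for $f_d$ itself --- which is exactly what you are trying to construct, so the argument is circular --- or dense representations of size $\binom{d+k}{k}=d^{\Theta(k)}$. In the latter case the extracted circuit for $f_d$ has size at least $d^{\Theta(k)}$ (and in any case at least $\poly(d)$, since the extracted root has degree $d$ and all known factor-size bounds are polynomial in the factor's degree), which never falls below the hardness threshold $d^{0.1}$; note also the existence of the annihilator $A$ is automatic, since any $k+1$ polynomials in $k$ variables are algebraically dependent, so the contradiction must come entirely from this quantitative step. Your proposed fix ``take $d=s^{\Theta(k)}$'' cannot rescue this: increasing $d$ makes the loss grow faster than $d^{0.1}$, so no choice of $d$ yields a contradiction. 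This is precisely the obstruction that forces \cite{KI2004} to use combinatorial designs (so the other copies live on nearly disjoint variables and can be written densely at affordable cost), and your shifted-copies generator, in which every output depends on all $k$ seed variables and is exactly as hard as $f_d$, has neither escape route.

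What \cite{GKSS2022} actually do is structurally different: the seed has $2k$ variables $(\vecz,\vecy)$ and the generator outputs are the homogeneous-in-$\vecy$ components of the single polynomial $f_d(\vecz+\vecy)$ (equivalently, Taylor coefficients of $f_d(\vecz+t\vecy)$ in $t$). The vanishing of a small circuit on this generator is unwound coordinate by coordinate, so that each root-finding step involves only previously reconstructed small circuits rather than a circuit for $f_d$ itself, with the losses controlled using approximative (border) versions of factor closure (cf. \Cref{thm:factor-approx}); at the end $f_d$ is recovered from its components by setting $\vecz=\mathbf{0}$. Keeping the cumulative loss below the $d^{0.1}$ hardness bound is exactly the delicate part of that proof, and it is the part missing from your argument; independent shifted copies of $f_d$ carry none of the recursive structure that makes the bookkeeping work, so as written the proposal does not establish the theorem.
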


The main limitation of the above results in extending them to restricted models such as formulas and constant-depth circuits was the lack of closure under factoring for these models. Only weaker statements were known earlier~\cite{DSY2009,CKS2019,And2020}. The recent resolution of these closure questions for restricted models by~\cite{BKRRSS2025} proved that the hardness-versus-randomness connection holds for these models as well.

\begin{theorem}[\protect{\cite[Theorem 6.1]{BKRRSS2025}}]
    Let $\Delta > 0$. Suppose $\{ f_d \}_{d \in \N}$ is a family of $n$-variate polynomials of degree $d$, and each $f_d$ requires depth-$\Delta$ algebraic circuits of size at least $2^{\Omega(n)}$. Then, there exists a black-box PIT algorithm running in time $n^{O(\log n)}$ for the class of circuits of depth $\Delta - O(1)$.
\end{theorem}

We conclude this section by highlighting an important application of the hardness versus randomness paradigm---namely, the phenomenon of \emph{bootstrapping} \cite{AGS2019, KST2023, GKSS2022}. This idea relies on recursively leveraging the connection between hardness and polynomial identity testing (PIT). Remarkably, it was proved in \cite[Theorem 1.6]{GKSS2022} that a complete derandomisation of PIT can be achieved from even a mildly non-trivial derandomisation of PIT.

\begin{theorem}[Bootstrapping] \label{thm:bootstrapping}
    Suppose there exists a black-box PIT algorithm that runs in time $\left(s^k - 1\right)$ for the class of $k$-variate polynomials of individual degree $s$, computable by algebraic circuits of size $s^{0.1}$. If $s$ is sufficiently large, then there exists a black-box PIT algorithm running in time $s^{O(k^2)}$ for $s$-variate polynomials computable by algebraic circuits of size $s$.
\end{theorem}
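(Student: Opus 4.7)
The plan is to bootstrap: use the hypothesised mild derandomization to manufacture an \emph{explicit hard polynomial}, and then feed this polynomial into \Cref{thm:hardness-pit} to get the desired strong black-box PIT algorithm. The whole argument is a single application of the PIT-to-hardness direction followed by the hardness-to-PIT direction.

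First, I would extract a hitting set from the hypothesised algorithm. A black-box PIT algorithm running in time $s^k-1$ on $k$-variate individual-degree-$s$ polynomials computed by circuits of size $s^{0.1}$ must, in particular, query the input on a set $H \subseteq \F^k$ of size at most $s^k - 1$ which hits every nonzero such polynomial. The space $V$ of $k$-variate polynomials of individual degree at most $s$ has dimension $(s+1)^k > s^k > |H|$, so the evaluation map $V \to \F^H$ has a nontrivial kernel. By solving this linear system (of size $s^{O(k)}$) we can explicitly write down a nonzero polynomial $f \in V$ that vanishes identically on $H$.

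Now I would argue that $f$ is hard. If $f$ were computable by a circuit of size at most $s^{0.1}$, then the hypothesised black-box PIT algorithm, using only the evaluations $\{f(\alpha) : \alpha \in H\}=\{0\}$, would wrongly declare $f$ to be zero. Hence $f$ requires circuits of size strictly greater than $s^{0.1}$. Setting $d := ks$, we have a $k$-variate polynomial of total degree $d$ requiring circuits of size at least $s^{0.1} = \Omega(d^{0.1})$ (absorbing the factor $k^{0.1}$ since $k = O(1)$). This is precisely the hardness hypothesis of \Cref{thm:hardness-pit}, so invoking that theorem yields a black-box PIT algorithm of running time $s^{O(k^2)}$ for the class of $s$-variate, degree-$s$, size-$s$ circuits, which is the desired conclusion. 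The generator underlying \Cref{thm:hardness-pit} uses $f$ essentially as an oracle; since constructing $f$ takes only $s^{O(k)}$ time and each evaluation takes $\poly(d) = \poly(s)$ time, the total running time remains $s^{O(k^2)}$.

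The main obstacle I anticipate is bookkeeping the exponents: the hypothesis uses the specific thresholds $s^{0.1}$ for circuit size and $s^k - 1$ for PIT time, and one needs to check that the hardness produced this way truly matches (up to the constants absorbed into $k=O(1)$) the hardness demanded by \Cref{thm:hardness-pit}. A secondary subtlety is ensuring that the hypothesised PIT algorithm indeed yields a hitting set, which is automatic in the black-box model: the algorithm's query pattern on the ``all-zero'' witness is itself the hitting set, so no additional uniformity assumption is needed. With these points settled, the two-line reduction goes through, and the statement is an immediate consequence of \Cref{thm:hardness-pit}.
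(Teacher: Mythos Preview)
The paper does not actually prove this theorem; it merely states it, cites \cite{AGS2019,KST2023,GKSS2022}, and remarks in one sentence that the idea is to ``recursively leverage the connection between hardness and PIT.'' So there is no detailed argument in the paper to compare against.

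Your approach is correct and is exactly the argument that underlies the cited \cite{GKSS2022}: take the annihilator of the small hitting set to produce a hard constant-variate polynomial, then invoke \Cref{thm:hardness-pit}. The one mild discrepancy is that the paper's one-line gloss says ``recursively,'' which is a vestige of the original \cite{AGS2019} formulation where only the weaker Kabanets--Impagliazzo generator (\Cref{thm:hardness-pit-combinatorial}) was available and one genuinely had to iterate the hardness--PIT loop. Once \Cref{thm:hardness-pit} is on the table, the single-shot reduction you give suffices, and this is how the result is actually established in \cite{GKSS2022}. Your caveats about parameter bookkeeping (the $k^{0.1}$ slack and the need for the hardness to hold for a \emph{family} $\{f_d\}$, which you obtain by running the annihilator construction for each $s$) are the right ones, and they are absorbed in the proofs of the cited references.
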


To prove this theorem, one first uses the assumed mildly non-trivial black-box PIT to derive an explicit hard polynomial, via the hardness-from-derandomisation results of Heintz-Schnorr~\cite{HS1980} and Agrawal~\cite{Agr2005}. One then feeds this hardness into the combinatorial hardness-to-PIT \Cref{thm:hardness-pit-combinatorial} to obtain the stronger PIT algorithm claimed.
A technical point is that this argument requires carefully matching parameters across several steps which we omit here.
For a more detailed exposition of this connection between factoring and its implications for the hardness versus randomness frontier, we refer the reader to the survey by Kumar and Saptharishi~\cite{KS2019}.

\subsection{Other applications}
\label{subsec:misc-appl}

The field of \emph{coding theory}~\cite{GRS2023} deals with developing \emph{error-correcting codes} --- ways of adding (minimal) redundancy to data such that even if parts of it get corrupted during transmission, one can recover the original information. Reed--Solomon codes are particularly ubiquitous and also `optimal', in a sense. They treat the original message as a univariate polynomial and the encoding is the evaluation of this polynomial at various points over some finite field. When the number of errors is too large, we cannot decode a corrupted message uniquely, but we can produce a small list of potential decodings (also known as \emph{list decoding}). The list decoding algorithm of Sudan~\cite{Sud1997} and the later improvement by Guruswami and Sudan~\cite{GS1999} crucially use polynomial factorisation. We point the reader to the survey of Forbes and Shpilka~\cite[Section 3.1]{FS2015} for more details.

The problem of learning algebraic circuits is called \emph{reconstruction}~\cite[Chapter 5]{SY2010}. We are given black-box access to a polynomial computed by a circuit $C$ from some nice family of circuits $\mathcal{C}$, and we need to `learn' an arithmetic circuit computing the same polynomial as $C$. Efficient polynomial factorisation plays an important role in many reconstruction algorithms~\cite{KS2009,Sin2016,Sin2022,SS2025}. Polynomial factorisation is also helpful in algebraic property testing~\cite{AS2003} and the construction of pseudorandom generators for low-degree polynomials~\cite{Bog2005,DGV2024}.  

In \emph{proof complexity}, a central problem is to prove that certain propositional tautologies need extremely lengthy proofs, even in very powerful proof systems. An important work of Cook and Reckhow~\cite{CR1979} showed that such proof complexity lower bounds, provided we are able to show them for every propositional proof system, would separate the complexity classes $\NP$ and $\coNP$, and in turn, also $\P$ from $\NP$. Closure of a class under factoring is another way of saying that multiples of hard polynomials from the class remain hard (more generally, one can study the complexity of ideals~\cite{Gro2020}). Using such hard multiples, Forbes, Shpilka, Tzameret, and Wigderson~\cite{FSTW2021} showed lower bounds against certain algebraic proof systems. For more about proof complexity, see~
\cite{Kra1995,Juk2012,Kra2019,AF2022,HLT24,BLRS2025,LuST26}.

Polynomial factorisation also has applications to various other problems in mathematics, such as derandomising Noether's Normalisation Lemma~\cite{Mul2017}, the primary decomposition of polynomial ideals~\cite{GTZ1988}, and isomorphism of algebras~\cite{KS2006a,IKRS2012}. 
In cryptography, polynomial factorisation is often used as a subroutine, for example, in index calculus algorithms and public-key encryption \cite[Chapter 3]{MvV1997}, and in factoring integers \cite{Bre2000}.
Polynomial factorisation algorithms, and the tools developed alongside them, have proven invaluable in the cryptanalysis of lattice-based schemes~\cite{NV2010} as well as post-quantum cryptosystems~\cite{DPS2020}.

\section {Factoring via Hensel lifting}
\label{sec:hensel-lifting}

Hensel lifting was first introduced by Kurt Hensel in a series of papers \cite{hensel1897, Hen1904, hensel1908theorie, Hen1918}, although an earlier form of it seems to have been known to Gau\ss ~\cite{Fre2007}. For an element $p$ in a ring $\calR$, Hensel lifting gives a method to compute factorisation modulo $p^\ell$ (for any $\ell > 0$) from the factorisation modulo $p$.
The term \emph{lifting} refers to the process of improving the \emph{precision} of the factor, by iteratively refining a coarse factorisation modulo $p$ to modulo higher powers $p^\ell$.

\begin{theorem}[Hensel lifting] \label{thm:hensel}
    Let $\ringR$ be a ring and $\calI$ be an ideal in $\ringR$. Consider elements $f,g, h \in \ringR$ such that $f \equiv g \cdot h \pmod{\calI}$ and there exist $u,v \in \ringR$ such that $u \cdot g + v \cdot h \equiv 1 \pmod{\calI}$. Then, we have:
    \begin{enumerate}
        \item{\bf Existence.} There exist $g', h' \in \ringR$ such that $f \equiv g' \cdot h' \pmod{\calI^2}$ and
        \begin{align*}
            g' \;\equiv\; g \pmod{\calI} \quad \text{and} \quad h' \;\equiv\; h \pmod{\calI}
        \end{align*}
        \item{\bf Pseudo-Coprimality.} For some $u' \equiv u \pmod{\calI}$ and $v' \equiv v \pmod{\calI}$, we have $u' \cdot g' + v' \cdot h' \equiv 1 \pmod{\calI^2}$.
        \item{\bf Uniqueness.} If any other $\tilde{g}, \tilde{h}$ satisfy the above conditions, then there must be a $\mu \in \calI$ such that $\tilde{g} \equiv g'(1+\mu) \pmod{\calI^2}$ and $\tilde{h} \equiv h'(1-\mu) \pmod{\calI^2}$.
    \end{enumerate}
\end{theorem}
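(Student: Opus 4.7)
The plan is to mimic a Newton-style lift: perturb the given data by elements of $\calI$ and use the pseudo-Bezout identity to solve the resulting linear equations modulo $\calI^2$. Throughout, the workhorse observation is that whenever two factors each sit in $\calI$, their product is negligible mod $\calI^2$, so almost all nonlinear correction terms disappear automatically.

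\textbf{Existence.} Set $e := f - gh \in \calI$. I would use the ansatz $g' = g + \Delta g$, $h' = h + \Delta h$ with $\Delta g, \Delta h \in \calI$. Expanding and dropping the cross term $\Delta g \, \Delta h \in \calI^2$, the condition $g'h' \equiv f \pmod{\calI^2}$ reduces to the linear equation $g \, \Delta h + h \, \Delta g \equiv e \pmod{\calI^2}$. Multiplying the hypothesis $ag + bh \equiv 1 \pmod{\calI}$ through by $e \in \calI$ causes its error term to land in $\calI \cdot \calI = \calI^2$, and gives exactly such a solution: $\Delta h := ae$ and $\Delta g := be$. Both corrections clearly lie in $\calI$, so $g' \equiv g$ and $h' \equiv h \pmod{\calI}$ as required.

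\textbf{Pseudo-coprimality.} Let $r := 1 - (ag' + bh')$. Since $g' \equiv g$ and $h' \equiv h$ modulo $\calI$, one has $r \in \calI$. I would then set $a' := a(1+r)$ and $b' := b(1+r)$, each still congruent to $a, b$ mod $\calI$, and compute
\[
a' g' + b' h' \;=\; (1+r)(ag' + bh') \;=\; (1+r)(1-r) \;=\; 1 - r^2 \;\equiv\; 1 \pmod{\calI^2},
\]
where the final congruence uses $r^2 \in \calI^2$.

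\textbf{Uniqueness and main difficulty.} Suppose $(\tilde g, \tilde h)$ is another valid lift, and set $u_g := \tilde g - g'$, $u_h := \tilde h - h'$, both in $\calI$. Subtracting $g'h' \equiv f \equiv \tilde g \tilde h \pmod{\calI^2}$ and killing $u_g u_h \in \calI^2$ leaves the single relation $g' u_h + h' u_g \equiv 0 \pmod{\calI^2}$. The target is a single $u \in \calI$ satisfying $u_g \equiv g' u$ and $u_h \equiv -h' u \pmod{\calI^2}$. I would guess $u := a' u_g - b' u_h$ and verify the two desired identities by multiplying the vanishing relation once by $a'$ and once by $b'$, then substituting $a' g' \equiv 1 - b' h'$ and $b' h' \equiv 1 - a' g' \pmod{\calI^2}$ to isolate $u_h$ and $u_g$ respectively. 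This is the most delicate part of the argument: producing a single element $u$ that simultaneously governs both corrections is only possible because pseudo-coprimality already holds mod $\calI^2$ and not merely mod $\calI$, which is why the first two parts of the theorem must be proved before uniqueness.
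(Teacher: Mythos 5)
Your proposal is correct and follows essentially the same route as the paper: your existence step recovers exactly the paper's explicit lifts $g' = g + e\cdot b$, $h' = h + e\cdot a$ with $e = f - gh$, and your pseudo-coprimality and uniqueness arguments are the standard ones the paper defers to its cited expository proof. (Only a small remark: uniqueness does not truly require coprimality modulo $\mathcal{I}^2$ — since $u_g, u_h \in \mathcal{I}$, the weaker identity $ag' + bh' \equiv 1 \pmod{\mathcal{I}}$ already suffices when multiplied against them — but using $a', b'$ as you do is perfectly valid.)
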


The polynomials $g'$ and $h'$ defined in the Hensel lifting theorem above are called the \emph{lifts} of $g$ and $h$, respectively. 
The theorem is not merely existential; it is constructive and provides an explicit procedure to compute the lifts. 
Let $e \coloneqq f - gh$. Then the lifts can be obtained as follows:
\begin{equation}
    g' \;\coloneqq \; g + e \cdot v \quad \text{and} \quad h' \; \coloneqq \; h + e \cdot u. \label{eq:hensel-lift}
\end{equation}

To apply Hensel lifting for polynomial factoring, we typically set the ring $\ringR = \F[x, y]$ and choose the ideal $\calI = \langle y^k \rangle$ for some $k \geq 1$. 
In the case of a multivariate polynomial, we transform it to a bivariate polynomial. We will discuss this in more detail in the upcoming sections.
When working over a ring of polynomials, a monic version of the theorem further guarantees the uniqueness of the monic lifts.

\begin{theorem}[Monic Hensel lifting] \label{thm:monic-hensel}
    Let $\F[x, y]$ be a ring of polynomials. Consider monic polynomials $f,g, h \in \F[x,y]$ such that $f \equiv g \cdot h \pmod{y}$ and there exist $u,v \in \F[x, y]$ such that $u \cdot g + v \cdot h \equiv 1 \pmod{y}$. Then, we have:
    \begin{enumerate}
        \item{\bf Existence.} There exist monic polynomials $g', h' \in \F[x,y]$ such that $f \equiv g' \cdot h' \pmod{y^2}$ and
        \begin{align*}
            g' \;\equiv\; g \pmod{y} \quad \text{and} \quad h' \;\equiv\; h \pmod{y}
        \end{align*}
        \item{\bf Pseudo-Coprimality.} For some $u' \equiv u \pmod{y}$ and $v' \equiv v \pmod{y}$, we have $u' \cdot g' + v' \cdot h' \equiv 1 \pmod{y^2}$.
        \item{\bf Uniqueness.} If any other monic $\tilde{g}, \tilde{h}$ satisfy the above conditions, then $\tilde{g} \equiv g' \pmod{y^2}$ and $\tilde{h} \equiv h' \pmod{y^2}$.
    \end{enumerate}
\end{theorem}

Once again let $e \coloneqq f - gh$, and define $\hat{g}, \hat{h}$ as in \Cref{eq:hensel-lift}. Compute the following expressions using the division with remainder algorithm:
$$ \frac{\hat{g} - g}{y} \; = \; q \cdot g + r .$$
Then the unique monic lifts can be computed as follows:
\begin{equation}
    g' \;\coloneqq\; g + y \cdot r \quad \text{and} \quad h' \;\coloneqq\; \hat{h} \cdot (1 + q \cdot y) \label{eq:monic-hensel-lift}
\end{equation}
An expository proof of Hensel lifting can be found in \cite[Lemma 3.4]{KSS2015}.

\subsection{Polynomials in the dense representation} \label{sec:poly-factoring-dense}

To demonstrate the use of Hensel lifting in factoring, consider the problem of factoring a bivariate polynomial $f \in \mathbb{F}_q[x,y]$ of degree at most $d$, where $q = p^a$. The input is given in the dense representation, as in \Cref{sec:univariate-factor}. In fact, the univariate factoring algorithms will be used as subroutines in the bivariate setting. 
For convenience, we view $f$ as an element of $(\mathbb{F}_q[x])[y]$, effectively shifting $x$ into the base ring and treating it as a constant.

\paragraph{\bf Resultants.} 
We saw in \Cref{sec:univariate-factor} that GCD is an important tool for univariate factoring algorithms.
A closely related polynomial called the \emph{resultant} plays an important role in several factoring algorithms.

\begin{definition}[Resultant] \label{def:resultant}
    Consider two $n$-variate polynomials $f, g \in \F[\vecy][x]$ as follows:
    $$f(\vecy, x) \;=\; \sum_{i = 0}^{d_1} f_i(\vecy) \cdot x^i \quad \text{and} \quad g(\vecy, x) \;=\; \sum_{i=0}^{d_2} g_i(\vecy) \cdot x^i.$$ Define the \emph{Sylvester matrix} of $f$ and $g$ as the following $(d_1 + d_2) \times (d_1 + d_2)$ matrix:
    \begin{align*}
        \mathbf{S}_x(f, g)\; =\;
        \begin{pmatrix}
            f_0 &   &   &  & g_0 &  &  &  &  \\
            f_1 & f_0 &   &  & g_1 & g_0 &  &  &  \\
            f_2 & f_1 & \ddots &  & \vdots & g_1 & \ddots &  &  \\
            \vdots & \vdots & \ddots &   & g_{d_2} & \vdots & \ddots & \ddots &  \\
            f_{d_1} & f_{d_1-1} &   & f_0 &   & g_{d_2} &   & \ddots & g_0 \\
             & f_{d_1} & \ddots & f_1 &  &  & \ddots &   & g_1 \\
             &  & \ddots & \vdots &  &  & \ddots & \ddots & \vdots \\
             &  &  & f_{d_1} &  &  &  & \ddots & g_{d_2}
        \end{pmatrix}.
    \end{align*}
    Then, the resultant of the two polynomials with respect to $x$ is defined as the determinant of the Sylvester matrix: $$\Res_x(f,g) \;\coloneqq \;\Det(\mathbf{S}_x(f, g)).$$
\end{definition}

\noindent The resultant is a polynomial of degree at most $d_1 + d_2$ in the coefficients of $f$ and $g$.
The following lemma captures one of several properties of the resultant that are useful in factoring algorithms. The second part of the lemma can be proved using the first part and Cramer's rule. See \cite[Section 6.3]{vG2013} for a detailed proof.

\begin{lemma}[Resultant and GCD] \label{lem:res-gcd}
    Let $f, g \in \F[x][y]$ be polynomials of positive degree in $x$. Then the following is true:
    \begin{enumerate}
        \item $\Res_x(f,g) = 0$ if and only if $f$ and $g$ share a common factor with positive degree in $x$.
        \item There exist polynomials $u, v$ such that $u \cdot f + v \cdot g = \Res_x(f,g)$.
    \end{enumerate}
\end{lemma}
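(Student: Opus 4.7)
My plan is to treat part~2 first via the adjugate of the Sylvester matrix and then leverage it (together with a kernel argument over the fraction field $\F(x)$) to settle part~1.

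\textbf{Part 2 (Bezout-type identity).} I would view the Sylvester matrix $S := \mathbf{S}_y(f,g)$ as encoding the $\F[x]$-linear map
$$\Phi : (a,b) \longmapsto a \cdot f + b \cdot g, \qquad \deg_y a < d_2,\ \deg_y b < d_1,$$
where inputs are represented by their coefficient vectors in $y$ and outputs by their coefficient vectors in the basis $\{1, y, \ldots, y^{d_1+d_2-1}\}$. Under this identification, the columns of $S$ are exactly the coefficient vectors of $y^j f$ (for $0 \le j < d_2$) and $y^i g$ (for $0 \le i < d_1$). The key identity $\operatorname{adj}(S) \cdot S = \det(S) \cdot I = \Res_y(f,g) \cdot I$ holds over any commutative ring; applying both sides to the standard basis vector $e_0 = (1,0,\ldots,0)^T$ and then reading the result back through $\Phi$ gives polynomials $a \in \F[x][y]$ of degree ${<}d_2$ and $b \in \F[x][y]$ of degree ${<}d_1$ with $a \cdot f + b \cdot g = \Res_y(f,g)$, as required.

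\textbf{Part 1, forward direction.} Suppose $f$ and $g$ share a common factor $h \in \F[x][y]$ with $\deg_y h \geq 1$. By part~2, $h$ divides $\Res_y(f,g)$; but $\Res_y(f,g) \in \F[x]$ has no factor of positive degree in $y$, forcing $\Res_y(f,g) = 0$.

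\textbf{Part 1, reverse direction.} If $\Res_y(f,g) = 0$, then $S$ is singular over $\F(x)$. Picking a nonzero element of its kernel and clearing denominators, I obtain $u, v \in \F[x][y]$, not both zero, with $\deg_y u < d_2$, $\deg_y v < d_1$, and $u f + v g = 0$. Now I work in the PID $\F(x)[y]$: if $\gcd_y(f,g) = 1$ there, then $f \mid vg$ forces $f \mid v$, which is incompatible with $\deg_y v < \deg_y f$ unless $v = 0$, and then $u f = 0$ forces $u = 0$, a contradiction. So $f$ and $g$ share a non-constant factor in $\F(x)[y]$, and Gauss's lemma lifts this to a factor in $\F[x][y]$ of positive degree in $y$.

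\textbf{Main obstacle.} The conceptually subtle step is part~2: the column ordering in $\mathbf{S}_y(f,g)$ from \Cref{def:resultant} must be matched carefully with the degree buckets for $a$ and $b$, and one must verify that the entries of $\operatorname{adj}(S) \cdot e_0$ indeed lie in $\F[x]$ (so that they legitimately serve as coefficients of polynomials in $\F[x][y]$), rather than in some larger ring. Everything else is either linear algebra, Euclidean-domain reasoning in $\F(x)[y]$, or a routine appeal to Gauss's lemma for the descent back to $\F[x][y]$.
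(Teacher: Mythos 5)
Correct, and this is essentially the same argument as in the source the paper cites for this lemma (the paper itself gives no proof, deferring to von zur Gathen--Gerhard, Section 6.3): the Sylvester matrix as the matrix of the $\F[x]$-linear map $(a,b)\mapsto af+bg$ on the spaces with $\deg_y a < d_2$, $\deg_y b < d_1$, the adjugate/Cramer identity for part~2, and the nontrivial-kernel argument over $\F(x)$ combined with coprimality in the PID $\F(x)[y]$ and Gauss's lemma for part~1. One cosmetic remark: to read off the Bezout identity you really want $S\,\operatorname{adj}(S)=\Res_y(f,g)\cdot I$ applied to $e_0$, i.e.\ $S\bigl(\operatorname{adj}(S)e_0\bigr)=\Res_y(f,g)\,e_0$ with $\operatorname{adj}(S)e_0$ supplying the coefficient vectors of $a$ and $b$, rather than $\operatorname{adj}(S)\,S$ acting on $e_0$; since the two products coincide over any commutative ring, this does not affect correctness.
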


\paragraph{\bf Bivariate Factoring.} 
The high-level idea of bivariate factoring is to first obtain factorisation of $f$ modulo $y$, and then lift them using Hensel lifting. But before we can do that, we need to ensure that $f$ satisfies certain properties, as described below.

\begin{enumerate}
    \item{\bf$f(x, 0)$ is square-free:} For $a \in \mathbb{F}_q$, the univariate polynomial $f(x, a)$ is square-free if and only if $$\gcd\bigg(f(x,a), \partial_x f(x, a)\bigg) \;\neq\; 1.$$ From \Cref{lem:res-gcd}, this is equivalent to $\Res_x(f, \partial_x f) \rvert_{y = a} \,\neq\, 0$.
    Since the degree of the resultant in $y$ is at most $2d^2$, it suffices to test $O(2d^2)$ values of $a \in \mathbb{F}_q$ (or a suitable extension) to find one for which $f(x, a)$ is square-free.
    A simple linear transformation will ensure that the resultant is non-zero at $a = 0$.
    Further, this would also imply that $f$ is square-free.
    \item \textbf{$f$ is monic in $x$}: Let $f_d(y)$ be the leading coefficient of $f$ of degree at most $d$. Then, $$\tilde{f} \;=\; f_d^{d-1} \cdot f\bigg(x / f_d, y\bigg)$$ is a monic polynomial in $x$.
\end{enumerate}

For simplicity, let $f$ denote the monic and square-free polynomial obtained from the transformations above.
Ensuring that $f(x, 0)$ is square-free is crucial, as it allows us to apply univariate factoring algorithms (\Cref{sec:univariate-factor}) to find $g_0, h_0 \in \F[x]$ such that
\[
    f \equiv g_0 \cdot h_0 \pmod{y},
\]
where $g_0$ is a monic irreducible factor in $x$.
We then apply Hensel lifting (\Cref{thm:monic-hensel}) for $t$ iterations to obtain $g_t$ and $h_t$ satisfying
\[
    f \;\equiv\; g_t \cdot h_t \pmod{y^{2^t}}.
\]
To understand the nature of $g_0$, consider the irreducible factorisation $f = f_1, \dots, f_r$.
When restricted to $y=0$, a bivariate factor $f_i$ may become reducible.
Consequently, $g_0$ corresponds to an irreducible factor of the restriction of some $f_i(x, 0)$.
Further, the lifting process continues until $2^t > 2d^2$, a bound chosen to exploit the properties of resultants, as we address in the proof of the following claim.

\begin{proposition} \label{prop:factor-existence}
    If the input polynomial $f$ is reducible, then there exists a non-trivial factor $f_1$ of $f$ such that $f_1 \equiv g_t \cdot \ell \pmod{y^{2^t}}$ for some polynomial $\ell$. Furthermore, $\deg_x(f_1) \le \deg_x(f)$ and $\deg_y(f_1) \le \deg_y(f)$.
\end{proposition}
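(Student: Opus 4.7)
The plan is to exploit the uniqueness in monic Hensel lifting (\Cref{thm:monic-hensel}) by producing a second monic lift of $f$ modulo $y^{2^k}$ that visibly begins with (a lift of) $g_0$ as a divisor of $f_1$, and then matching it with $g_k$ by uniqueness.

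First, I would reduce to the case that both factors are monic in $x$. Suppose $f$ is reducible, write $f = f_1 \cdot f_2$ with $f_1, f_2$ non-constant in $\F_q[x,y]$. Since $f$ is monic in $x$, the leading coefficients of $f_1, f_2$ (which lie in $\F_q[y]$) multiply to $1$, hence they are constants in $\F_q^{\times}$; by absorbing these constants, I may assume $f_1$ and $f_2$ are monic in $x$. The degree bounds $\deg_x(f_1) \le \deg_x(f)$ and $\deg_y(f_1) \le \deg_y(f)$ then follow immediately from $f_1 \mid f$.

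Next, reduce modulo $y$. Since $f(x,0) = f_1(x,0)\cdot f_2(x,0) = g_0 \cdot h_0$ and $g_0$ is monic irreducible, up to relabelling $g_0$ divides $f_1(x,0)$; write $f_1(x,0) = g_0 \cdot \ell_0$ with $\ell_0$ monic. Because $f(x,0)$ is square-free, the factors $g_0$ and $\ell_0$ are coprime modulo $y$ (they have no common irreducible factor). Hence the factorization $f_1 \equiv g_0 \cdot \ell_0 \pmod{y}$ satisfies the hypotheses of \Cref{thm:monic-hensel}, and iterating the lift $k$ times produces monic polynomials $G_k, L_k \in \F_q[x,y]$ with
\begin{equation*}
    f_1 \;\equiv\; G_k \cdot L_k \pmod{y^{2^k}}, \qquad G_k \equiv g_0,\ L_k \equiv \ell_0 \pmod{y}.
\end{equation*}

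Now compare two lifts of $f$ modulo $y^{2^k}$. On one hand, the bivariate factoring procedure produced $f \equiv g_k \cdot h_k \pmod{y^{2^k}}$. On the other hand, multiplying the previous congruence by $f_2$ yields
\begin{equation*}
    f \;=\; f_1 \cdot f_2 \;\equiv\; G_k \cdot (L_k \cdot f_2) \pmod{y^{2^k}}.
\end{equation*}
Both $G_k$ and $L_k \cdot f_2$ are monic in $x$, and modulo $y$ they reduce to $g_0$ and $\ell_0 \cdot f_2(x,0) = h_0$, respectively. Thus $\bigl(g_k, h_k\bigr)$ and $\bigl(G_k,\, L_k f_2\bigr)$ are two monic Hensel lifts of the same base factorization $g_0 \cdot h_0$ modulo $y^{2^k}$. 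The uniqueness clause of \Cref{thm:monic-hensel} (applied inductively through the $k$ doublings) then forces $G_k \equiv g_k \pmod{y^{2^k}}$. Setting $\ell := L_k$ gives $f_1 \equiv g_k \cdot \ell \pmod{y^{2^k}}$, as desired.

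The only potentially subtle step is verifying the coprimality hypothesis needed to invoke Hensel lifting on $f_1$; this is handled by the square-freeness of $f(x,0)$, which was arranged by the preprocessing before the proposition. The monic reduction for $f_1, f_2$ also requires a brief observation about the leading coefficients in $\F_q[y]$, but nothing deeper. Everything else is a direct application of the uniqueness in \Cref{thm:monic-hensel}.
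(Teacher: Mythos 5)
Your proof is correct and follows essentially the same route as the paper: pick a factor $f_1$ of $f$ whose reduction modulo $y$ is divisible by $g_0$, Hensel-lift the factorization $f_1 \equiv g_0 \cdot \ell_0 \pmod{y}$, multiply by the cofactor, and invoke uniqueness of monic lifts to identify the lifted first factor with $g_k$; your added care about making $f_1, f_2$ monic in $x$ and checking coprimality of $g_0,\ell_0$ via square-freeness of $f(x,0)$ is consistent with what the paper leaves implicit. The only divergence is that the paper's proof also includes a resultant-based argument showing $\gcd_x(f,f_1) \neq 1$, which really serves the correctness of the subsequent linear-system/GCD step of the algorithm and is not needed for the statement as written, since your $f_1$ is a non-trivial factor by construction.
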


\begin{proof}
    First let us prove that such a factor exists. Since $g_0$ is an irreducible factor of $f \pmod{y}$, we know that $g_0$ must divide some irreducible factor of $f \pmod{y}$, say $f_1$. Let $f = f_1 \cdot h$. Then, for some $\ell_0$, $$f_1 \equiv g_0 \cdot \ell_0 \pmod{y}.$$ By \Cref{thm:monic-hensel} we can lift $g_0$ as a factor of $f_1$ and get $f_1 \equiv g'_t \cdot \ell_t \pmod{y^{2^t}}$. Multiplying both sides by $h$ gives $$f \equiv g'_t \cdot h'_t \pmod{y^{2^t}}.$$ By the uniqueness property of \Cref{thm:monic-hensel}, it follows that $g'_t \equiv g_t \pmod{y^{2^t}}$, which implies the existence of $f_1 \equiv g_t \cdot \ell_t \pmod{y^{2^t}}$.
    
    To see that $f_1$ is non-trivial, let us assume for the sake of contradiction that $\gcd_x(f, f_1) = 1$. Then there exist $u, v$ such that $$u \cdot f + v \cdot f_1 = \Res_x(f, f_1).$$ 
   Substituting the lifted expressions, we get $$g_t \cdot (u \, h_t + v \, \ell_t) \equiv \Res_x(f, f_1) \pmod{y^{2^t}}.$$ 
   Since $g_t$ is monic, and $\Res_x(f, {f_1}) \in \F_q[y]$, it follows that $$u \, h_t + v \, \ell_t \equiv 0 \pmod{y^{2^t}}.$$ 
   Since $2^t$ is greater than the degree of the resultant $2d^2$, it follows that $\Res_x(f, {f_1}) = 0$. This yields a contradiction, and thus $\gcd_x(f, {f_1}) \neq 1$.
\end{proof}

The above proposition suggests that a candidate polynomial $\tilde{g}$ can be obtained by solving the following linear system:
\begin{equation}
    \tilde{g} \equiv g_t \cdot \ell \pmod{y^{2^t}}, \label{eq:hensel-linear-sys}
\end{equation}
where the degree bounds for $\tilde{g}$ are taken to be the same as those guaranteed by the previous claim.
Finally, we compute $\gcd(f, \tilde{g})$ to obtain the required non-trivial factor $g$ of $f$.

Finally, we recall that the algorithm factors a polynomial that results from a sequence of preprocessing steps. Hence, it is necessary to undo these steps—particularly the monic transformation—in order to recover a factor of the original input polynomial. By applying the Cantor--Zassenhaus algorithm for univariate factorisation and analysing the cost of lifting factors via Hensel lifting, it can be observed that the overall complexity of bivariate factoring is $\poly(d, \log q)$ operations over $\F_q$.

Naturally, the bivariate factoring algorithm extends to the multivariate setting. However, as the number of variables increases, the runtime of the algorithm degrades rapidly. Later, we will explore how to address this challenge while still reusing insights from the factoring algorithms discussed thus far.

\subsection{Polynomials over integers}\label{sec:factoring-int}

The finite field factoring algorithm can be extended to factoring polynomials over integers and thereby over rationals as well.
Although the details are outside the scope of this survey, we will briefly discuss the high-level idea of the algorithm.
Consider a degree $d$ input polynomial $f \in \Z[x]$ with coefficients of bit-length at most $\ell$. 
Therefore, the coefficients of $f$ are between $-2^{\ell}$ and $2^{\ell}$, and the goal is to obtain a non-trivial factor of $f$ in $\poly(d, \ell)$ time.

The algorithm for factoring polynomials over the integers closely follows the same template as the bivariate factoring algorithm. It begins by choosing a sufficiently large prime $p$ such that $f \bmod p$ is square-free. The prime $p$ plays a role analogous to the variable $y$ in the bivariate setting.
The polynomial $f$ is then factored modulo $p$ using a univariate factoring algorithm (see \Cref{sec:univariate-factor}). The resulting factors are lifted to a factor modulo $p^{2^t}$ via Hensel lifting. As in the bivariate case, it can be shown that $\log(n^3 \ell)$ iterations of the lifting step suffice.

The most challenging part of the algorithm lies in solving the linear system similar to \Cref{eq:hensel-linear-sys}, under the additional constraint that the coefficients of the factor $g$ are small. Although this is not immediately obvious, the problem reduces to finding a shortest vector in a lattice---a task that is known to be $\NP$-hard in general.

Nevertheless, the celebrated algorithm of Lenstra, Lenstra, and Lovász \cite{LLL1982} provides an efficient \emph{weak-approximation} algorithm for this problem, which turns out to be sufficient for the purposes of polynomial factoring. For a detailed exposition, see~\cite[Chapter~13]{Sap2017} and~\cite[Section 16.5]{vG2013}.

\subsection{Polynomials over \texorpdfstring{$p$}{p}-adic numbers}
\label{subsec:p-adics}

There is no {\em topology}, nor a geometric interpretation, inherent in a finite field. Classically, this motivated mathematicians to `extend' a finite field $\F_q$ ($q$ is a power of a prime $p$) to a characteristic {\em zero} field. The latter is called an {\em unramified $p$-adic field} construction. 

{\bf $p$-adics.} For simplicity, consider the object $\Z_p$, the ring of {\em $p$-adic integers}. Its elements are infinite series of the type $(a_0 + a_1\cdot p + a_2\cdot p^2 + \cdots)$ with the {\em digits} $a_i$'s in $[0\ldots p-1]$. The notion of {\em convergence} here requires an {\em ultra-}metric, defined via prime-powers, $|p^i|_u \coloneqq p^{-i}$. (Exercise: Check that it defines a {\em metric}, i.e. a map from $\Z_p$ to $\Q$.) Also, we sometimes call $i$ the {\em $p$-adic valuation of $p^i$}. This forces an {\em un}real comparison of real numbers: $1 > p > p^2 > \cdots > p^\infty = 0$ ! The convergence allows certain integers to become invertible or {\em units}, e.g.~$1/(1-p) = 1+p+p^2+\cdots \in \Z_p$. It is easy to show: (1) $\Z_p$ is an integral domain, so it has a field of fractions called {\em $p$-adic rationals} $\Q_p$, which has non-integer elements like $1/p$ and $1/(p^2+2p^3)$, and (2) The characteristic of $\Z_p$ and $\Q_p$ is $0$. 

Given a polynomial $f(x)\in \Q_p[x]$ as a binary string of size $n$, can we factor it, over $\Q_p$, in time $\poly(n)$? A randomised poly-time algorithm was provided by \cite{Chi1987, CG2000}. The two papers are written in very different styles. We will sketch the basic ideas common to them, by working through an example.

\medskip\noindent
Consider the $2$-adic polynomial $f(x) = (x-2)^2(x-4) + 32 \in \Q_2[x]$. Note that $f \equiv x^3 \pmod 2$ has no coprime factors over $\F_2$, so the algorithm of Section \ref{sec-CZ} cannot be applied to get factors, or check irreducibility, `lifted' to $\Z/4\Z$. This forces us to work modulo higher $2$-powers and to perform some new operations. Seeing broadly, there are two major algebraic themes in the algorithm, which we oversimplify for the sake of presentation as follows. 
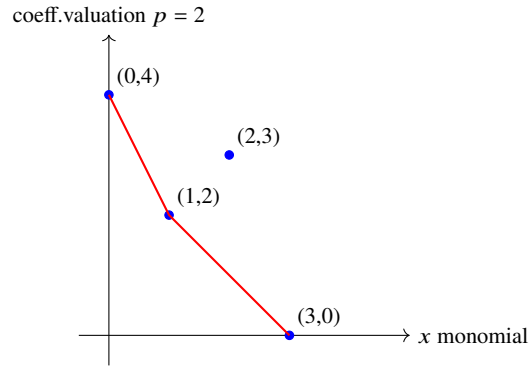
\begin{figure}
    \centering
    \begin{tikzpicture}[scale=0.8]
        \draw[->] (-0.5,0) -- (5,0) node[right] {$x$ monomial};
        \draw[->] (0,-0.5) -- (0,5) node[above] {coeff.valuation $p=2$};

        \foreach \x/\y in {0/4, 1/2, 2/3, 3/0} {
            \filldraw[blue] (\x,\y) circle (2pt);
            \node[above right] at (\x,\y) {(\x,\y)};
        }

        \draw[thick, red] (0,4) -- (1,2) -- (3,0);
    \end{tikzpicture}
    \caption{The Newton Diagram for $f(x)$.}
\end{figure}

{\bf Theme 1 (Newton-Hensel).} Consider the {\em Newton diagram} of the example polynomial above, $f = x^3 - 8x^2 + 20x + 16$, by plotting the monomial-exponents in the X-axis (namely, $\{0,1,2,3\}$) and the $2$-adic valuation of the integral-coefficients in the Y-axis (respectively, $\{4,2,3,0\}$). The lower-boundary of this diagram has two edges of slopes $1$ and $2$ respectively, suggesting that the $\overline{\Q}_2$-roots of $f(x)$ are (exactly) divisible by $2^1$ and $2^2$ respectively. This gives us the  `transformation' $f(2x)$ to study, and to find the first factor by a version of {\em Hensel lifting}. Thus, $f(2x)/8 = (x-2)(x-1)^2 + 4$ $=\, (x-2+2^2+2^5+\cdots)\cdot$ $(x^2 - (2+2^2+2^5+\cdots)x + 1-2^4)$. The key property we use here is the coprimality of the two factors $(x-2),(x-1)^2$ mod $2$; so, the algorithm of Section \ref{sec-CZ} can be applied over the finite field $\F_2$. 

An inverse transformation ($x\mapsto x/2$) gives us the two factors of $f$ as, $g_1\coloneqq (x-2^2+2^3+2^6+\cdots)$ and $ g_2\coloneqq (x^2 - (2^2+2^3+2^6+\cdots)x + (2^2-2^6+\cdots))$, respectively. Clearly, $g_1$ is an {\em irreducible} $2$-adic factor of $f$. What about $g_2$?

{\bf Theme 2 ($p$-adic ramification).} Again from the Newton diagram of $g_2$ we learn that it has two distinct $\overline{\Q}_2$-roots, both (exactly) divisible by $2^1$. So, we study the new transformation $g_2(2x)/4 = (x^2 - (2+2^2+2^5+\cdots)x + 1-2^4) =: T(x)$. Clearly, $T \equiv (x-1)^2 \pmod 4$ with no coprime factors $\bmod 2$. In this case both the tricks of Newton and Hensel fail. 

We have $T \equiv x^2-6x+1 \equiv (x-3)^2 - 8  \pmod 16$. From here we learn two properties: (1) $T$ is irreducible (yielding a {\em certificate} of irreducibility for $g_2$), as $\sqrt{8}$ does not exist in $\Q_2$; and (2) 
$T$ suggests a more intricate transformation to progress to a deeper root of $f$, namely, $(x-3)\mapsto (x-3)\sqrt{2}$ over the ({\em ramified}) field $\Q_2(\sqrt{2})$. 

In general, if a repeated number of these steps achieve a ramification degree equal to that of the degree of $f$, then we have a certificate of irreducibility of $f$ over $\Q_p$.

\smallskip\noindent
The above two algebraic themes give a randomised poly-time algorithm to factor $f(x)\in\Q_p[x]$. The time-complexity is based on analysing the (Galois) symmetries of both the ramified and unramified field extensions of $\Q_p$ that the algorithm constructs.



\begin{questype}{Open Problem}
    Given an integral polynomial $f(x)\in \Z[x]$ of degree $d$ and a prime-power $p^k$, can we factor $f\bmod p^k$, in randomised  $\poly(dk\log p)$ time?
\end{questype}

In this case, there is no unique factorisation property, and we cannot use division by $p$, as it is now a zero-divisor. See \cite{DMS2021, CDS2024} for a detailed survey.

\subsection{Small algebraic circuits} \label{subsec:small-circuits}

The factoring algorithms discussed in earlier sections assume that the input polynomial is given in the dense representation, that is, as an explicit list of its coefficients. However, algebraic circuits introduced in \Cref{sec:comp-model} provide a much more succinct way to represent polynomials. Consequently, the number of non-zero coefficients of a polynomial computed by a small circuit can be exponential in the circuit size, making the conversion to dense representation inefficient. Therefore, we require algorithms that can operate directly on the circuit model. We begin by formally describing the model of computation for the algorithm.

Let $f \in \F[\vecx]$ be an $n$-variate polynomial computed by an algebraic circuit $\mathcal{C}$ of size $s$. In the black-box model, access to $\mathcal{C}$ is restricted to evaluation queries only (see \Cref{fig:black_box_poly}). Given such a circuit $\mathcal{C}$ and the degree bound $d = \deg(f)$, the goal is to compute the irreducible factors of $f$.

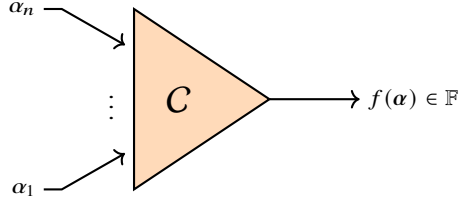
\begin{figure}
    \centering
    \begin{tikzpicture}[scale=0.6, every node/.style={font=\small}]
        \coordinate (A) at (-2,-2); 
        \coordinate (B) at (-2,2);  
        \coordinate (C) at (1,0);   

        \draw[fill=PeachPuff1, thick] (A) -- (B) -- (C) -- cycle;

        \node[font=\Large] at (-1,0) {$\calC$};

        \node[left=1.2cm of A] (alpha1) {$\alpha_1$};
        \node[left=1.2cm of B] (alphan) {$\alpha_n$};
        \draw[->, thick] (alpha1.east) -- ++(0.4,0) -- (-2.2,-1.2);
        \draw[->, thick] (alphan.east) -- ++(0.4,0) -- (-2.2,1.2);
        \node at (-2.5,0) {$\vdots$};

        \node[right=1.2cm of C] (output) {$f(\vec{\alpha}) \in \mathbb{F}$};
        \draw[->, thick] (C) -- (output.west);
    \end{tikzpicture}
    \caption{Black-box access to polynomial via circuit $\mathcal{C}$}
    \label{fig:black_box_poly}
\end{figure}

From now on, we will work over fields of characteristic zero unless otherwise specified. The results also hold if the characteristic is sufficiently large compared to the degree of the polynomial under consideration.

A special case of multivariate factorisation arises when the polynomial is a perfect power, i.e., $f = g^e$ where $g$ is an irreducible polynomial and $e \geq 1$.
For fields of large characteristic, the generalised binomial theorem gives a simple way to compute $g$ from the circuit computing $f$.

\begin{theorem}[Perfect Power Factorisation]
\label{thm:vp-closure-special}
    Let $\F$ be a field of characteristic zero and $f \in \F[\vecx]$ be an $n$-variate polynomial of degree $d$ computable by a circuit of size $s$.
    If $f = g^e$ for some irreducible polynomial $g$ and $e \geq 1$, then $g$ can be computed by a circuit of size $\poly(s, n, d)$.
\end{theorem}

\begin{proof-sketch}
    We begin by performing a linear transformation on $f$ to ensure that $f(0, \ldots, 0) = 1$.
    Then using the generalised binomial theorem, we have
    \begin{align}
        g \;=\; f^{1/e} \; &=\; \bigg(1 + (f - 1)\bigg)^{1/e} \nonumber \\
        & =\; \sum_{i = 0}^{d} \; \binom{1/e}{i} \cdot (f - 1)^i \bmod{\inangle{\vecx}^{d+1}}. \label{eq:vp-closure-special}
    \end{align}
    The last equality holds because $(f-1)^i$ contributes terms of degree strictly larger than $d$ for all $i > d$.
    Algebraic circuits can be efficiently added and multiplied, and further, the division required to compute the binomial coefficients in \Cref{eq:vp-closure-special} can be eliminated using \Cref{lem:division-elimination}.
    Overall, given only black-box access to $\calC$ computing $f$, we can obtain the circuit for $g$ of size $\poly(s, n, d)$.
\end{proof-sketch}

It is worth highlighting that the multiplicity $e$ can be exponential in the size $s$ of the circuit computing $f$ (e.g., constructed via repeated squaring).
Consequently, \Cref{thm:vp-closure-special} implies factor closure not just for special polynomials of $\VP$, but also for high-degree polynomials of the form $f = g^e$ computable by small circuits.
We will revisit this special case in \Cref{thm:vp-closure-special-NI}, where we discuss an alternative proof.
In the upcoming section, we will study the general factorisation problem, where the known results are not as strong as in this special case.

\paragraph{\bf Black-box multivariate factoring.}

Remarkably, Kaltofen and Trager~\cite{KT1990} showed that in the black-box setting, efficient circuit factorisation is achievable. To describe it, we need to state Hilbert's Irreducibility Theorem, which guarantees that irreducible factors of $f$ maintain their irreducibility profile when restricted to the random subspace. 
\begin{theorem}[Effective Hilbert Irreducibility Theorem]\label{thm:hilbert-irr}
    Let $S\subseteq \F$ be a large enough finite subset and $g \in \F[\vecx, y]$ be a monic polynomial of degree $d$ such that $\partial_y g$ is non-zero. If $|S|$ is at least $d^6$ and $g$ is irreducible, then for most of the $\overline{\alpha}, \overline{\beta} \in S^n$, the polynomial $g(\alpha_1 x + \beta_1, \dots, \alpha_n x + \beta_n, y)$ is irreducible.
\end{theorem}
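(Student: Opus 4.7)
The plan is a Schwartz--Zippel argument on the parameter space $(\overline{\alpha}, \overline{\beta}) \in S^{2n}$. I will produce a nonzero polynomial $D(\overline{\alpha}, \overline{\beta}) \in \F[\overline{\alpha}, \overline{\beta}]$ of degree $\poly(d)$ such that $D(\overline{\alpha}_0, \overline{\beta}_0) = 0$ whenever the specialized bivariate polynomial $g(y, \alpha_{0,1} x + \beta_{0,1}, \ldots, \alpha_{0,n} x + \beta_{0,n})$ is reducible. Since $|S| \geq d^6$ will be much larger than $\deg(D)$, a Schwartz--Zippel estimate then bounds the fraction of bad $(\overline{\alpha}, \overline{\beta})$ by $\deg(D)/|S| = o(1)$.

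The first step is \emph{generic irreducibility}: treating $\overline{\alpha}, \overline{\beta}$ as $2n$ fresh indeterminates, I claim
\[
G(\overline{\alpha}, \overline{\beta}, x, y) \;\coloneqq\; g\bigl(y,\; \alpha_1 x + \beta_1,\; \ldots,\; \alpha_n x + \beta_n\bigr)
\]
is irreducible in $\F[\overline{\alpha}, \overline{\beta}, x, y]$. The substitution $x_i \mapsto \alpha_i x + \beta_i$ embeds $\F(\overline{\alpha})(\vecx)$ into $\F(\overline{\alpha})(\overline{\beta}, x)$, and the target is a purely transcendental extension of the image by the single element $x$ (once $x$ is adjoined to the image, each $\beta_i = (\alpha_i x + \beta_i) - \alpha_i x$ lies in the image as well). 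Since $g$ is irreducible in $\F[\vecx, y]$, it remains irreducible in $\F(\overline{\alpha})(\vecx)[y]$ by Gauss's lemma, and irreducibility of a polynomial in $y$ over a coefficient field is preserved under purely transcendental field extensions of that field. Hence $G$ is irreducible in $\F(\overline{\alpha}, \overline{\beta}, x)[y]$, and since $G$ is monic in $y$, one last application of Gauss's lemma over the UFD $\F[\overline{\alpha}, \overline{\beta}, x]$ upgrades this to irreducibility of $G$ in $\F[\overline{\alpha}, \overline{\beta}, x, y]$.

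The second step extracts the witness $D$ by elimination. For each $k \in \{1, \ldots, d-1\}$, introduce formal monic templates $P_k(x, y) = y^k + \sum_{i<k, j \leq d} p_{ij}\, x^j y^i$ and $Q_{d-k}(x, y) = y^{d-k} + \sum_{i<d-k, j \leq d} q_{ij}\, x^j y^i$, where the $p_{ij}, q_{ij}$ are new indeterminates. Coefficient matching in $G = P_k \cdot Q_{d-k}$ gives a polynomial system over $\F[\overline{\alpha}, \overline{\beta}, p_{ij}, q_{ij}]$. Eliminating the auxiliaries via iterated Sylvester resultants (as in \Cref{def:resultant} and \Cref{lem:res-gcd}) yields $D_k(\overline{\alpha}, \overline{\beta}) \in \F[\overline{\alpha}, \overline{\beta}]$; by step~1 the system has no solution at the generic point of the $(\overline{\alpha}, \overline{\beta})$-space, so $D_k \not\equiv 0$. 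Set $D \coloneqq \prod_{k=1}^{d-1} D_k$. If a specialization of $G$ factors with a $y$-degree-$k$ piece, then the corresponding $D_k$ (and hence $D$) vanishes at $(\overline{\alpha}_0, \overline{\beta}_0)$, as required.

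The \textbf{main obstacle} is the quantitative control $\deg(D) = \poly(d)$: elimination via resultants amplifies degrees multiplicatively, and one must argue that the amplification stays polynomial rather than exponential in $d$. A careful Bezout-style bookkeeping -- each coefficient of $G$ has degree at most $d$ in $(\overline{\alpha}, \overline{\beta})$, each Sylvester resultant stage grows degrees by a factor polynomial in $d$, and there are at most $\poly(d)$ auxiliaries to eliminate -- gives the desired bound. A cleaner alternative avoids explicit elimination by passing to the splitting field of $G$ over $\F(\overline{\alpha}, \overline{\beta}, x)$: the elementary symmetric polynomials in $k$-subsets of the $d$ roots of $G$, together with the products of their Galois conjugates, directly yield polynomials in $(\overline{\alpha}, \overline{\beta})$ of degree $\poly(d)$ whose joint vanishing contains the reducibility locus, and the nonvanishing hypothesis $\partial_y g \neq 0$ is used here to guarantee separability of $G$ in $y$ (so the roots are distinct and the symmetric-function construction is well-defined).
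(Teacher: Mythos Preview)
The paper does not give its own proof of this theorem; it simply cites Kaltofen~\cite[Section~3]{Kal1995} and Sudan's lecture notes~\cite[Lecture~9]{Sud1998}. So your proposal has to be judged against those references rather than against anything in the paper itself. Your overall architecture---prove generic irreducibility of $G(\overline{\alpha},\overline{\beta},x,y)$, then trap the reducible specializations inside a low-degree hypersurface and apply Schwartz--Zippel---is exactly the right shape, and your Step~1 argument via Gauss's lemma and preservation of irreducibility under purely transcendental extensions is correct.

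The genuine gap is in Step~2, precisely where you flag the ``main obstacle.'' Your iterated-resultant elimination does \emph{not} give $\deg D = \poly(d)$. The coefficient-matching system $G = P_k Q_{d-k}$ has $\Theta(d^2)$ auxiliary unknowns $p_{ij}, q_{ij}$; even under your own accounting (``each stage grows degrees by a factor polynomial in $d$, and there are $\poly(d)$ auxiliaries to eliminate''), the final degree is $\poly(d)^{\poly(d)} = d^{\poly(d)}$, which is far too large for the $|S|\ge d^6$ hypothesis. Your ``cleaner alternative'' via the splitting field has the same defect: the product over Galois conjugates of a quantity indexed by $k$-subsets of roots runs over $\binom{d}{k}$ terms, so the resulting witness again has degree exponential in $d$ unless you do something much more careful than what you wrote.

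What Kaltofen's proof actually supplies---and what you are missing---is a \emph{direct} low-degree irreducibility criterion for a single bivariate polynomial, bypassing elimination entirely. One realization is an explicit ``Noether irreducibility form'' of degree $O(d^3)$ in the coefficients; another (Ruppert, later Gao) observes that absolute reducibility of a degree-$d$ bivariate $h(x,y)$ is equivalent to a certain \emph{linear} system in $O(d^2)$ unknowns having a nontrivial solution, so reducibility is detected by the vanishing of minors of an $O(d^2)\times O(d^2)$ matrix whose entries are linear in the coefficients of $h$. Composing with the degree-$d$ dependence of the coefficients of $G$ on $(\overline{\alpha},\overline{\beta})$ then gives $\deg D = \poly(d)$ honestly. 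You should replace your elimination paragraph with an appeal to one of these criteria.
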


\noindent The effective version of Hilbert's Irreducibility Theorem was first proved by Kaltofen~\cite[Section 3]{Kal1995}. For a clear and accessible exposition, see Sudan’s lecture notes~\cite[Lecture 9]{Sud1998}.

Consider the polynomial $f$ restricted to a suitable subspace so that the theorem above holds:
$$f_{\alpha, \beta} \;\coloneqq\; f\big(y, \alpha_1 x + \beta_1, \dots, \alpha_n x + \beta_n\big).$$ Let $\ell$ be the number of irreducible factors of $f_{\alpha, \beta}$ obtained using the bivariate factoring algorithm (see \Cref{sec:poly-factoring-dense}). Assuming $f$ is monic in $y$ and $\partial_y f$ is non-zero, one can prove using \Cref{thm:hilbert-irr} that $f$ has $\ell$ irreducible factors with high probability. Note that, since the bivariate factoring algorithm requires the input polynomial to be in dense representation, the coefficients of $f_{\alpha, \beta}$ are computed via interpolation (ref \Cref{lem:interpolation}).
 
Given a point $(a, b_1, \dots, b_n) \in \F^{n+1}$ and an index $i$, we want to compute the $i$-th factor $f_i(a, b_1, \dots, b_n)$. For this, define a trivariate polynomial which captures both the projection of $f$ on $\overline{\alpha}, \beta$ and the evaluation point $(a, b_1, \dots, b_n)$ as follows: 
$$\hat{f}(y, z_1, z_2) \;\coloneqq\; f\big(y, \alpha_1 z_1 + \beta_1 + (b_1 - \beta_1) z_2, \dots, \alpha_n z_1 + \beta_n + (b_n - \beta_n) z_2\big),$$ where each $x_i \mapsto \alpha_i z_1 + \beta_i + (b_i - \beta_i)z_2$. 
Note that $\hat{f}(a, 0, 1) = f(a, b_1, \dots, b_n)$ and $\hat{f}(y, x, 0) = f_{\alpha, \beta}(y,x)$. Using interpolation described in \Cref{lem:interpolation}, we can obtain the coefficients of $\hat{f}$. 
Kaltofen \cite{Kal1985} proved that factoring a constant-variate polynomial, in particular the trivariate polynomial $\hat{f}$, is efficiently doable using univariate and bivariate factorisations.
Let $\setdef{\hat{f}_i(y, x_1, x_2)}{ i \in [\ell']}$ be the set of irreducible factors of $\hat{f}$. Find an index $j$ such that $\hat{f}_j(y, x, 0) = \tilde{f_i}(y, x)$ and output $\hat{f}_j(a, 0, 1)$.

\paragraph{\bf Circuit factoring}

Black-box factoring is a strong notion of factorisation, wherein the algorithm is given oracle access to a polynomial and is required to construct oracle access to its factors---without relying on or even knowing the underlying representation or model of computation. Remarkably, this idea can be adapted to show that if a polynomial is computed by a small algebraic circuit, then all its irreducible factors can be computed by small circuits \cite{Kal1989}.
The complexity class $\VP$ contains all polynomial families that can be computed by `small' circuits. 

\begin{definition}[$\VP$]
    \label{def:vp}
    A polynomial family $f=(f_n)$ is in the class $\VP$ over the field $\F$ if both the number of variables and degree of $f_n$ are bounded by $\poly(n)$ and moreover, the size of the smallest circuit over $\F$ computing $f_n$, denoted $\size_{\F}(f_n)$ is bounded by $\poly(n)$. 
\end{definition}

Note that polynomials of degree $\exp(n)$ can be computed by circuits of size $\poly(n)$ by repeated squaring. Such families are not in $\VP$ due to the degree restriction in its definition. There are good reasons for imposing this restriction~\cite{Val1979,Gro2013}. As mentioned earlier, we will also discuss small circuits of high degree in \Cref{subsec:high-deg-ckts}. The main result of this section is the closure of $\VP$ under taking factors.

Over characteristic zero, we can without loss of generality assume that $f = g^e\cdot h$, where $g$ and $h$ are coprime and $e \geq 1$. The special case of $h=1$ can be handled as before using \Cref{eq:vp-closure-special}.

\begin{figure}
    \centering
    \begin{tikzpicture}[
    node distance=0.8cm and 1.2cm,
    every node/.style={font=\scriptsize},
    process/.style={
        rectangle, rounded corners,
        minimum width=2.2cm,
        minimum height=1.4cm,
        draw=black, fill=blue!10
    },
    startstop/.style={
        rectangle, rounded corners,
        minimum width=2.2cm,
        minimum height=1.4cm,
        draw=black, fill=gray!20
    },
    arrow/.style={->, thick},
    desc/.style={align=left, font=\tiny, text width=2.3cm}
    ]

    \node (input) [startstop] {Input: $f \in \F[\vecx,y]$};
    \node (preprocess) [process, right=of input] {Preprocessing};
    \node (univar) [process, below=of preprocess] {Univariate Factoring};
    \node (hensel) [process, left=of univar] {Hensel lifting};
    \node (linsys) [process, below=of hensel] {Solve Linear System};
    \node (gcd) [process, right=of linsys] {GCD Computation};
    \node (undo) [process, below=of gcd] {Undo Preprocessing};
    \node (output) [startstop, left=of undo] {Output: Factor $g$};

    \draw [arrow] (input) -- (preprocess);
    \draw [arrow] (preprocess) -- (univar);
    \draw [arrow] (univar) -- (hensel);
    \draw [arrow] (hensel) -- (linsys);
    \draw [arrow] (linsys) -- (gcd);
    \draw [arrow] (gcd) -- (undo);
    \draw [arrow] (undo) -- (output);

    \node [right=2.5mm of preprocess, desc] {Square-free reduction, monic transformation, and bivariate projection.};
    \node [right=2.5mm of univar, desc] {Factor $f(0,y)$ using univariate algorithms such as Berlekamp or Cantor--Zassenhaus.};
    \node [left=2.5mm of hensel, desc] {Apply Hensel lifting to iteratively lift factors modulo $y^{2^k}$.};
    \node [left=2.5mm of linsys, desc] {Solve a linear system to refine the lifted factor to match actual factor.};
    \node [right=2.5mm of gcd, desc] {Recover a non-trivial factor of $f$ using $\gcd$.};
    \node [right=2.5mm of undo, desc] {Undo preprocessing to return factor of original input polynomial.};

    \end{tikzpicture}
    \caption{Overview of Multivariate Factoring using Hensel lifting.}
    \label{fig:factoring-diagram}
\end{figure}
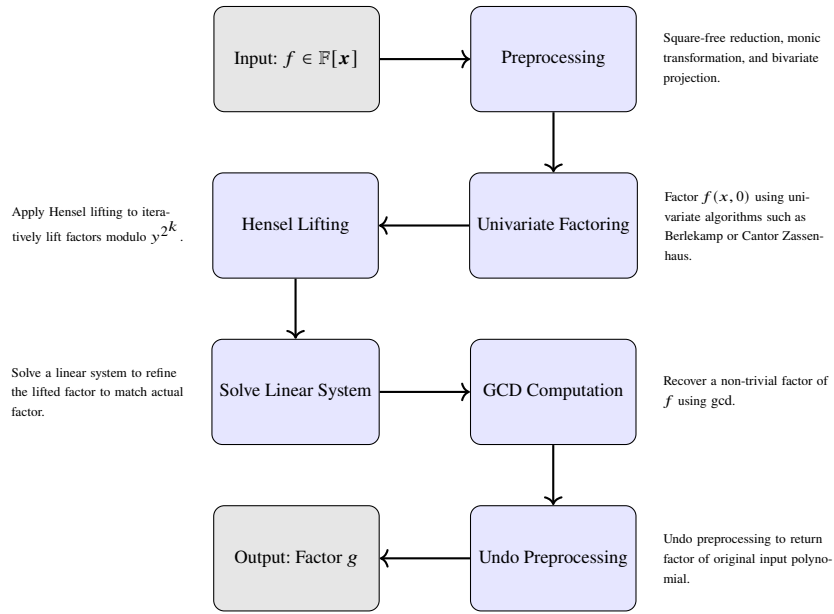 

\begin{theorem}[$\VP$ factor closure]\label{thm:vp-closure}
    Let $\F$ be a field of characteristic zero, and $f \in   \F[\vecx, y]$ be an $(n+1)$-variate, degree $d$ polynomial computable by a circuit of size $s$. If there are coprime factors $g$ and $h$ such that $f = g^e \cdot h$, where $e \geq 1$, then $g$ is also computable by a circuit of size $\poly(s, n, e)$.
\end{theorem}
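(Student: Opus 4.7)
The plan is to isolate $g^e$ from the coprime cofactor $h$ using Monic Hensel Lifting (\Cref{thm:monic-hensel}), and then extract $g$ from $g^e$ using the binomial $e$-th root formula of \Cref{eq:vp-closure-special}. This mirrors the dense bivariate argument leading up to \Cref{prop:factor-existence}, but carried out inside the circuit model.

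\textbf{Preprocessing.} First, apply a generic shift $\vecx \mapsto \vecx + \vec{\alpha}$ with $\vec{\alpha} \in \F^n$ chosen so that $\Res_y(g,h)(\vec{\alpha}) \neq 0$. Such $\vec{\alpha}$ exists by Schwartz--Zippel, since $\gcd(g,h)=1$ in $\F[\vecx, y]$ forces the resultant (a nonzero polynomial in $\vecx$) to not vanish identically, by \Cref{lem:res-gcd}. This ensures that $g(\vec{0}, y)$ and $h(\vec{0}, y)$ are coprime in $\F[y]$. Second, a standard linear transformation makes $f$ monic in $y$, increasing the circuit size by at most a polynomial factor.

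\textbf{Hensel lifting.} Working in $\F[\vecx, y]$ with the ideal $\calI = \langle \vecx \rangle$, initialize $G_0 := g(\vec{0}, y)^e$ and $H_0 := h(\vec{0}, y)$, both univariate in $y$ and hence of $\poly(d)$ circuit size. Their coprimality yields a base Bezout pair $a_0, b_0 \in \F[y]$. Iteratively invoke \Cref{thm:monic-hensel} with the explicit update formulas of \Cref{eq:monic-hensel-lift} to produce $G_k, H_k, a_k, b_k$ satisfying $G_k \equiv g^e$ and $H_k \equiv h$ modulo $\calI^{2^k}$. After $k^\star = \lceil \log_2(d+1) \rceil$ iterations, the uniqueness clause, combined with $\deg g^e, \deg h \leq d$, forces $G_{k^\star} = g^e$ and $H_{k^\star} = h$ exactly. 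Each step is a constant number of circuit additions and multiplications plus a truncation modulo $\calI^{2^{k+1}}$; the latter is implemented via \Cref{lem:interpolation} after introducing a scaling variable $t$ via $\vecx \mapsto t\vecx$ and extracting low-degree $t$-components. Any divisions arising from the update formulas are removed by \Cref{lem:division-elimination}. Since $k^\star = O(\log d)$, the resulting circuit for $g^e$ has size $\poly(s, n, e)$.

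\textbf{Root extraction and main obstacle.} With a circuit for $g^e$ in hand, normalize so its constant term is $1$ (shift and scale), then apply the truncated binomial series of \Cref{eq:vp-closure-special} to compute $g = (g^e)^{1/e} = \sum_{i=0}^d \binom{1/e}{i}(g^e - 1)^i$; the truncation modulo degree $d+1$ equals $g$ exactly since $\deg g \leq d$. Undoing the preprocessing produces the claimed circuit for $g$ of size $\poly(s, n, e)$. The delicate part throughout is controlling circuit-size growth across the $O(\log d)$ Hensel iterations: a naive implementation would double the circuit at every step and force expensive recomputation of the Bezout pair. The resolution is two-fold: (i) maintain $a_k, b_k$ through the explicit constructive formulas in \Cref{eq:monic-hensel-lift} so they grow only by a constant factor per step, and (ii) implement the truncation modulo $\calI^{2^{k+1}}$ with a single invocation of interpolation on a scaled circuit, rather than unfolding multiplications explicitly. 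Together these keep the per-iteration blow-up polynomial, yielding the final size bound. The extra factor of $e$ accounts for computing the generalized binomial coefficients $\binom{1/e}{i}$ in the root-extraction step.
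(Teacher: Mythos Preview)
Your overall strategy—lift the coprime factorization $f=g^{e}\cdot h$ directly via Hensel and then extract an $e$-th root with the truncated binomial series—is a legitimate alternative, but it is \emph{not} what the paper does. The paper first performs square-free reduction (repeatedly dividing $f$ by $\gcd(f,\partial_x f)$; the proof sketch explicitly says this is where the parameter $e$ enters the size bound), then applies a random shift and \Cref{thm:hilbert-irr} to reduce to a genuinely bivariate $f=gh\in\F[x,y]$, runs monic Hensel lifting there on polynomials with \emph{scalar} coefficients, solves the linear system of \Cref{eq:hensel-linear-sys}, and finally recovers the true factor via a single $\gcd$ with $f$ (\Cref{lem:gcd-ckt}). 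Your route skips both the bivariate projection and the linear-system/$\gcd$ recovery.

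There is, however, a real gap in your size accounting. First, the explicit monic update (\Cref{eq:monic-hensel-lift}) divides by the ideal generator, so it does not apply to the non-principal ideal $\calI=\langle\vecx\rangle$ you chose; you need the $t$-scaling $\vecx\mapsto t\vecx$ and $\calI=\langle t\rangle$ from the outset, not merely for truncation. Second, and more seriously, the monic update is \emph{not} ``a constant number of circuit additions and multiplications'': it requires the polynomial-with-remainder division $u=q\cdot G_{k}+r$ (with $G_{k}$ monic of $y$-degree $\Theta(d)$), which on circuits costs a multiplicative $\poly(d)$ per step. Compounded over your $O(\log d)$ iterations this gives a $d^{O(\log d)}$ blowup, not $\poly(s,n,d)$. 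The paper avoids this precisely because its Hensel lifting runs on bivariate polynomials with field coefficients (so the division is arithmetic, not a circuit operation), and circuits enter only once at the end via the $\gcd$. If you want to keep your multivariate route, switch to the non-monic update (\Cref{eq:hensel-lift}), which really is a bounded number of ring operations per step, and normalize by the leading $y$-coefficient once at the end, in the spirit of \cite{ST2021}. Finally, your attribution of the $e$-dependence to the binomial coefficients $\binom{1/e}{i}$ is incorrect: those are field constants and cost nothing in a non-uniform circuit.
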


\begin{proof-sketch}
    As in the univariate case, we first perform a square-free reduction to ensure $f$ has no repeated factors, by dividing $f$ by $\gcd(f, \partial_y f)$.
    The efficiency of this step follows from the structural results on algebraic circuits discussed in \Cref{subsec:struct-results}. 
    It is important to note that the multiplicity parameter $e$ appearing in the final size bound for the factor originates from this square-free reduction step.
    Next, as before we apply a linear transformation to make $f$ monic in $x$.
    To reduce the problem to bivariate factoring, we invoke the effective Hilbert Irreducibility Theorem (\Cref{thm:hilbert-irr}), effectively projecting the variables $\vecx, y$ onto a two-dimensional plane spanned by $x$ and $y$.
    This projection preserves the irreducibility profile of the factors.
    After this sequence of transformations, we work with the polynomial $f = g \cdot h \in \F[x, y]$, where $g$ and $h$ remain coprime and $f$ is monic in $y$.
    Furthermore, we can ensure that the restriction to $y=0$ preserves coprimality, meaning $g(x, 0)$ and $h(x, 0)$ are coprime.
    These conditions satisfy the prerequisites for Hensel lifting.
    Crucially, all these preprocessing steps are reversible and can be efficiently implemented using algebraic circuits.
    
    We iteratively apply Hensel lifting to compute polynomials $g_k$ and $h_k$ such that $f \equiv g_k \cdot h_k \pmod{y^{2^k}}$. The lifting proceeds until $2^k > 2d^2$ (refer to \Cref{prop:factor-existence}). Each iteration consists of basic algebraic operations outlined in \Cref{eq:monic-hensel-lift}, all of which can be performed efficiently by leveraging the structural results of algebraic circuits. 
    Since Hensel lifting yields only an \emph{approximate} factorisation modulo $y^{2^k}$, the actual factor $g$ of $f$ is recovered by solving the linear system $g' \equiv g_k \cdot \ell \pmod{y^{2^k}}$.
    
    By computing the $\gcd$ of $f$ with the solution $g'$ of the linear system, we recover a non-trivial factor $g$ of $f$. Finally, we reverse all preprocessing steps to obtain an algebraic circuit computing $g$.
\end{proof-sketch}

In \Cref{sec:lagrange-inversion}, we discuss a non-iterative proof of factor closure for $\VP$ based on a classical identity due to Lagrange. Extending this closure result to fields of small characteristic remains an open problem.

\begin{questype}{Open Problem}
\label{question:vp-factors-fq}
    Is the class $\VP$ closed under taking factors over fields of positive characteristic?
\end{questype}

As mentioned earlier, Kaltofen~\cite[Theorem 2]{Kal1987} proved a special case of factor closure for polynomials in $\VP$ of the form $g^e$ (i.e., $h = 1$ in \Cref{thm:vp-closure}) over fields of characteristic zero. Andrews~\cite{And2020} removed the dependence on characteristic in this result for the $\log$-variate regime, thereby making progress toward resolving \Cref{question:vp-factors-fq}. Later, we will discuss progress on this question in a more general setting, where factors of polynomials in $\VP$ over finite fields are shown to be \emph{explicit}.

There are other natural classes for which a factor closure result, similar to \Cref{thm:vp-closure}, does hold. Notably, these include classes that contain $\VP$, and also classes that are contained in $\VP$. We discuss these next.

\subsection{Algebraic branching programs} \label{subsec:abp}

Another polynomial computation model that is often studied is algebraic branching programs.

\begin{definition}[Algebraic Branching Program] \label{def:abp}
    An \emph{Algebraic Branching Program} (ABP) is a directed acyclic graph where the edges are labelled by linear polynomials over a field.
    The graph has a unique source node $s$ and a sink node $t$ (refer to \Cref{fig:abp}). 
    For every path $\pi$ from $s$ to $t$, let $f_{\pi}$ denote the product of the labels on the edges of $\pi$. The polynomial computed by the $\ABP$ is the sum of polynomials computed along all the different $s-t$ paths:

    $$f = \sum_{\pi: s \rightsquigarrow t} f_{\pi}.$$
    
    The \emph{size} of the ABP refers to the total number of vertices in the graph, while the \emph{length} of the ABP is the length of the longest path from $s$ to $t$.
\end{definition} 

\begin{figure}
    \centering
    \begin{tikzpicture}
    \node[circle,fill=LightSteelBlue1] (g14) at (10.0,-4) {$t$};
    
    \node[circle,fill=HotPink1] (g10) at (8.5,-2.5) {} edge[->] (g14);
    \node[circle,fill=HotPink1] (g11) at (8.5,-3.5) {} edge[->] (g14);
    \node[circle] (g12) at (8.5,-4.5) {$\vdots$};
    \node[circle,fill=HotPink1] (g13) at (8.5,-5.5) {} edge[->,thick] (g14);

    \node[circle] at (7.0,-2.5) {$\ldots$};
    \node[circle] at (7.0,-3.5) {$\ldots$};
    \node[circle] at (7.0,-5.5) {$\ldots$};

    \node[circle,fill=HotPink1] (g6) at (5.5,-2.5) {};
    \node[circle,fill=HotPink1] (g7) at (5.5,-3.5) {} edge[->,decorate,decoration=snake,thick] node[above=2mm] {$\pi$} (g13);
    \node[circle] (g8) at (5.5,-4.5) {$\vdots$};
    \node[circle,fill=HotPink1] (g9) at (5.5,-5.5) {};

    \node[circle,fill=HotPink1] (g2) at (4.0,-2.5) {} edge[->] (g6) edge[->,thick] (g7) edge[->] (g8);
    \node[circle,fill=HotPink1] (g3) at (4.0,-3.5) {} edge[->] (g6) edge[->] (g8);
    \node[circle] (g4) at (4.0,-4.5) {$\vdots$};
    \node[circle,fill=HotPink1] (g5) at (4.0,-5.5) {} edge[->] (g7) edge[->] (g9);
    
    \node[circle,fill=DarkSeaGreen1] (g1) at (2.5,-4) {$s$} edge[->,thick] (g2) edge[->] (g3) edge[->] (g5);

    \node[rectangle,scale=0.8] (f3) at (6.5, -2.0) {\emph{edge labels: linear polynomials.}};
    \node[rectangle,scale=0.8] (f2) at (6.5, -6) {$f_{\pi}=\text{prod.\ of edge labels in } \pi$};
    \node[rectangle,scale=0.8] (f1) at (6.5, -6.5) {$f =\sum_{\pi:s\rightsquigarrow t} f_{\pi}$};
    \end{tikzpicture}

    \caption{Algebraic Branching Program (ABP)}
    \label{fig:abp}
\end{figure}
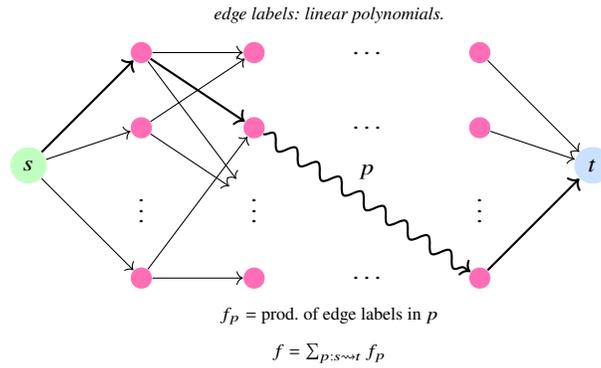

Although not immediately evident, ABPs are equivalent to restricted algebraic circuits known as \emph{skew circuits}. In a skew circuit, each multiplication gate is allowed to have at most one child that is not an input gate. For a proof of equivalence between ABPs and skew circuits, see~\cite{Mah2014}. Similar to general circuits, ABPs are closed under addition and multiplication with only an additive increase in size. Strassen's classical technique for division elimination, originally developed for algebraic circuits, can be adapted to ABPs as well (see \Cref{lem:division-elimination}). 
We state a few non-trivial structural results about ABPs that will be instrumental in the upcoming factor closure result.

\begin{lemma}[ABP Closure Properties] \label{lem:abp-str}
    Let $f \in \F[\vecx, y]$ be an $(n+1)$-variate polynomial of degree $d$ computed by an ABP of size $s$, such that $f = \sum_{i=1}^d f_i(\vecx) y^i$.
    Then:
    \begin{enumerate}
        \item{\bf Coefficient Extraction.} For any $i \in [d]$, $f_i$ can be computed by an ABP of size $\poly(s, n, d)$.
        \item{\bf GCD.} For any polynomial $g$ computable by an ABP of size $s$, $\gcd(f, g)$ can be computed by an ABP of size $\poly(s, n, d)$. 
        \item{\bf Composition.} Let $g_1, \dots, g_n$ be polynomials computable by ABPs of size $s$. Then the composition $f(g_1, \ldots, g_n)$ can be computed by an ABP of size $\poly(s,n)$.
    \end{enumerate}
\end{lemma}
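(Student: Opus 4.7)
The plan is to handle the three closure properties in order of increasing difficulty. For \textbf{coefficient extraction} (part 1), I would apply Lagrange interpolation in the variable $y$. Pick $d+1$ distinct scalars $\alpha_0, \dots, \alpha_d \in \F$ and substitute $y = \alpha_j$ into the given ABP, which simply specializes each edge label. This produces $d+1$ ABPs of size at most $s$, each computing $f(\vecx, \alpha_j) = \sum_i f_i(\vecx)\,\alpha_j^i$. Since the Vandermonde matrix is invertible, each $f_i$ is a fixed $\F$-linear combination of these evaluations, and ABPs are closed under $\F$-linear combinations with only an additive blowup. The resulting ABP has size $\poly(s, d, n)$.

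For \textbf{composition} (part 3), I would use an edge-replacement construction. Each edge of the ABP for $f$ carries a linear form $\ell(\vecx) = \sum_i a_i x_i + b$; after the substitution $x_i \mapsto g_i$ this becomes $\ell'(\vecy) = \sum_i a_i g_i + b$, an $\F$-affine combination of $g_1, \ldots, g_n$ computable by an ABP $B_e$ of size $O(ns)$ obtained by summing the ABPs for the $g_i$. Assemble a new ABP by replacing each edge $e = (u,v)$ of the outer ABP with the sub-ABP $B_e$, gluing its source to $u$ and its sink to $v$ (via trivial unit-labeled edges so the layered structure is preserved). By distributivity, the resulting source-to-sink polynomial equals $\sum_p \prod_{e \in p} \ell'_e = f(g_1, \ldots, g_n)$, and the total size is $\poly(s, n)$.

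The main obstacle is the \textbf{GCD} closure (part 2), because a naive Euclidean chain blows up ABP sizes multiplicatively across iterations. I would mirror the circuit-level argument of Kopparty--Saraf--Shpilka (\Cref{lem:gcd-ckt}): view $f, g \in \F[\vecx][y]$, extract their $y$-coefficients $f_i, g_i \in \F[\vecx]$ as ABPs via part (1), and assemble the Sylvester matrix of $f$ and $g$ over $\F[\vecx]$. Its subresultants---specific minors of this matrix---yield $\gcd(f, g)$ up to a factor in $\F[\vecx]$, and the determinant of a matrix whose entries are ABP-computable polynomials is itself ABP-computable with polynomial blowup (via the Mahajan--Vinay clow-sequence construction together with part (3) to plug in the ABPs for the entries). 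The remaining content and unit factors in $\F[\vecx]$ can be stripped using Strassen's division elimination (which extends to ABPs) together with recursive GCD computations in strictly fewer variables, with induction on $n$ closing the bound at $\poly(s, n, d)$.
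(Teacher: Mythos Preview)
The paper states this lemma without proof (it is introduced as a known structural fact, with the surrounding discussion pointing implicitly to Sinhababu--Thierauf), so there is no in-paper argument to compare against. Your constructions for parts (1) and (3) are correct and standard: Lagrange interpolation for coefficient extraction and edge-by-edge substitution for composition are exactly the expected arguments, and the size bounds you give are right.

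The gap is in part (2). Computing the relevant subresultant $S_k \in \F[\vecx][y]$ as a determinant via Mahajan--Vinay together with part (3) is fine and does yield an ABP of size $\poly(s,n,d)$. The trouble is your plan to strip the spurious factor $\alpha(\vecx)=\mathrm{cont}_y(S_k)$ by ``recursive GCD computations in strictly fewer variables, with induction on $n$ closing the bound.'' The $y$-coefficients of $S_k$ are $n$-variate polynomials of degree $O(d^2)$, not $d$, because $S_k$ is a $\Theta(d)\times\Theta(d)$ determinant in entries of degree at most $d$. Recursing once more produces degree $O(d^4)$, and after $n$ levels the degree parameter has grown to $d^{2^{\Theta(n)}}$; the ABP sizes track this. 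No fixed polynomial $P(s,n,d)$ can satisfy a recurrence of the shape $P\bigl(s\cdot\poly(n,d),\,n,\,O(d^2)\bigr)\le P(s,n{+}1,d)$, so the induction does not close at $\poly(s,n,d)$.

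The standard fix---used for circuits in the argument behind \Cref{lem:gcd-ckt} and carried over to ABPs---is to avoid recursion entirely. First apply a generic shift $x_i \mapsto x_i + \alpha_i y$ so that, after scaling by a constant, $f$ and $g$ become monic in $y$; this costs only a constant-factor size increase. Now the contents are trivial and $\gcd(f,g)$ is itself monic in $y$, so $S_k=\alpha(\vecx)\cdot\gcd(f,g)$ with $\alpha(\vecx)$ equal to the leading $y$-coefficient of $S_k$. Extract $\alpha$ via part (1), divide it out with the ABP version of Strassen's division elimination, and invert the shift. This keeps everything within $\poly(s,n,d)$ with no recursion on the number of variables.
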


Refer to \cite[Section 2]{ST2021} for detailed discussion on these closure properties. Similar to the class of small degree polynomials computable by small circuits \Cref{def:vp}, we can define the class of polynomials computable by small ABPs.

\begin{definition}[$\VBP$] \label{def:vbp}
    A polynomial family $f=(f_n)$ is said to be in the class $\VBP$ if the number of variables in $f_n$ is bounded by $\poly(n)$ and furthermore, $f_n$ can be computed by an algebraic branching program of size at most $\poly(n)$.
\end{definition}

A polynomial computed by a polynomial-size algebraic branching program necessarily has a polynomially bounded degree.
Therefore, it is not hard to show that $\VBP \subseteq \VP$. 
In the previous section, we discussed the closure of the class $\VP$ under factorisation. A natural question is whether a similar closure property holds for $\VBP$. Kaltofen and Koiran~\cite{KK2008} made progress in this direction by showing that if $f, g \in \VBP$ and $f = g \cdot h$, then $h$ also belongs to $\VBP$. Their approach relies on interpreting algebraic branching programs (ABPs) as a restricted class of circuits known as skew circuits discussed earlier. 
Using this division closure, Jansen~\cite{Jan2011} leveraged the connection between ABPs and the determinant to adapt the power iteration algorithm for computing eigenvalues, to prove factor closure for polynomials of the form $f = f_1 \cdot (y - f_2) \cdots (y - f_n)$, where the $f_i$ are pairwise distinct.
More recently, Sinhababu and Thierauf~\cite{ST2021} revisited the Hensel lifting method to completely solve the open problem by proving factor closure of $\VBP$.

It is important to observe that the multivariate factoring algorithm from \Cref{thm:vp-closure} does not directly extend to Algebraic Branching Programs (ABPs). The main challenge lies in the monic variant of Hensel lifting, which requires division with remainder at each step---a costly operation for ABPs when repeated frequently (see \Cref{eq:monic-hensel-lift}). Although composition is technically feasible in ABPs (\Cref{lem:abp-str}), the multiplicative blow-up renders repeated use impractical. This limitation also rules out other factoring strategies that rely on efficient composition, which we shall explore in later sections.

Nevertheless, Sinhababu and Thierauf \cite[Theorem 4.1]{ST2021} circumvented the obstacles by employing the classical version of Hensel lifting (\Cref{thm:hensel}), which avoids division altogether.

\begin{theorem}[VBP Closure] \label{thm:vbp-closure}
    Let $f \in \F[\vecx]$ be an $n$-variate polynomial of degree $d$ computed by an ABP of size $s$. Then, every factor $g$ of $f$ can be computed by an ABP of size $\poly(s, n, d)$.
\end{theorem}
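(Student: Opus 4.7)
The plan is to mirror the three-phase Hensel-lifting strategy used in \Cref{thm:vp-closure} (VP closure)---preprocessing, lifting, and factor recovery---but to replace the monic Hensel lift by the classical (non-monic) version of \Cref{thm:hensel}, which avoids polynomial division entirely. This is the decisive move: the monic update of \Cref{eq:monic-hensel-lift} uses a division step that, for ABPs, can only be implemented via composition (\Cref{lem:abp-str}) or division-elimination, and iterating it $\Theta(\log d)$ times would destroy the polynomial size bound. The classical update $g' = g + eb$ and $h' = h + ea$ with $e = f - gh$ from \Cref{eq:hensel-lift}, in contrast, uses only sums and products, which are native operations for ABPs.

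\textbf{Preprocessing.} Following the proof of \Cref{thm:vp-closure}, first reduce $f$ to its squarefree part via iterated division by $\gcd(f, \partial_{x_1} f)$; ABP closure under partial derivatives, GCDs, and exact division (\Cref{lem:abp-str}) keeps all intermediate objects inside the model, while the multiplicity $e$ absorbed here contributes only a polynomial factor to the final size bound. Next apply a random affine-linear transformation of the variables, singling out one coordinate $y$, so that by the effective Hilbert irreducibility of \Cref{thm:hilbert-irr} together with a resultant-based coprimality argument, the specialization $f(\vecx, 0) = g(\vecx, 0) \cdot h(\vecx, 0)$ has coprime factors. Such linear substitutions are essentially free for ABPs and are reversible at the end.

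\textbf{Classical Hensel lifting.} Produce a starting coprime pair $(g_0, h_0)$ with $f \equiv g_0 h_0 \pmod{y}$ together with Bezout coefficients $(a_0, b_0)$ satisfying $a_0 g_0 + b_0 h_0 \equiv 1 \pmod{y}$. Iterate the division-free Hensel update of \Cref{eq:hensel-lift} on both pairs simultaneously, truncating modulo $y^{2^{k+1}}$ after each step via ABP coefficient extraction (\Cref{lem:abp-str}). After $k = O(\log d)$ iterations one obtains $g_k, h_k, a_k, b_k$ with
\[
f \equiv g_k \cdot h_k \pmod{y^{2^k}}, \qquad a_k \cdot g_k + b_k \cdot h_k \equiv 1 \pmod{y^{2^k}},
\]
where $2^k > 2d^2$. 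Each iteration at most squares the sizes of the maintained ABPs, so the cumulative size after $O(\log d)$ rounds remains $s \cdot \poly(d)$.

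\textbf{Recovery and the principal obstacle.} By \Cref{prop:factor-existence}, a true factor satisfies $g \equiv g_k \cdot \ell \pmod{y^{2^k}}$ for some $\ell$, and together with the bound $\deg_y g \leq d$ this over-determines $g$. Solving the truncated divisibility system yields a candidate $g'$, from which a genuine non-trivial factor is extracted as $\gcd(f, g')$ using the ABP GCD closure (\Cref{lem:abp-str}), after which the affine preprocessing is undone. The main obstacle throughout is size accounting: one must verify that every Hensel update, every coefficient truncation modulo $y^{2^{k+1}}$, and the final GCD all respect a $\poly(s, n, d)$ bound. This succeeds only because the classical Hensel update removes the need for composition or division, so each primitive operation incurs only a polynomial multiplicative blow-up on ABPs.
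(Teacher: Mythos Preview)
Your plan is essentially the paper's (and Sinhababu--Thierauf's): replace the monic lift of \Cref{thm:monic-hensel} by the division-free classical lift of \Cref{thm:hensel}, iterate $O(\log d)$ times, then recover the true factor from a linear system. Two small deviations from the paper's sketch: the paper treats the pure-power case $f=g^e$ separately via the binomial identity~\eqref{eq:vp-closure-special} rather than folding it into the square-free step, and in the recovery phase it stresses that the linear system must be taken \emph{homogeneous}, so the solution $\tilde g$ is non-monic and one divides out its leading coefficient---your $\gcd(f,g')$ route is an acceptable alternative.

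One concrete slip in the size accounting: the sentence ``each iteration at most squares the sizes of the maintained ABPs, so the cumulative size after $O(\log d)$ rounds remains $s\cdot\poly(d)$'' is arithmetically inconsistent---squaring $\Theta(\log d)$ times would yield $s^{d}$, not $s\cdot\poly(d)$. The reason the bound actually holds is that ABP addition and multiplication are \emph{additive} in size (parallel union and concatenation), so a single classical Hensel update inflates the size only by a bounded multiplicative constant; after $O(\log d)$ rounds this is $\poly(d)\cdot s$. Relatedly, be careful with the per-step truncation modulo $y^{2^{k+1}}$ that you prescribe: done naively via coefficient extraction it contributes a $\poly(d)$ factor at every one of the $\log d$ rounds and would push the total to quasi-polynomial. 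Since the $y$-degrees of the untruncated lifts remain $\poly(d)$, a single truncation at the end suffices.
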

\begin{proof-sketch}
    The special case $f = g^e$ is handled via the generalised binomial expansion (analogous to \Cref{eq:vp-closure-special}), while leveraging the structural properties of ABPs discussed in \Cref{lem:abp-str}.
    For the general case $f=g^e \cdot h$ (where $e \geq 1$), we follow the same sequence of preprocessing steps as before: computing a derivative for square-free reduction, applying a random shift to ensure that $f$ is monic and that $g$ and $h$ remain coprime, and finally reducing the polynomial to a bivariate form.
    These steps prepare the polynomial for the application of Hensel lifting.
    As before, we begin the lifting process by applying a univariate factoring algorithm to $f(x, 0)$ (see \Cref{sec:univariate-factor}).
    However, instead of using the monic variant of Hensel lifting---as in the previous setting---which involves division at each step, we employ the classical version of Hensel lifting (see \Cref{thm:hensel}). 
    The process is repeated as before for $t$ iterations to obtain polynomials $g_t$ and $h_t$ such that $$f \equiv g_t \cdot h_t \pmod{y^{2^t}},$$ where $t = O(\log d)$ suffices. 
    Finally all that remains is to solve the linear system arising as before from $$\tilde{g} \equiv g_t \cdot h'_t \pmod{y^{2^t}}.$$ 
    Solving this linear system can be performed efficiently by an ABP, as the determinant of a matrix with polynomial entries is known to be in $\VBP$ (see \cite{MV1997}; for an exposition, see \cite[Section 3.3.3]{Sap2017}).
    Since a guaranteed solution requires the linear system to be homogeneous, the polynomial $\tilde{g}$ is not monic.
    Sinhababu and Thierauf observed that the monic irreducible factor $g$ can be recovered by dividing $\tilde{g}$ by its leading coefficient (see~\cite[Lemma 4.19]{ST2021}).
    It is then easy to observe that all the steps, including the preprocessing steps, can be performed efficiently on ABPs using the structural results discussed earlier.
\end{proof-sketch}

As for $\VP$, it is an open question to extend the above result to fields of small characteristic.

\begin{questype}{Open Problem} 
\label{question:vbp-factors-fq}
    Is the class $\VBP$ closed under taking factors over fields of positive characteristic?
\end{questype}

\subsection{Explicit polynomials}
\label{subsec:vnp-factoring}

A compelling class of polynomials was defined by Valiant~\cite{Val1979} as a non-deterministic analogue of $\VP$. This class is denoted by $\VNP$ and is defined as follows.

\begin{definition}[$\VNP$] \label{def:vnp}
    A family of polynomials $f=(f_n)$ is said to be in $\VNP$ over the field $\F$ if there exist functions $k,\ell,m:\N \to \N$ all polynomially bounded, and a polynomial family $g=(g_n) \in \VP$ with $g_n \in \F[x_1,\ldots,x_{k(n)},y_1,\ldots,y_{m(n)}]$ such that for all $n$,
    $$f_n(x_1,\ldots,x_{k(n)})=\sum_{\vecw \in \{0,1\}^{m(n)}} g_{\ell(n)}(x_1,\ldots,x_{k(n)},w_1,\ldots,w_{m(n)}).$$
\end{definition}

The polynomial family $g$ is often referred to as the family of \emph{verifier} polynomials, and the variables $y_1,\ldots,y_{m(n)}$ are called \emph{witness} variables. This is analogous to the Boolean class $\NP$ which contains functions $f(x_1,\ldots,x_n)$ that can be written as a logical OR of a verifier function $g(x_1,\ldots,x_n,y_1,\ldots,y_m)$, $m=\poly(n)$, over all Boolean assignments to the witness variables $y_1,\ldots,y_m$. 
The OR in $\NP$ is replaced by a summation in the definition of $\VNP$.
This structural shift aligns $\VNP$ naturally with $\#\P$, the complexity class that counts the number of accepting paths of a nondeterministic Turing machine, rather than merely asking if one exists.
This analogy is made rigorous by Valiant's seminal results, which established that permanent polynomial is complete for $\VNP$ \cite{Val1979}, and is also $\#\P$-complete \cite{ValPerm1979} in the Boolean setting.

Clearly, $\VP \subseteq \VNP$. Motivated by Kaltofen’s results on the factors of $\VP$, Bürgisser~\cite[Conjecture~2.1]{Bur2000} conjectured that $\VNP$ is closed under factorisation as well. Chou, Kumar, and Solomon~\cite[Theorem 2.9]{CKS2019} showed that \Burgisser's conjecture is in fact true when the field has characteristic $0$.
The key ideas were based on a related but different factoring tool that we are going to discuss in the upcoming section.
 
Further, Bhargav, Dwivedi, and Saxena~\cite[Theorem~1.6]{BDS2024} proved that the factor closure holds over finite fields as well, using Hensel lifting in combination with Valiant's criterion (and its converse) for \emph{low-degree} polynomials to be in $\VNP$ (cf.~\cite[Proposition~2.20]{Bur2000}). We denote by $\inangle{\gamma}$ the Boolean encoding of any mathematical object $\gamma$. 

\begin{proposition}[Valiant's Criterion] \label{prop:valiant-criterion}
    Consider a polynomial family $f=(f_n)$ with the number of variables and degree of $f_n$ bounded by $\poly(n)$. Suppose $f_n = \sum_{\vece} c_{\vece} \vecx^\vece$. If for every $n$, there exists a function $\phi_n \in \sPbyPoly$ such that given any exponent vector $\vece$, we have $\phi_n(\inangle{\vece}) = \inangle{c_{\vece}}$, then $f \in \VNP$.
\end{proposition}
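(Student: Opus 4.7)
The plan is to construct an explicit verifier family $g = (g_n) \in \VP$ such that $f_n(\vecx) = \sum_{\veca \in \{0,1\}^{m(n)}} g_n(\vecx, \veca)$, which is precisely the membership condition for $\VNP$ (\Cref{def:vnp}). I partition the witness string as $\veca = (\vecb, \vecw)$, where $\vecb$ is an $O(k(n)\log d)$-bit binary encoding of an exponent vector $\vece \in \{0,\ldots, d\}^{k(n)}$ (with $d = \poly(n)$), and $\vecw$ is a $\poly(n)$-bit nondeterministic string witnessing the $\sPbyPoly$ computation of $\phi_n$.

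The first ingredient is a monomial selector $M(\vecx, \vecb) \in \VP$ that evaluates to $\vecx^{\vece}$ whenever $\vecb$ is the binary encoding of $\vece$. Writing $e_i = \sum_j 2^j b_{ij}$, I set $M(\vecx, \vecb) := \prod_{i,j} \bigl((x_i^{2^j} - 1) \cdot b_{ij} + 1\bigr)$: the factor is $x_i^{2^j}$ when $b_{ij} = 1$ and $1$ when $b_{ij} = 0$, so the product collapses to $\prod_i x_i^{e_i}$. The powers $x_i^{2^j}$ come from $O(\log d)$ repeated squarings, giving $M$ polynomial size and total degree $O(k(n) d) = \poly(n)$.

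The second ingredient is an arithmetization of the $\sPbyPoly$ verifier. A parsimonious Cook--Levin reduction converts $\phi_n$ into a polynomial-size 3-CNF formula $\Phi_n(\vecb, \vecw)$ (with the advice hard-wired) whose count of satisfying $\vecw$ on input $\vecb$ equals the integer encoded by $\inangle{c_{\vece}}$. Arithmetizing clause by clause via $A_n(\vecb, \vecw) := \prod_i \bigl(1 - \prod_j (1 - \ell_{ij})\bigr)$, where each $\ell_{ij}$ is a literal (so one of $b_k$, $1-b_k$, $w_k$, $1-w_k$), yields a $\VP$ circuit of degree at most $3 |\Phi_n| = \poly(n)$ that outputs $0/1$ on Boolean inputs matching satisfaction of $\Phi_n$. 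Setting $g_n(\vecx, \veca) := M(\vecx, \vecb) \cdot A_n(\vecb, \vecw)$ keeps $g_n \in \VP$, and
\[\sum_{\veca} g_n(\vecx, \veca) \;=\; \sum_{\vecb} M(\vecx, \vecb) \sum_{\vecw} A_n(\vecb, \vecw) \;=\; \sum_{\vece} \vecx^{\vece} c_{\vece} \;=\; f_n(\vecx).\]

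The main obstacle I anticipate is translating between the integer-valued $\sPbyPoly$ count and the field element $c_{\vece} \in \F$. For $\F = \mathbb{F}_p$ of small characteristic the count already lives in $\F$, but for $\F = \mathbb{F}_{p^a}$ or $\Q$ one must decode the bit string $\inangle{c_{\vece}}$ into a field element: I would run several parallel $\sPbyPoly$ counters, one for each $\mathbb{F}_p$-coordinate (or bit) of $c_{\vece}$, and take a field-linear combination of the arithmetized verifiers against the corresponding basis elements $\gamma_k \in \F$, which multiplies the size only by a factor of $a = O(\log |\F|)$. A secondary technicality is that a random $\vecb \in \{0,1\}^{O(k(n) \log d)}$ may encode some $e_i > d$; I absorb this by AND-ing the verifier with a polynomial-sized Boolean indicator that $\vecb$ encodes a valid exponent vector, which at the arithmetized level contributes only polynomial overhead in size and degree.
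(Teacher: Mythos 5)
Your construction---a hypercube sum over pairs (binary encoding of an exponent vector, witness string), a product-form monomial selector built from repeated squarings, and a parsimonious Tseitin/Cook--Levin 3-CNF arithmetized clause-by-clause so that $\sum_{\vecw} A_n(\vecb,\vecw)$ equals the coefficient count---is exactly the standard proof of Valiant's criterion, which the paper omits and cites from B\"urgisser; it is correct, and the degree and size accounting all check out. The one soft spot is your decoding remark for $\F_{p^a}$ or $\Q$: extracting an individual bit (or $\F_p$-coordinate) of a $\sPbyPoly$ count is not itself known to be a $\sPbyPoly$ task, which is precisely why the paper notes that over finite fields one should read the hypothesis with the coefficient function in $\#_p \mathsf{P}/\mathsf{poly}$ (count modulo $p$); under the usual reading where the count equals the coefficient mapped into $\F$, your main identity already suffices without any bit extraction.
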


Over finite fields, a weaker variant of Valiant's criterion suffices, wherein the coefficient function $\phi$ lies in $\#_p \mathsf{P} / \mathsf{poly}$ (see~\cite[Section~4.3]{Bur2000}). We omit this subtlety when it is clear from the context. 

Finally, it is once again important to state that $\VNP$ is closed under several standard operations, such as addition, multiplication, and composition~\cite[Section 4]{Val1982}. However, the closure under composition is particularly subtle. A naive substitution of variables may not work, and a more refined approach is needed~\cite[Claim 8.4]{CKS2019}. In particular, an alternative characterisation of $\VNP$ proves useful---namely, that $\VNP$ polynomials can be expressed as hypercube sums of polynomial-size formulas (see \cite[Theorem 2]{MP2008} for a proof). We refer the reader to the full version of~\cite{BDS2024} for detailed proofs.

\begin{theorem}[Closure of $\VNP$] \label{thm:vnp-closure}
    Let $f(\vecx)$ be an $n$-variate polynomial of degree $d$ such that
    $$f(\vecx) = \sum_{\vecw \in \{0,1\}^m} q(\vecx,\vecw),$$
    where $q(\vecx,\vecy)$ can be computed by a circuit of size $s$. If $g(\vecx)$ is any factor of $f$ of degree $r \leq d$, then it can be written as
    $$g(\vecx) = \sum_{\vece \in \{0,1\}^{m'}} u(\vecx,\vece),$$
    where $m'$ and $\size(u)$ are both bounded by $\poly(n,s,r,d,m)$.
\end{theorem}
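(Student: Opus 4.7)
The plan is to mimic the Hensel lifting strategy of \Cref{thm:vp-closure} and \Cref{thm:vbp-closure}, but instead of working at the level of circuits or ABPs, track the complexity of the coefficients of the lifted polynomials as Boolean functions of the exponent vector. Combined with the degree bound $r$ on $g$, Valiant's criterion (\Cref{prop:valiant-criterion}) will then give $g \in \VNP$ once the coefficient function of $g$ is shown to be in $\sPbyPoly$ (or $\#_p \mathsf{P}/\mathsf{poly}$ in positive characteristic). The starting point is the converse to Valiant's criterion for low-degree polynomials (cf.~\cite[Proposition 2.20]{Bur2000}), which tells us that the coefficients of $f$ itself are already in $\sPbyPoly$.

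First I would handle the general case $f = g^e \cdot h$ by performing squarefree reduction via $g \cdot h = f / \gcd(f, \partial_x f)$, reducing to the coprime case $f = g \cdot h$; the sub-case $h = 1$ is dispatched directly by the binomial expansion in \Cref{eq:vp-closure-special}. Then apply a random linear shift so that $f$ becomes monic in a distinguished variable $y$ and so that $g(\vecx, 0)$ and $h(\vecx, 0)$ are coprime. Linear shifts preserve $\VNP$ membership, using the hypercube-sum-of-formulas characterization (cf.~\cite{MP2008,CKS2019}). These preprocessing steps also lift to the coefficient-function level without leaving $\sPbyPoly$.

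Next, apply the classical (non-monic) version of Hensel lifting from \Cref{thm:hensel} for $k = O(\log d)$ iterations to produce polynomials $g_k, h_k$ with $f \equiv g_k \cdot h_k \pmod{y^{2^k}}$. The initial data $g_0, h_0, a_0, b_0$ (factors modulo $y$ and their Bezout cofactors) can be obtained via Kaltofen--Trager multivariate factoring, derandomized by hard-coding the random choices as non-uniform advice; the resulting coefficients lie in $\sPbyPoly$. Crucially, each Hensel step only involves bounded polynomial additions and multiplications---operations under which $\sPbyPoly$ is closed---so the coefficients of $g_k, h_k, a_k, b_k$ remain in $\sPbyPoly$ throughout the $O(\log d)$ iterations. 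Once $2^k$ exceeds the relevant degree bound, solve the linear system $\tilde{g} \equiv g_k \cdot \ell \pmod{y^{2^k}}$ to extract the exact factor (normalizing by the leading coefficient as in~\cite{ST2021}); this linear algebra step is also polynomial-time and preserves $\sPbyPoly$-computability of the coefficients.

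The main obstacle will be ensuring that all steps---particularly the initial factorization, the undoing of the preprocessing shifts, and the final linear system solution---preserve $\sPbyPoly$-computability of coefficients with only polynomial blowup. A subtler point is that $\VNP$ is not a priori closed under general polynomial composition, so the preprocessing shifts must be handled via the alternative characterization of $\VNP$ as hypercube sums of small formulas (cf.~\cite[Claim 8.4]{CKS2019}). Once the coefficient function of $g$ is certified to be in $\sPbyPoly$, applying Valiant's criterion (\Cref{prop:valiant-criterion}) yields the desired decomposition $g(\vecx) = \sum_{\vece \in \{0,1\}^{m'}} u(\vecx, \vece)$ with $m'$ and $\size(u)$ bounded by $\poly(n, s, r, d, m)$.
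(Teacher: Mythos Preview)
Your proposal handles the characteristic-zero (or large-characteristic) coprime case reasonably, but there is a genuine gap in small positive characteristic that you do not address, and which is precisely the novel content of the paper's proof. When $\operatorname{char}(\F) = p$ and $p \mid e$, both of your reductions break: the squarefree reduction $f/\gcd(f,\partial_x f)$ fails because $\partial_x f$ may vanish identically (e.g.\ when $f = g^p$), and the binomial expansion in \Cref{eq:vp-closure-special} fails because $\binom{1/e}{i}$ is undefined when $p \mid e$. The paper isolates exactly this obstruction by writing $e = p^i \hat{e}$ with $p \nmid \hat{e}$: the $\hat{e}$-part is reduced to a coprime Hensel-lifting instance via the auxiliary polynomial $z^{\hat{e}} - f$, but the residual case $f = g^{p^i}$ requires a genuinely different argument. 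There the coefficients of $g$ and $f$ are related by the Frobenius map, $[\vecx^{\veca}]g = ([\vecx^{p^i \veca}]f)^{1/p^i}$, and the paper invokes the \emph{converse} of Valiant's criterion over finite fields to certify $[\vecx^{\cdot}]f \in \#_p\mathsf{P}/\mathsf{poly}$, then pushes through Frobenius and applies \Cref{prop:valiant-criterion} forward. Your proposal mentions the converse of Valiant's criterion only as a starting point for the coprime case, not as the mechanism that rescues the $p$-power case.

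Even where your plan does apply, it diverges from the paper's route for the coprime step. The paper does not track $\sPbyPoly$-computability of coefficients through each Hensel iterate; instead it observes (following \Cref{thm:vp-closure}) that Hensel lifting produces a small \emph{circuit} for the factor whose leaves are built from $f$, and then invokes the closure of $\VNP$ under composition to substitute the $\VNP$-representation of $f$ into that circuit. Your coefficient-tracking approach is plausible but requires verifying that $\#_p\mathsf{P}/\mathsf{poly}$ is preserved under $O(\log d)$ iterated polynomial multiplications and the final linear-system solve, which is more delicate than you indicate; the paper's composition-closure argument sidesteps this entirely.
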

\begin{proof-sketch}
    We sketch a unified proof covering both characteristic zero and finite fields of prime characteristic $p$.
    In doing so, we highlight the novel ideas from \cite{CKS2019}, which proved the result for characteristic zero, and \cite{BDS2024}, which extended it to the finite field setting.
    For the small characteristic case assume $\F = \F_q$, where $q = p^a$.

    We aim to show that if $f \in \VNP$ and $f = g^e \cdot h$ with $\gcd(g, h)=1$, then $g \in \VNP$. 
    Recall that in earlier proofs, the special case $h = 1$ was handled separately.
    For large fields, an identity of the form \Cref{eq:vp-closure-special} works in $\VNP$ as well.
    However, over finite fields, extra care is needed.

    Let $h=1$ and $e = p^i \cdot \hat{e}$, where $\gcd(\hat{e}, p) = 1$. We first extract the $\hat{e}$-th root. Define the following auxiliary polynomial with $g_1 \coloneqq g^{p^i}$:
    \begin{align*}
        \hat{F}(z) &\;\coloneqq\; z^{\hat{e}} - f \;=\; z^{\hat{e}} - (g^{p^i})^{\hat{e}} \\
        &= (z - g_1) \cdot (z^{\hat{e} - 1} + z^{\hat{e} - 2}g_1 + \cdots + g_1^{\hat{e} - 1}) \\
        &= (z-g_1) \cdot Q.
    \end{align*}
    Since $p \nmid \hat{e}$, it is easy to prove that $z - g_1$ is coprime to $Q$.
    We can therefore apply Hensel lifting to $\hat{F}(z)$.
    As observed in \cite[Lemma~5.9]{BDS2024}, the lifting process yields a small circuit $B$  and computes the factor $(z - g_1)$.
    Explicitly, this can be written as a composition:
    \[
        (z - g_1) \;=\; B\big(\hat{f}_0(\vecx), \dots, \hat{f}_d(\vecx)\big),
    \]
    where the inputs $\hat{f}_i(\vecx)$ are the coefficients of the polynomial obtained after applying the standard factoring preprocessing steps to $\hat{F}$.
    Crucially, each of these polynomials $\hat{f}_i(\vecx)$ remains in $\VNP$.
    Finally, using the closure of $\VNP$ under composition with polynomial-size circuits, we conclude that the factor $(z - g_1)$, and consequently $g_1$, belongs to $\VNP$.



    Therefore, the only remaining case is when $\hat{e} = 1$ and $f = g^{p^i}$.
    Note that this case requires separate attention only over small characteristic fields.
    Since the characteristic of the field is $p$, we can associate the coefficients of $f$ to those of $g$ via the standard Frobenius map.
    In particular, if the coefficient of $\vecx^{\veck}$ in $g$ is $c_g$, and the coefficient of $\vecx^{p^i \cdot \veck}$ in $f$ is $c_f$, then $c_g = (c_f)^{1/p^i}$.

    In~\cite{BDS2024}, the authors observed that the \emph{converse} of Valiant's criterion holds over finite fields.
    In particular, all the coefficients of $f$ can be computed by a $\sPbyPoly$ function.
    Together with the Frobenius correspondence between the coefficients, this yields a $\sPbyPoly$ function for computing the coefficients of the factor $g$.
    Then, using \Cref{prop:valiant-criterion}, we can conclude that $g$ can be written as a hypercube sum of small-sized circuits.
\end{proof-sketch}

\begin{remark}
    \Cref{thm:vnp-closure} also partially answers \Cref{question:vp-factors-fq} and \Cref{question:vbp-factors-fq}, proving that factors of polynomials in $\VP$ (and $\VBP$) are in $\VNP$ over finite fields. Robert Andrews communicated to us an (as yet) unpublished proof (based on the ideas of Malod and Portier~\cite{MP2008} and Andrews~\cite{And2020}) that extends the closure of $\VNP$ to all \emph{perfect fields}. A field $\F$ is perfect if either it is of characteristic zero, or, if the characteristic is a prime $p$, then all elements of $\F$ are $p$-th powers. All finite fields are perfect. Thus, it only remains to settle the closure of $\VNP$ over infinite non-perfect fields (e.g. the fraction field $\F_p(t)$ for some indeterminate $t$).
\end{remark}

\begin{questype}{Open Problem}
    Is the class $\VNP$ closed under taking factors over infinite non-perfect fields of positive characteristic?
\end{questype}


\section{Factoring via Newton iteration}
\label{sec:newton-iteration}

In 1669, Isaac Newton described a method to approximate roots of \emph{real-valued} functions. Also known as the Newton-Raphson method, the idea is to start with an initial approximate root $x_0$ of the equation $f(x)=0$ and successively produce better approximations using the rule 
$$x_{i+1}\; =\; x_i - \frac{f(x_i)}{f'(x_i)}\,,$$
provided $f'(x_i) \neq 0$. The value $x_{i+1}$ is the $x$-intercept of the tangent $y = f(x_i) + f'(x_i)(x-x_i)$ to the curve of $f(x)$ at the point $x_i$. The slope $f'(x_i)$ of the tangent keeps changing with the approximation (\Cref{fig:fast-ni}). Alternatively, this process can be viewed as finding a root of the Taylor approximation of $f(x)$ (up to linear terms) at the point $x_i$. It is intuitive that $x_{i+1}$ is closer than $x_i$ to a root of $f$.

\begin{figure}
    \centering
    \begin{tikzpicture}
    \def\func(#1){(#1)^3 - 1}
    \def\dfunc(#1){3*(#1)^2}
    \def\xZero{2.0}
    
    \pgfmathsetmacro{\yZero}{\func(\xZero)}
    \pgfmathsetmacro{\dZero}{\dfunc(\xZero)}
    \pgfmathsetmacro{\xOne}{\xZero - \yZero/\dZero}
    
    \pgfmathsetmacro{\yOne}{\func(\xOne)}
    \pgfmathsetmacro{\dOne}{\dfunc(\xOne)} 
    \pgfmathsetmacro{\xTwo}{\xOne - \yOne/\dOne}

    \begin{axis}[
        width=8cm, height=7cm,
        axis lines=middle,
        ymin=-2, ymax=9, xmin=0.5, xmax=2.5,
        xtick=\empty, ytick=\empty,
        xlabel={$x$}, ylabel={$f(x)$}
    ]
        \addplot[thick, blue, domain=0.5:2.3, samples=100] {\func(x)};
        
        \draw[dashed] (axis cs:\xZero,0) -- (axis cs:\xZero,\yZero);
        \draw[red, thick] (axis cs:\xZero,\yZero) -- (axis cs:\xOne,0);
        \node[below] at (axis cs:\xZero,0) {$x_0$};
        \node[below] at (axis cs:\xOne,0) {$x_1$};
        
        \draw[dashed] (axis cs:\xOne,0) -- (axis cs:\xOne,\yOne);
        \draw[red, thick] (axis cs:\xOne,\yOne) -- (axis cs:\xTwo,0);
        \node[below] at (axis cs:\xTwo,0) {$x_2$};
        
        \node[circle, fill=black, inner sep=1pt] at (axis cs:1,0) {};
    \end{axis}
\end{tikzpicture}
    \caption{Fast Newton iteration (dynamic slope)}
    \label{fig:fast-ni}
\end{figure}

Assuming the derivative $f'(x_0)$ at the initial guess is non-zero, a simpler rule
$$x_{i+1}\; =\; x_i - \frac{f(x_i)}{f'(x_0)}\,$$
also works. Unlike the previous case, the slope $f'(x_0)$ of the line $y=f(x_i) + f'(x_0)(x-x_i)$ (which is no longer the tangent) remains unchanged, and convergence to the root is slower (\Cref{fig:slow-ni}).

\begin{figure}
    \centering
    \begin{tikzpicture}
    \def\func(#1){(#1)^3 - 1}
    \def\dfunc(#1){3*(#1)^2}
    \def\xZero{2.0}
    
    \pgfmathsetmacro{\yZero}{\func(\xZero)}
    \pgfmathsetmacro{\dZero}{\dfunc(\xZero)}
    \pgfmathsetmacro{\xOne}{\xZero - \yZero/\dZero}
    
    \pgfmathsetmacro{\yOne}{\func(\xOne)}
    \pgfmathsetmacro{\xTwo}{\xOne - \yOne/\dZero}

    \begin{axis}[
        width=8cm, height=7cm,
        axis lines=middle,
        ymin=-2, ymax=9, xmin=0.5, xmax=2.5,
        xtick=\empty, ytick=\empty,
        xlabel={$x$}, ylabel={$f(x)$}
    ]
        \addplot[thick, blue, domain=0.5:2.3, samples=100] {\func(x)};
        
        \draw[dashed] (axis cs:\xZero,0) -- (axis cs:\xZero,\yZero);
        \draw[red, thick] (axis cs:\xZero,\yZero) -- (axis cs:\xOne,0);
        \node[below] at (axis cs:\xZero,0) {$x_0$};
        \node[below] at (axis cs:\xOne,0) {$x_1$};
        
        \draw[dashed] (axis cs:\xOne,0) -- (axis cs:\xOne,\yOne);
        \draw[red, thick] (axis cs:\xOne,\yOne) -- (axis cs:\xTwo,0);
        \node[below] at (axis cs:\xTwo,0) {$x_2$};
        
        \node[circle, fill=black, inner sep=1pt] at (axis cs:1,0) {};
    \end{axis}
\end{tikzpicture}
    \caption{Slow Newton iteration (fixed slope)}
    \label{fig:slow-ni}
\end{figure}

For simplicity of exposition, we will work over an algebraically closed field $\F$ of characteristic zero in this section, and remark about other fields when appropriate. Consider a multivariate polynomial $f(\vecx,y)$ over $\F$ whose root $g(\vecx)$ with respect to $y$ (i.e., $f(\vecx,g(\vecx))=0$) we want to find. Newton's method can be adapted to this case, provided the derivative is invertible and there is a meaningful notion of convergence. We view $f$ as a univariate polynomial $f \in \F(\vecx)[y]$ over the field of functions $\F(\vecx)$, and work over the formal power series ring $\F\llbracket \vecx \rrbracket$. The ring of formal power series $\F\llbracket \vecx\rrbracket$ can be viewed as the \emph{completion} of the polynomial ring $\F[\vecx]$ with respect to the maximal ideal $\langle \vecx \rangle$, which endows it with the $\langle \vecx \rangle$-adic topology. Similar to the notion of $p$-adic numbers (\Cref{subsec:p-adics}), two multivariate polynomials (or power series) $p(\vecx)$ and $q(\vecx)$ are considered \emph{close} up to degree $t$ if they only differ in terms of degree greater than $t$, i.e., $p(\vecx) = q(\vecx) \bmod \inangle{\vecx}^{t+1}$. We first state the slower version of Newton's method that is useful in factoring applications.

\begin{lemma}[\protect{\cite[Lemma 5.1]{CKS2019}}]
\label{lem:slow-ni}
    Let $f(\vecx,y) \in \F(\vecx)[y]$ be a polynomial over $\F(\vecx)$ and $\mu \in \F$ be such that $f(\mathbf{0},\mu)=0$ but $\delta \coloneqq \partial_y f(\mathbf{0},\mu) \neq 0 $. 
    
    Then, there is a \emph{unique} power series $\varphi \in \F\llbracket\vecx\rrbracket$ with $\varphi(\mathbf{0})=\mu$ such that $f(\vecx,\varphi)=0$. Set $\varphi_0=\varphi(\mathbf{0})$, and for all $t \geq 0$, define
    $$\varphi_{t+1}\; \coloneqq \; \varphi_t - \frac{f(\vecx,\varphi_t)}{\delta}.$$ 
    
    The rate of convergence of the sequence of \emph{polynomials} $\varphi_t \in \F[\vecx]$ to the root $\varphi$ is \emph{linear}:
    $$\varphi = \varphi_t \bmod \inangle{\vecx}^{t+1} \text{ for all } t \geq 0.$$
\end{lemma}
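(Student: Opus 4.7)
The plan is to establish both existence/uniqueness of $\varphi$ and the linear convergence bound via induction on $t$, with the key tool being a Taylor expansion of $f$ in the $y$-variable around $\varphi_t$. Existence and uniqueness of $\varphi$ can be extracted from the classical Hensel's lemma (\Cref{thm:hensel}) applied to $f \in \F[[\vecx]][y]$: modulo $\inangle{\vecx}$ we have $f(\mathbf{0},y) = (y-\mu)\cdot r(y)$ where $r(\mu) = \partial_y f(\mathbf{0},\mu) = \delta \neq 0$, so $(y-\mu)$ and $r$ are coprime and lift uniquely to a factor of the form $(y - \varphi(\vecx))$ with $\varphi \in \F[[\vecx]]$ and $\varphi(\mathbf{0}) = \mu$. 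Alternatively, the very sequence $\{\varphi_t\}$ constructed by the iteration is $\inangle{\vecx}$-adically Cauchy (by the convergence bound below), and its limit furnishes $\varphi$; uniqueness follows because any two candidate solutions would differ by an element whose ``first-nonzero'' homogeneous part must be annihilated by the unit $\delta$.

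For the convergence claim I would induct on $t$. The base case $\varphi \equiv \varphi_0 = \mu \pmod{\inangle{\vecx}}$ is immediate. For the inductive step, assume $\varphi \equiv \varphi_t \pmod{\inangle{\vecx}^{t+1}}$ and write $\varphi = \varphi_t + \epsilon$ with $\epsilon \in \inangle{\vecx}^{t+1}$. Taylor-expanding $f$ in the $y$-slot about $\varphi_t$ yields
$$0 \;=\; f(\vecx,\varphi) \;=\; f(\vecx,\varphi_t) \;+\; \partial_y f(\vecx,\varphi_t)\cdot \epsilon \;+\; \sum_{j \geq 2}\frac{\partial_y^j f(\vecx,\varphi_t)}{j!}\,\epsilon^j.$$
Every $\epsilon^j$ with $j \geq 2$ lies in $\inangle{\vecx}^{2(t+1)} \subseteq \inangle{\vecx}^{t+2}$, and $\partial_y f(\vecx,\varphi_t) - \delta \in \inangle{\vecx}$, so $\partial_y f(\vecx,\varphi_t)\cdot \epsilon \equiv \delta\cdot \epsilon \pmod{\inangle{\vecx}^{t+2}}$. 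Rearranging, $f(\vecx,\varphi_t) \equiv -\delta\cdot \epsilon \pmod{\inangle{\vecx}^{t+2}}$, and therefore
$$\varphi_{t+1} \;=\; \varphi_t - \frac{f(\vecx,\varphi_t)}{\delta} \;\equiv\; \varphi_t + \epsilon \;=\; \varphi \pmod{\inangle{\vecx}^{t+2}},$$
which closes the induction.

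The main subtlety, and the reason this version is called \emph{slow}, is the use of the constant $\delta = \partial_y f(\mathbf{0},\mu) \in \F$ in the denominator rather than the power series $\partial_y f(\vecx,\varphi_t)$. This avoids any power-series inversion at each step, but it costs the quadratic gain of standard Newton iteration: the error $\partial_y f(\vecx,\varphi_t) - \delta$ only lies in $\inangle{\vecx}$ (not $\inangle{\vecx}^{t+1}$), so after multiplying by $\epsilon \in \inangle{\vecx}^{t+1}$ we gain a single extra degree. The principal thing to verify — and the main bookkeeping obstacle — is that both the quadratic tail $\sum_{j \geq 2}\epsilon^j$ and the ``wrong denominator'' discrepancy land in $\inangle{\vecx}^{t+2}$, which is precisely what the induction requires.
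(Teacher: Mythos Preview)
The paper does not supply its own proof of this lemma; it is stated with a citation to \cite[Lemma 5.1]{CKS2019}. Your proposal is correct and is essentially the standard argument one finds there (and in most treatments of Newton iteration over power series): induct on $t$, Taylor-expand $f$ in the $y$-slot around $\varphi_t$, and track which pieces land in $\inangle{\vecx}^{t+2}$. Your bookkeeping is accurate, and your closing paragraph correctly identifies why dividing by the constant $\delta$ rather than by $\partial_y f(\vecx,\varphi_t)$ costs the quadratic convergence.

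Two small points worth tightening. First, the lemma allows $f \in \F(\vecx)[y]$, so the assertion that each $\varphi_t$ is a genuine \emph{polynomial} in $\F[\vecx]$ is not automatic from your recursion: $f(\vecx,\varphi_t)$ could a priori be a rational function. You should either remark that the coefficients of $f$ lie in the local ring at $\mathbf{0}$ (so everything lives in $\F[[\vecx]]$ and one may truncate $\varphi_{t+1}$ to degree $\le t+1$ without disturbing the congruence), or simply note that in all the applications $f \in \F[\vecx,y]$. Second, writing the higher-order tail as $\sum_{j\ge 2}\tfrac{1}{j!}\partial_y^j f(\vecx,\varphi_t)\,\epsilon^j$ tacitly assumes characteristic zero or larger than $\deg_y f$; over small characteristic one should phrase this via the Hasse derivatives (or just write $f(\vecx,\varphi_t+\epsilon)=f(\vecx,\varphi_t)+\partial_y f(\vecx,\varphi_t)\epsilon + \epsilon^2\cdot(\text{stuff})$, which is all the argument uses). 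Neither issue affects the substance of your proof.
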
 

Suppose that the polynomial $f(\vecx,y) \in \F[\vecx,y]$ has a degree-$r$ polynomial $g(\vecx) \in \F[\vecx]$ as a root (w.r.t.\ $y$). After $r$ steps of Newton iteration, we get a polynomial $\varphi_r$ (of degree possibly greater than $r$) that agrees with $g$ on all terms up to degree $r$. We obtain $g$ by truncating $\varphi_r$ up to terms of degree at most $r$. Seen differently, Newton iteration lets us find a \emph{linear} factor $(y-g(\vecx))$ of $f(\vecx,y)$ by iteratively approximating the root $g(\vecx)$, provided the root is of multiplicity one, also called a \emph{simple} root. If the root is not simple, i.e., $y-g(\vecx)$ occurs with multiplicity $e>1$ in $f$, the derivative at the root $\partial_y f(\vecx,g(\vecx))=0$, so we consider instead the $(e-1)$-th partial derivative $\partial_{y^{e-1}}f(\vecx,y)$, which has $g(\vecx)$ as a simple root.

\begin{remark}\label{rem:multiplicity}
    If the field $\F$ has characteristic $p$, taking derivatives to make the root simple would kill the polynomial, unless the multiplicity $e$ is coprime to $p$. In case $e=p^\ell e'$ where $e'$ is coprime to $p$, we can replace $y$ with $z^{1/{p^\ell}}$, making $g(\vecx)^{p^\ell}$ a root (w.r.t.\ $z$) of multiplicity $e'$, that \emph{can} be made simple by taking derivatives. Hence, we can only approximate the $p^\ell$-th power of $g$, and not $g$ itself.
\end{remark}

A `random' invertible shift of the variables $\vecx \mapsto \vecx + \boldsymbol{\alpha} y + \boldsymbol{\beta}$, where $\boldsymbol{\alpha}$ and $\boldsymbol{\beta}$ are random elements from $\F^n$, ensures that $f$ is monic in $y$ and thus, so are its factors (see, e.g., \cite[Lemma 14]{DSS2022}). However, in general, the factors of $f(\vecx,y)$ may not be linear. Nevertheless, over the algebraically closed field $\overline{\F(\vecx)}$, $f$ uniquely factors as $\prod_i (y-\varphi_i(\vecx))^{e_i}$ where for all $i$, $e_i>0$ and $\varphi_i(\vecx) \in \overline{\F(\vecx)}$. In fact, it can be shown that after the variable shift, $f$ splits over the power series ring i.e., $\varphi_i(\vecx) \in \F\llbracket\vecx\rrbracket$ for all $i$ (see \cite[Theorem 17]{DSS2022} for a proof). Moreover, the constant terms of the roots $\mu_i \coloneqq \varphi_i(\mathbf{0})$ are all \emph{distinct} non-zero field elements. Using Newton iteration, we can approximate these power series roots.

\begin{remark}\label{rem:roots-in-ext}
    If the field $\F$ is not algebraically closed, we might only find all the roots $\mu_i$ in an extension $\K \supseteq \F$ of degree $k!$ (at the most), where $k$ is the degree of $f$ with respect to $y$. We will then have that the power series roots $\varphi_i(\vecx) \in \K\llbracket\vecx\rrbracket$.
\end{remark}

We now state a version of Newton iteration with \emph{quadratic} rate of convergence. 

\begin{lemma}[\protect{\cite[Lemma 15]{DSS2022}}]
    \label{lem:fast-ni}
    Let $f(\vecx,y) \in \F(\vecx)[y]$ be a polynomial over $\F(\vecx)$ and $\mu \in \F$ be such that $f(\mathbf{0},\mu)=0$ but $\delta \coloneqq \partial_y f(\mathbf{0},\mu) \neq 0$. 
    
    Then, there is a \emph{unique} power series $\varphi \in \F\llbracket\vecx\rrbracket$ with $\varphi(\mathbf{0})=\mu$ such that $f(\vecx,\varphi)=0$. Set $\varphi_0=\varphi(\mathbf{0})$, and for all $t \geq 0$ define
    \[
        \varphi_{t+1}\; \coloneqq\; \varphi_t - \frac{f(\vecx,\varphi_t)}{\partial_y f(\vecx,\varphi_t)}.
    \]
    
    Since $\varphi_t(\mathbf{0})=\mu$ for all $t \geq 0$ and since $\delta \neq 0$, the denominator above is invertible in the power series ring. The sequence of \emph{rational functions} $\varphi_t(\vecx) \in \F(\vecx)$ converges to $\varphi$ at a \emph{quadratic} rate:
    \[
        \varphi = \varphi_t \bmod \inangle{\vecx}^{2^t} \text{ for all } t \geq 0.
    \]

\end{lemma}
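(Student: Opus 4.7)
The plan is to establish the three assertions of the lemma in sequence: the existence and uniqueness of $\varphi$, the well-definedness of the iteration inside $\F(\vecx)$, and the quadratic convergence. The first point comes for free by appealing to \Cref{lem:slow-ni} from the preceding discussion: that slower iteration already constructs a unique $\varphi \in \F[[\vecx]]$ with $\varphi(\mathbf{0})=\mu$ and $f(\vecx,\varphi)=0$, so I need only show that the faster iteration recovers the \emph{same} $\varphi$ and does so at a quadratic rate.

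For well-definedness, I would argue by a quick induction that each iterate $\varphi_t$ is a rational function whose power-series specialization at $\vecx=\mathbf{0}$ equals $\mu$. Given this, the constant term of $\partial_y f(\vecx,\varphi_t)$, viewed as an element of $\F[[\vecx]]$, is $\partial_y f(\mathbf{0},\mu)=\delta \neq 0$, and a formal power series is a unit precisely when its constant term is nonzero. Hence the denominator in the update rule is invertible in $\F[[\vecx]]$, and since $\partial_y f(\vecx,\varphi_t)$ is itself a rational function with a nonzero constant term after clearing denominators, the ratio $f(\vecx,\varphi_t)/\partial_y f(\vecx,\varphi_t)$ lives in $\F(\vecx)$, so $\varphi_{t+1} \in \F(\vecx)$ as claimed.

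The heart of the proof is the quadratic convergence bound, which I would prove by induction on $t$. The base case $t=0$ is immediate: $\varphi - \varphi_0 = \varphi - \mu \in \inangle{\vecx} = \inangle{\vecx}^{2^0}$, since $\varphi(\mathbf{0})=\mu$. For the inductive step, set $\epsilon_t \coloneqq \varphi - \varphi_t$ and Taylor-expand the polynomial $f(\vecx,y)$ around $y=\varphi_t$. Because $f$ has finite degree in $y$, the expansion truncates exactly as
$$0 \;=\; f(\vecx,\varphi) \;=\; f(\vecx,\varphi_t) \,+\, \partial_y f(\vecx,\varphi_t)\cdot \epsilon_t \,+\, R(\vecx,\varphi_t)\cdot \epsilon_t^2,$$
where $R(\vecx,\varphi_t) \in \F[[\vecx]]$ packages the remaining higher-order derivative terms (a polynomial in $\partial_y^j f(\vecx,\varphi_t)$ for $j \geq 2$). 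Dividing through by the unit $\partial_y f(\vecx,\varphi_t)$ and substituting into the iteration rule yields
$$\varphi_{t+1} - \varphi \;=\; \left(\varphi_t - \frac{f(\vecx,\varphi_t)}{\partial_y f(\vecx,\varphi_t)}\right) - \varphi \;=\; \frac{R(\vecx,\varphi_t)}{\partial_y f(\vecx,\varphi_t)} \cdot \epsilon_t^2.$$
By the inductive hypothesis $\epsilon_t \in \inangle{\vecx}^{2^t}$, hence $\epsilon_t^2 \in \inangle{\vecx}^{2^{t+1}}$, and since the prefactor is an element of $\F[[\vecx]]$, multiplication preserves this ideal membership. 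This gives $\varphi_{t+1} - \varphi \in \inangle{\vecx}^{2^{t+1}}$, closing the induction. Uniqueness of the power-series root limits of the two iterations then forces the $\varphi$ produced here to coincide with the one from \Cref{lem:slow-ni}.

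The main obstacle is less a genuine difficulty than careful bookkeeping: one must consistently distinguish the ambient ring $\F(\vecx)$ (where the iterates live) from the completion $\F[[\vecx]]$ (where convergence and the unit property are measured), and must verify that the formal Taylor expansion is legitimate because $f$ is a polynomial in $y$. Once these foundations are laid, the quadratic-rate identity $\varphi_{t+1}-\varphi = (\text{unit})\cdot (\varphi_t - \varphi)^2$ drops out of a single Taylor step.
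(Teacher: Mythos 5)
Your proposal is correct and follows the standard argument for this lemma (the paper itself gives no proof, deferring to \cite[Lemma 15]{DSS2022}, whose proof is exactly this kind of exact Taylor expansion in $y$ around $\varphi_t$ plus induction): the identity $\varphi_{t+1}-\varphi = \bigl(R(\vecx,\varphi_t)/\partial_y f(\vecx,\varphi_t)\bigr)\,(\varphi_t-\varphi)^2$ with a prefactor in the local ring, combined with the inductive invariant $\varphi_t(\mathbf{0})=\mu$ guaranteeing the denominator is a unit in $\F[[\vecx]]$, is precisely the intended route, and delegating existence/uniqueness of $\varphi$ to \Cref{lem:slow-ni} is legitimate and non-circular. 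The only cosmetic remark is that your final sentence about identifying the limits of the two iterations is redundant, since your induction is already phrased against the $\varphi$ furnished by \Cref{lem:slow-ni}.
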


The above Newton-Iteration lemmas can also be viewed as power series versions of the \emph{Implicit Function Theorem} (see, e.g.~\cite[Chapter 1]{KP2013} and \cite[Theorem 9.2.1]{Art2022}). Returning to the problem of factoring $f(\vecx,y)$, suppose it has a non-trivial factor $g(\vecx,y)$ of degree $r$. As discussed earlier, we can assume $g$ is a factor of multiplicity one by considering appropriate partial derivatives. Since $f$ is monic in $y$, so is $g$, and splits as $g = \prod_{i=1}^r (y-\varphi_i)$ for some power series roots $\varphi_i(\vecx) \in \F\llbracket\vecx\rrbracket$. Crucially, these $\varphi_i$'s are a subset of the roots of $f$. Given this set (via the \emph{distinct} degree-$0$ terms of the $\varphi_i$'s), we can build the approximate power series roots up to degree $r$ using Newton iteration on $f$. Finally, in order to obtain $g$, it is enough to multiply the approximate linear factors corresponding to $g$, truncate the result up to terms of degree $r$, and invert the variable shift. 

\begin{remark}
    The closure of a particular algebraic model under factoring does not require an efficient algorithm to \emph{find} the roots (factors), but only the \emph{structural} ability to perform Newton iteration (and other pre- and post-processing steps) within the model. Whether we use the faster or slower version of Newton's method will also depend on the convenience afforded by the model. We will have a chance to briefly comment on algorithms for factoring once we discuss the factor closure of a model. 
\end{remark}

\subsection{Low depth circuits}
\label{subsec:low-depth}

In this section, we consider factors of polynomials computable by small-size circuits that are also of \emph{bounded-depth}. These circuits form a fascinating subclass of arithmetic circuits since general circuits can be ``efficiently'' depth-reduced, unlike Boolean circuits. An influential body of work~\cite{VSBR1983,AV2008,Koi2012,Tav2015,GKKS2016} shows that any circuit of size $s$ computing a polynomial of degree $d$ can be equivalently written as a \emph{homogeneous} depth-$4$ circuit of size $s^{O(\sqrt{d})}$, or even a depth-$3$ circuit of size $s^{O(\sqrt{d})}$ (over fields of characteristic zero), if one foregoes homogeneity.

Dvir, Shpilka, and Yehudayoff~\cite{DSY2009} studied factors of low-depth circuits in order to extend the Hardness-Randomness paradigm of Kabanets and Impagliazzo~\cite{KI2004} (see \Cref{sec:hardness-randomness}) to this model. Of particular interest to them were factors of the form $y-g(\vecx)$ of a bounded-depth circuit $f(\vecx,y)$. They showed that if the individual degree of $f$ with respect to $y$ is bounded, then the root $g(\vecx)$ has a small bounded-depth circuit.

\begin{theorem}[\protect{\cite[Theorem 4]{DSY2009}}]
\label{thm:root-ideg-depth}
    Let $f \in \F[\vecx,y]$ be an $(n+1)$-variate polynomial with ${\deg(f) \leq d}$ and $\deg_y(f) \leq k$, computed by a depth-$\Delta$ circuit of size $s$. Then, its factors of the form $y-g(\vecx)$ with $\deg(g)=r$ can be computed by circuits of size $\poly(s,r^k,d,n)$ and depth $\Delta+O(1)$.
\end{theorem}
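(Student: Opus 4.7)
The plan is to recover $g(\vecx)$ by applying Newton iteration to $f$ viewed as a polynomial in $y$, and to argue, using the bound $\deg_y f \le k$, that the resulting expression collapses into a shallow closed-form circuit of depth $\Delta + O(1)$.

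\emph{Preprocessing.} I would first apply a random invertible linear shift in $\vecx$ so that $f$ becomes monic in $y$, $g(\mathbf{0}) = \mu$ for a computable constant $\mu \in \F$, and $\delta \coloneqq \partial_y f(\mathbf{0}, \mu) \neq 0$, i.e., $g$ is a simple root of $f$ over $\mu$. If the factor $(y-g)$ appears in $f$ with multiplicity $e > 1$, I first replace $f$ by $\partial_y^{e-1} f$; by \Cref{lem:partial-derivative} this costs only a $\poly(s,k)$ size blow-up and no extra depth, and $g$ is a simple root of the new polynomial. All preprocessing steps are reversed at the end.

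\emph{Newton iteration.} By \Cref{lem:slow-ni}, starting from $\varphi_0 = \mu$ and iterating
\begin{equation*}
\varphi_{t+1} \;\coloneqq\; \varphi_t \,-\, \delta^{-1}\, f(\vecx, \varphi_t)
\end{equation*}
gives $\varphi_t \equiv g \pmod{\inangle{\vecx}^{t+1}}$. Taking $t = r$ and truncating $\varphi_r$ to homogeneous components of total degree $\le r$ (via the substitution $\vecx \mapsto z\vecx$ combined with \Cref{lem:interpolation}) recovers $g$ exactly without increasing the depth. Crucially, I use the \emph{slow} iteration with the constant $\delta$ rather than the fast one involving the polynomial $\partial_y f(\vecx, \varphi_t)$: this avoids any polynomial inversion, so the only operations inside the iteration are additions, multiplications, and multiplication by scalars.

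\emph{Closed-form expansion.} Writing $f(\vecx, y) = \sum_{j=0}^k f_j(\vecx)\, y^j$, each $f_j$ is computable by a circuit of depth $\Delta$ and size $\poly(s, k)$ via \Cref{lem:interpolation}. A naive unrolling of the $r$ Newton steps expresses $\varphi_r$ as a polynomial in $f_0(\vecx), \ldots, f_k(\vecx)$ with scalar coefficients, of apparent depth $r$ and size $k^r$. The key observation is that after truncation modulo $\inangle{\vecx}^{r+1}$ only monomials in the $f_j$'s of total weight at most $r$ can contribute, since $f_0(\mathbf{0}) = f(\mathbf{0},\mu) = 0$ and $(f_j - f_j(\mathbf{0}))$ also vanishes at the origin after the shift, so every extra factor in a product carries positive $\vecx$-mass. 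The number of such surviving monomials is $\binom{r+k}{k} = O(r^k)$. Presenting them as a $\Sigma\Pi$ head on top of the $f_j$-circuits yields a circuit of depth $\Delta + O(1)$ and size $\poly(s, r^k, d, n)$.

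\emph{Main obstacle.} The technical heart is the last step: making precise the claim that the truncation of $\varphi_r$ is a $\Sigma\Pi$ combination of polynomial size in the $f_j$'s. The cleanest route is to bypass the iteration altogether and work with the fixed-point equation obtained from the Taylor expansion of $f$ around $(\mathbf{0},\mu)$,
\begin{equation*}
\tilde{g} \;=\; -\delta^{-1}\, a_0(\vecx) \;-\; \delta^{-1}\sum_{j=1}^{k} b_j(\vecx)\, \tilde{g}^j,
\end{equation*}
where $\tilde{g} \coloneqq g - \mu$, $a_j(\vecx) \coloneqq \partial_y^j f(\vecx,\mu)/j!$, $b_1 \coloneqq a_1 - \delta$, and $b_j \coloneqq a_j$ for $j \ge 2$ (note $b_1(\mathbf{0}) = 0$, so each iterate increases the minimum $\vecx$-degree of the error). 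One then enumerates the solutions modulo $\inangle{\vecx}^{r+1}$ combinatorially, for instance via rooted plane trees of branching at most $k$ and weight at most $r$, or equivalently by an application of multivariate Lagrange inversion. The bounded $y$-degree $k$ constrains the branching while the degree budget $r$ constrains the size, together giving the $O(r^k)$ count. Once this enumeration is established, the remaining steps are routine invocations of the structural closure properties of \Cref{subsec:struct-results}.
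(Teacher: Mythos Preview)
Your approach is the paper's: slow Newton iteration (\Cref{lem:slow-ni}), the key observation that the truncated approximant is a polynomial of sparsity $\binom{r+k}{k}=O(r^k)$ in the $k{+}1$ $y$-coefficients of $f$ (the paper simply cites this as \cite[Lemma~3.1]{DSY2009} and writes $\varphi_r=Q_r(f_0,\dots,f_k)$), a $\Sigma\Pi$ head for $Q_r$ on top of the depth-$\Delta$ circuits for those coefficients obtained via \Cref{lem:interpolation}, and finally interpolation to extract the degree-$\le r$ part. One small fix: in your Closed-form paragraph $f_0(\mathbf 0)=f(\mathbf 0,0)$, not $f(\mathbf 0,\mu)$, so ``each factor carries positive $\vecx$-mass'' is false for the raw $f_j$'s; you need the \emph{centered} blocks $f_j-f_j(\mathbf 0)$, equivalently the $a_j$'s after the $y\mapsto y+\mu$ shift you already perform in your Main-obstacle paragraph---once stated that way the monomial count and your tree/Lagrange-inversion route both go through, and incidentally monicity of $f$ in $y$ is neither achieved by an $\vecx$-shift alone nor needed here.
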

\begin{proof-sketch}
    The obvious approach would be to use \Cref{lem:slow-ni} to successively approximate the root $g(\vecx)$ via Newton iteration. Let us assume the preconditions for using the lemma are met. The crucial observation~\cite[Lemma 3.1]{DSY2009} is that the $t$-th approximate $\varphi_t$ can be expressed as a ($k+1$)-variate, degree-$t$ polynomial $Q_t$ in the coefficients of $f$, i.e., $\varphi_t = Q_t(f_0,\ldots,f_k)$, where $f(\vecx,y)=\sum_{j=0}^k f_j(\vecx)y^j$, with $f_j(\vecx) \in \F[\vecx]$. After $r$ iterations, we get the polynomial $\varphi_r = Q_r(f_0,\ldots,f_k)$ of degree at most $dr$ such that $g = \varphi_r \bmod \inangle{\vecx}^{r+1}$. 
    
    The sparsity of $Q_r$ is at most $\binom{r+k}{k} = O(r^k)$. Composing the trivial depth-$2$ circuit for $Q_r$ (of size $O(r^k)$) with the depth-$\Delta$ circuits for the $f_j$'s (of size $O(sk)$) obtained via interpolation (\Cref{lem:interpolation}) on $f$, we have a circuit for $\varphi_r$ of depth $\Delta+2$ and size $\poly(r^k,s)$. To extract $g$, we again use interpolation by first mapping $x_i \to zx_i$ and using~\Cref{lem:interpolation} to obtain all the components of $\varphi_r$ of degree at most $r$ (w.r.t.\ $z$) causing a size blow-up of $\poly(r^k,s,d)$. The depth does not change.
    
    By repeatedly taking partial derivatives of $f$ using \Cref{lem:partial-derivative} (incurring a size blow-up of at most $\poly(s,k)$) and a translation of the coordinates (resulting in a depth increase by $1$), we can ensure that $g$ is a simple root of $f$ (and thus, \Cref{lem:slow-ni} is applicable). So we have a depth $\Delta+3$ circuit for $y-g(x)$ of size $\poly(r^k,s,d,n)$.
\end{proof-sketch}  

\begin{remark}\label{rem:root-algo}
     
    In light of~\Cref{rem:multiplicity}, if the field $\F$ is of characteristic $p$, then~\Cref{thm:root-ideg-depth} only applies to $p$-th powers of linear factors. For a factoring \emph{algorithm}, we work over the field of rational numbers $\Q$, or a finite field. We note that the preprocessing steps to ensure applicability of Newton iteration can be performed efficiently, with access to randomness. The initial value $\varphi(\mathbf{0})=\mu$ (which is guaranteed to be in the field if a root $g$ exists) is found by factoring $f(\mathbf{0},y)$ using a univariate factoring algorithm, and each distinct $\mu$ lifts to a different root of $f$. Therefore, the arguments also yield a \emph{randomised} polynomial time algorithm to \emph{find} the linear factors of $f$~\cite[Theorem 4]{DSY2009}.
\end{remark}

Rafael Oliveira~\cite{Oli2016} generalised this result to show that polynomials of bounded individual degrees and computable by small bounded-depth circuits are closed under factorisation. A little thought reveals that if a polynomial has its individual degree bounded, then so will its factors.

\begin{theorem}[\protect{\cite[Theorem 1.2]{Oli2016}}]
\label{thm:factors-ideg-depth}
    Let $f \in \F[\vecx,y]$ be an $(n+1)$-variate polynomial with $\deg(f) \leq d$, and individual degree bounded by $k$. Suppose $f$ can be computed by a depth-$\Delta$ circuit of size $s$. Then, any factor $g(\vecx,y)$ of degree $r$ that divides $f$ can be computed by a circuit of size $\poly(r^k,k,s,d,n)$ and depth $\Delta+O(1)$.
\end{theorem}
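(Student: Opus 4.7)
The plan is to generalize~\Cref{thm:root-ideg-depth} from linear factors $y - g(\vecx)$ to arbitrary factors by working with the full power-series splitting of $f$. After standard preprocessing (a random constant shift $\vecx \mapsto \vecx + \boldsymbol{\alpha}$, together with additional adjustments so that $f$ is monic in $y$), we may assume that $f$ factors over $\F[[\vecx]][y]$ as
$$f(\vecx,y) \;=\; \prod_{i=1}^{\ell}(y - \varphi_i(\vecx))^{e_i},$$
with $\varphi_i \in \F[[\vecx]]$, $\sum e_i = \deg_y f \leq k$, and all constant terms $\mu_i \coloneqq \varphi_i(\mathbf{0})$ pairwise distinct. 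In particular $\ell \leq k$. Since $g$ is a (monic) factor of $f$ in $\F[\vecx,y]$, it arises as $g = \prod_{i \in S}(y - \varphi_i)^{m_i}$ for some $S \subseteq [\ell]$ and $0 \leq m_i \leq e_i$. The task therefore reduces to computing a small circuit for each relevant $\varphi_i$, truncated to $\vecx$-degree $r$, and then forming the product.

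For each $i \in S$, I would invoke the slow Newton iteration of~\Cref{lem:slow-ni} on $\partial_y^{e_i - 1}f$ (which has $\varphi_i$ as a simple root, and is computable with only a $\poly(s,k)$ blow-up by~\Cref{lem:partial-derivative}), initialized at $\mu_i$. After $r$ steps, the polynomial $\varphi_i^{(r)}$ satisfies $\varphi_i \equiv \varphi_i^{(r)} \pmod{\inangle{\vecx}^{r+1}}$. Exactly as in the proof of~\Cref{thm:root-ideg-depth}, $\varphi_i^{(r)}$ can be expressed as a polynomial of degree $O(r)$ and sparsity $O(r^k)$ in the $y$-coefficients $f_0,\ldots,f_k \in \F[\vecx]$ of $f$; each $f_j$ admits a depth-$\Delta$ circuit of size $\poly(s,k)$ by interpolation (\Cref{lem:interpolation}). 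Composing yields a depth-$(\Delta+O(1))$ circuit of size $\poly(r^k, s, k)$ for $\varphi_i^{(r)}$.

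Next, form $\tilde{g}(\vecx, y) \coloneqq \prod_{i \in S}(y - \varphi_i^{(r)}(\vecx))^{m_i}$ using at most $k$ multiplications; the result is a circuit of size $\poly(r^k, s, k)$ and depth $\Delta + O(1)$ with $\tilde g \equiv g \pmod{\inangle{\vecx}^{r+1}}$. Because $\deg g \leq r$, every monomial of $g$ has $\vecx$-degree at most $r$, so truncating $\tilde g$ to $\vecx$-degree $\leq r$ recovers $g$ exactly. The truncation is implemented by the substitution $\vecx \mapsto z\vecx$ combined with coefficient extraction in $z$ via~\Cref{lem:interpolation}, at the cost of an additional $\poly(r,d)$ size factor and $O(1)$ depth. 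Inverting the initial shift adds no asymptotic overhead, delivering the claimed $\poly(r^k, k, s, d, n)$ size and $\Delta + O(1)$ depth.

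The principal technical obstacle is ensuring that the generalization from a single root to an arbitrary factor does not incur an exponential blow-up in the final bound. Two points are crucial: first, the individual-degree hypothesis $\deg_y f \leq k$ simultaneously caps the number of distinct power-series roots at $\ell \leq k$ and controls the DSY per-root sparsity at $O(r^k)$; second, because \emph{circuit size is additive under products} (unlike sparsity, which would multiply to $r^{k^2}$), the product of $\leq k$ root approximations contributes only an additive $k \cdot \poly(r^k, s, k)$. A subordinate subtlety is the preprocessing required to obtain a clean power-series factorization with distinct constant terms while keeping $\deg_y$ bounded; the standard workaround is to normalize by the leading $y$-coefficient in $\F(\vecx)[y]$ and strip divisions at the end via~\Cref{lem:division-elimination}.
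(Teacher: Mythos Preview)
Your high-level plan---approximate each power-series root of $f$ via the Newton iteration of \Cref{thm:root-ideg-depth}, multiply the approximations, and truncate---is indeed the core of Oliveira's argument. The gap is in your reduction to the monic case, which is where the actual work of the proof lies.

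A constant shift $\vecx \mapsto \vecx + \boldsymbol{\alpha}$ does not make $f$ monic in $y$; the usual monic-making shift $\vecx \mapsto \vecx + \boldsymbol{\alpha} y + \boldsymbol{\beta}$ raises $\deg_y f$ from $k$ to $\Theta(kn)$, destroying the $r^k$ bound you rely on. Your proposed workaround---normalize by the leading $y$-coefficient $f_k(\vecx)$ and invoke \Cref{lem:division-elimination}---does not close the argument. After dividing by $f_k$, the monic object your root-product $\tilde g$ approximates is $g/g_\ell$ (with $g_\ell$ the leading $y$-coefficient of $g$), which is a genuine rational function in $\vecx$; your claimed congruence $\tilde g \equiv g \pmod{\inangle{\vecx}^{r+1}}$ is therefore false in general, and truncating recovers neither $g$ nor $g/g_\ell$. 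To pass from $g/g_\ell$ to $g$ you must produce $g_\ell$ itself---an unknown factor of the $n$-variate polynomial $f_k$---and division elimination neither supplies it nor, as stated, preserves bounded depth.

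The paper resolves exactly this point by an \emph{induction on the number of variables}: one recovers the leading coefficient of $g$ (in fact the constant coefficient $g_0$, after passing to the reversal $\tilde f(\vecx,y) \coloneqq y^k f(\vecx,1/y)$, a trick introduced so that the coefficient to be factored is $f(\vecx,0)$ rather than an interpolated $f_k$, keeping the inductive size under control) by recursively factoring an $n$-variate polynomial of the same individual-degree bound. The monic part $g/g_\ell$ is then obtained precisely as you describe, and the two pieces are multiplied. Your proposal becomes correct once this inductive step and the reversal trick are inserted; without them the argument does not close.
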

\begin{proof-sketch}
    The main idea is to use the argument at the beginning of~\Cref{sec:newton-iteration} and essentially lift the above result of Dvir, Shpilka and Yehudayoff to all factors. Let us begin by writing $f(\vecx,y) = \sum_{j=0}^k f_j(\vecx)y^j$ and $g(\vecx,y) = \sum_{j=0}^{\ell} g_j(\vecx)y^j$, where $\ell \leq k$. A shift of the variables $\vecx \mapsto \vecx + \boldsymbol{\alpha} y + \boldsymbol{\beta}$ to ensure that $f$ is monic in $y$, however, would make its individual degree $kn$. Therefore, in order to reduce to the monic case, we rewrite 
    \begin{align*}
        f(\vecx,y) \;&=\; f_k(\vecx)\left(y^k+\sum_{j=0}^{k-1} f_j(\vecx)/f_k(\vecx) y^j\right) \text{ and } \\
        g(\vecx,y) \;&=\; g_{\ell}(\vecx)\left(y^{\ell}+\sum_{j=0}^{\ell-1} g_j(\vecx)/g_{\ell}(\vecx) y^j\right).
    \end{align*}
    
    The idea is that, since $g$ divides $f$, the leading term $g_{\ell}$ divides $f_k$. If we are able to recover the factor $g_\ell$ from $f_k$ by an induction on the number of variables (note the one fewer variable), and we somehow obtain a circuit for $g/g_\ell$ using Newton iteration for monic polynomial factoring, then multiplying the circuits for $g_\ell$ and $g/g_\ell$ would give us $g$.

    Although we can extract $f_k$ from $f$ via interpolation, we cannot afford the resulting size blow-up (by a factor of $k$) in the induction argument. To avoid this, we work with the polynomial $\tilde{f}(\vecx,y) \coloneqq y^k f(\vecx,1/y)$, which is the \emph{reversal} of (the coefficients of) the polynomial $f$. The leading coefficient of $\tilde{f}$ is now $f_0(\vecx) = f(\vecx,0)$ which has a smaller circuit than $f$. Moreover, the factors of the reversal of a polynomial are just the reversals of the original factors.  
    
    Once we find (by induction) a circuit of depth $\Delta+O(1)$ and size $\poly(r^k,s,d)$ for $g_0(\vecx)$ (the leading coefficient of $\tilde{g}(\vecx,y)$), we can use \Cref{thm:root-ideg-depth} to approximate the roots $\varphi_i(\vecx) \in \F\llbracket\vecx\rrbracket$ (up to degree $d$) of the monic rational function $\tilde{g}/g_0 = \prod_{i=1}^\ell (y-\varphi_i)$. Using Newton iteration on $\tilde{f}$, whose degree in $y$ is bounded by $k$, we can approximate these roots by circuits of depth $\Delta+O(1)$ and size $\poly(r^k,s,d)$. We further obtain a circuit of size $\poly(r^k,s,k,d)$ for $\tilde{g}/g_0$ by multiplying the approximations for the circuits and truncating using interpolation, resulting in a depth increase by at most a constant. Finally, we get $g(\vecx,y)$ by multiplying $g_0$ and $\tilde{g}/g_0$ and computing the reversal of $\tilde{g}$.
\end{proof-sketch}

\begin{remark}
    If the field $\F$ has characteristic $p$, then we only obtain factors up to a power of $p$ (see~\Cref{rem:multiplicity}). Over fields that are not algebraically closed,~\Cref{rem:roots-in-ext} implies that the circuit we obtain for the factors is over an extension field $\K \supseteq \F$ of degree at most $k!$. This circuit can be simulated over $\F$ with a $\poly(k!)$ size blow up, and depth increase by a factor of $O(\log k!)$~\cite[Theorem 4.2]{HY2011}, which is efficient when $k$ is bounded.

    The above result can be used to obtain a \emph{randomised} algorithm running in time $\poly(s,(nk)^k)$ to compute all the factors of the polynomial $f$~\cite[Remark 1.3]{Oli2016}. In addition to the ideas in \Cref{rem:root-algo}, we need to determine which power series roots correspond to a factor $g$ over $\F$, and this can be done by testing (deterministically) the $2^k$ possible subsets of the roots.
\end{remark}

Chou, Kumar, and Solomon~\cite{CKS2019} removed the restriction on individual degrees at the expense of requiring that the \emph{total degree} of the factor be small.

\begin{theorem}[\protect{\cite[Theorem 2.1]{CKS2019}}]
    \label{thm:cks-low-depth}
    Let $f \in \F[\vecx,y]$ be an $(n+1)$-variate polynomial of degree $d$. Suppose $f$ can be computed by a depth-$\Delta$ circuit of size $s$. Then, any factor $g(\vecx,y)$ of degree $r$ that divides $f$ can be computed by a circuit of size $\poly(r^{\sqrt{r}},n,d,s)$ and depth $\Delta+O(1)$.
\end{theorem}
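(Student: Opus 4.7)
The plan is to extend the Newton-iteration strategy of \Cref{thm:factors-ideg-depth} by trading Oliveira's bounded-individual-degree hypothesis for a bound that depends only on the total degree $r$ of the specific factor $g$ being extracted. The starting point is that although fast Newton iteration (\Cref{lem:fast-ni}) can produce approximations of any prescribed precision, a naive unrolling blows up both the size and the depth of the resulting circuit, so the heart of the argument is a careful symbolic analysis of the iterates.

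First I would preprocess $f$ exactly as in \Cref{thm:factors-ideg-depth}: apply a random invertible shift $\vecx \mapsto \vecx + \mathbf{\alpha}y + \mathbf{\beta}$ so that $f$ becomes monic in $y$ and $g$ remains a factor; if $g$ appears with multiplicity $e > 1$, replace $f$ by $\partial_y^{e-1}f$ via \Cref{lem:partial-derivative}, which makes $g$ a simple factor at the cost of a polynomial size blow-up and no depth increase. After these steps, as explained after \Cref{lem:slow-ni}, $f$ splits over $\F[[\vecx]][y]$ as $\prod_i (y - \varphi_i)$ with the constant terms $\mu_i \coloneqq \varphi_i(\mathbf{0})$ all distinct, and $g = \prod_{i \in S}(y - \varphi_i)$ for a subset $S$ identified by the roots of $g(\mathbf{0},y) \in \F[y]$.

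Next, for each $i \in S$ I would use \Cref{lem:fast-ni} starting from $\mu_i$ to produce $\tilde\varphi_i$ with $\tilde\varphi_i \equiv \varphi_i \pmod{\inangle{\vecx}^{r+1}}$. The central technical ingredient -- generalizing \cite[Lemma 3.1]{DSY2009} to the fast-iteration regime -- is a symbolic expression for $\tilde\varphi_i$ as $Q(f_0, \ldots, f_k)$, where $f(\vecx,y) = \sum_{j=0}^k f_j(\vecx) y^j$ and $Q$ is a small polynomial in its $k{+}1$ arguments. The direct approach of \Cref{thm:factors-ideg-depth} yields sparsity $r^{O(k)}$, which is useless when $k$ is unbounded. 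To beat this, one splits the iteration into two phases: a coarse phase that produces an approximation of precision $2^{O(\sqrt{\log r})}$, in which the symbolic polynomial $Q$ turns out to have degree $O(\sqrt{r})$ and to depend on only $O(\sqrt{r})$ effective coefficients (giving sparsity $r^{O(\sqrt{r})}$), followed by a finishing step that lifts this coarse approximation to precision $r$ using a single polynomial-size refinement. The depth stays at $\Delta + O(1)$ throughout because each phase composes only $O(1)$ copies of the depth-$\Delta$ circuits for the $f_j$'s, extracted via \Cref{lem:interpolation}.

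Finally I would form $\prod_{i \in S}(y - \tilde\varphi_i)$, truncate it to degree $r$ using \Cref{lem:interpolation}, and reverse the preprocessing shift to recover $g$. The main obstacle is precisely the third paragraph: rigorously showing that the symbolic polynomial $Q$ representing the degree-$r$ truncation of a Newton iterate can be arranged to have sparsity $r^{O(\sqrt{r})}$ rather than the naive $r^{O(r)}$. This requires a careful tracking of how the ``active'' set of coefficients of $f$ participating in the iterate grows, and of how the quadratic convergence of \Cref{lem:fast-ni} interacts with the $\inangle{\vecx}^{r+1}$ truncation, so that neither the size nor the depth of the composed circuit exceeds the promised bound.
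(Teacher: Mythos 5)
Your preprocessing, root-splitting, and final reassembly match the paper's outline, but the step you yourself flag as the heart of the argument --- the third paragraph --- is where the proof actually lives, and the scheme you sketch there does not work. First, it is internally inconsistent: after a ``coarse phase'' reaching precision $2^{O(\sqrt{\log r})} = r^{o(1)}$, a \emph{single} refinement step of the quadratically convergent iteration (\Cref{lem:fast-ni}) only doubles the precision, so you cannot reach precision $r$ with ``a single polynomial-size refinement''; you would need $\Theta(\log r)$ further steps, each composing circuits for $f$ and $\partial_y f$ with the current iterate, and it is exactly this unrolling whose size/depth you have not controlled. Second, the claimed sparsity bound $r^{O(\sqrt{r})}$ for a symbolic polynomial $Q(f_0,\ldots,f_k)$ in the $y$-coefficients of $f$ is unsupported, and the representation itself is the wrong one: when $k=\deg_y(f)$ is unbounded, even the number of arguments of $Q$ is unbounded, which is precisely why the Dvir--Shpilka--Yehudayoff route needs the individual-degree hypothesis you are trying to drop.

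The actual argument of Chou, Kumar and Solomon replaces that representation by a different key lemma: the truncated root $\varphi_r$ can be written as $A_r$ applied to (at most) $r$-th order partial derivatives of $f$, where $A_r$ is an $(r+1)$-variate, degree-$r$ polynomial computable by a circuit of size $\poly(r)$ --- so the number of ``outer'' variables is $r+1$, independent of $\deg_y(f)$. The bound $r^{O(\sqrt{r})}$ then does not come from sparsity of $A_r$ at all, but from applying the standard depth-reduction theorems to the small circuit for $A_r$ to obtain a depth-$3$ circuit of size $r^{O(\sqrt{r})}$; composing this with the depth-$\Delta$, $\poly(s,d,r)$-size circuits for the partial derivatives (\Cref{lem:partial-derivative}) and truncating via interpolation gives depth $\Delta+O(1)$. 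Your proposal never invokes depth reduction, and without it (or a genuine sparsity bound, which your two-phase trick does not deliver) there is no way to keep the outer polynomial at constant depth with the stated size. So the proposal has a genuine gap at its central step, and the missing ideas are exactly the derivative-based reparametrization of the Newton iterate and the use of depth reduction on the resulting $(r+1)$-variate polynomial.
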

\begin{proof-sketch}
    The broad outline is still the reduction of factoring to approximating roots, same as in the proof of~\Cref{thm:factors-ideg-depth}. Consider the simple case when $g$ is a root of $f$. If $\deg_{y}(f) \geq n$, the earlier observation by Dvir, Shpilka and Yehudayoff that there exists a $(\deg_{y}(f)+1)$-variate polynomial $\varphi_r$ that approximates $g$ up to degree $r$ becomes trivial since $g$ is already an $n$-variate polynomial. 
    
    However, irrespective of $\deg_y(f)$, Chou, Kumar, and Solomon show that $\varphi_r$ can in fact be written as an $(r+1)$-variate, $\poly(r)$-sized, degree-$r$ polynomial $A_r$ in the (at most) $r$-th order partial derivatives of $f$~\cite[Lemma 5.3]{CKS2019}\footnote{A proof of closure of $\VNP$ under factoring also follows from here using closure properties of $\VNP$ under coefficient extraction and composition (see \cite[Claim 8.3 and Claim 8.4]{CKS2019}).}. Using the depth-reduction results mentioned at the beginning of this section, we can further squash this to a depth-$3$ circuit for $A_r$ of size $r^{O(\sqrt{r})}$ when the field has characteristic zero. Composition with the depth-$\Delta$, $\poly(s,d,r)$-sized circuits for the partial derivatives results in a depth-$(\Delta+3)$ circuit for $\varphi_r$, from which we can extract $g$ using interpolation, as before.

    The rest of the argument to lift this to a general factor remains almost the same as in \Cref{thm:factors-ideg-depth}. The argument does not need special handling of the leading coefficient, as we do not need to worry about maintaining individual degree after the coordinate shift. 
\end{proof-sketch}

\begin{remark}
    If the degree $r$ of the factor is at most $O(\log^2 n/\log\log n)$, then the size bound for $g$ in \Cref{thm:cks-low-depth} above is $\poly(n)$.     
\end{remark}



Recently, there has been a flurry of activity on \emph{derandomising} the factorisation algorithms for constant depth circuits~\cite{KRS2024,KRSV2024,DST2024}. There has also been progress on computing GCD, resultants, and various other problems in linear algebra using circuits of constant depth~\cite{AW2024, BKRRSS-gcd2025}. Building on this, Bhattacharjee, Kumar, Ramanathan, Saptharishi, and Saraf~\cite{BKR+2025} gave the first subexponential time \emph{deterministic} algorithm for constant depth circuit factorisation. Soon after, the same authors along with Rai~\cite{BKRRSS2025} showed that if a polynomial can be computed by a depth $\Delta$ circuit of size $s$, then any of its factors can be computed by a circuit of size $\poly(s)$ and depth $\Delta+O(1)$, thereby positively answering the factorisation question for bounded depth circuits. We will sketch this breakthrough technique later in Section~\ref{subsec:formula-factoring}. We note that the positive characteristic case still remains open.

\begin{questype}{Open Problem}
    Let $f \in \F_q[\vecx]$ be a multivariate polynomial over a finite field $\F_q$ of prime characteristic $p$. If $f$ can be computed by a size $s$, depth $\Delta$ circuit, is it the case that any factor of $f$ can be computed by a circuit of depth $O(\Delta)$ and size $\poly(s)$?
\end{questype}
 
\subsection{High degree circuits}
\label{subsec:high-deg-ckts}

The families of polynomials we considered till now had `low' degree, bounded by a polynomial in the number of variables. Malod~\cite{Mal2003,Mal2007} considered the case when the degrees are not bounded.

\begin{definition}
    A family of polynomials $f=(f_n)$ is said to be in $\VP_{nb}$ over the field $\F$ if the number of variables and the size of the smallest circuit for $f_n$ (over $\F$) are bounded by $\poly(n)$.
\end{definition}

Note that the degree of a polynomial family in $\VP_{nb}$ can be \emph{exponential} in the size of the circuit. Kaltofen~\cite{Kal1987} showed (\Cref{thm:vp-closure}) that for an $n$-variate polynomial $f=g^eh$ where $g$ and $h$ are coprime and $g$ has multiplicity $e$, 
\begin{equation}
    \label{eq:factor-kaltofen}
    \size(g) = \poly(\size(f),\deg(g),e,n).
\end{equation}

Hence, the best size upper bound one can deduce for (exponential-degree) factors of polynomials in $\VP_{nb}$ from Kaltofen's result is exponential. This is unavoidable in general: the polynomial $x^{2^n}-1 = \prod_{i=1}^{2^n}(x-\xi^i)$ where $\xi$ is a $2^n$-th root of unity, can be computed by a circuit of size $O(n)$. Lipton and Stockmeyer~\cite{LS1978} showed that a random exponential-degree factor $\prod_{i \in S}(x-\omega^i)$ where $S \subset [2^n]$ and $|S|=\exp(n)$, requires $\exp(n)$-size circuits. So $\VP_{nb}$ is \emph{not} closed under taking factors\footnote{In fact, the same argument also shows the non-closure of $\VNP_{nb}$.}. Nevertheless, \Burgisser 's \emph{Factor Conjecture}~\cite[Conj. 8.3]{Bur2000} (stated below) is that the size of the factor $g$ in \Cref{eq:factor-kaltofen} should be independent of its multiplicity $e$. In particular, $\poly(n)$-degree factors of a polynomial family in $\VP_{nb}$ should be in $\VP$. 

\begin{questype}{Open Problem}
    \label{conj:factor-conj}
    Show that if $g(\vecx)$ is a factor of an $n$-variate polynomial $f(\vecx)$ then,
    $$\size(g) = \poly(\size(f),\deg(g),n).$$
\end{questype}

See \cite[Section 4]{Bur2004} for some applications of the above conjecture to decision complexity. Over the years, there has been partial progress towards a resolution. In the case when $f$ is a power of $g$, Kaltofen~\cite{Kal1987} (cf. \cite[Proposition 6.1]{Bur2004}) already showed the conjecture to be true (see~\Cref{thm:vp-closure-special} for an alternative proof).

\begin{theorem}[\protect{\cite[Theorem 2]{Kal1987}}]
\label{thm:vp-closure-special-NI}
    Suppose $f(\vecx)=g(\vecx)^e \in \F[\vecx]$ is an $n$-variate polynomial. Then,
    $$\size(g) = \poly(\size(f),\deg(g),n).$$ 
\end{theorem}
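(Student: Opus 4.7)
The plan is to recover $g$ directly from $f = g^e$ via the binomial expansion $(1+u)^{1/e} = \sum_{i \ge 0} \binom{1/e}{i} u^i$ in the formal power series ring, and then use Strassen's homogenization to extract the low-degree part so that the final size bound is independent of $e$.

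I would first preprocess by translating coordinates so that $f$ has constant term $1$. Since $f \not\equiv 0$, I can pick (over a sufficiently large extension of $\F$, if necessary) a point $\alpha \in \F^n$ with $f(\alpha) \neq 0$ and define $\tilde{f}(\vecx) \coloneqq f(\vecx + \alpha)/f(\alpha)$. Then $\tilde{f}(\mathbf{0}) = 1$ and $\tilde{f} = \tilde{g}^{\,e}$, where $\tilde{g}(\vecx) \coloneqq g(\vecx+\alpha)/g(\alpha)$ satisfies $\tilde{g}(\mathbf{0}) = 1$. Circuit size changes by only an additive $O(n)$, and recovering $g$ from $\tilde{g}$ requires only an inverse shift and scaling by the constant $g(\alpha) = f(\alpha)^{1/e}$ (which lies in an at most degree-$e$ extension of $\F$).

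Because $u \coloneqq \tilde{f} - 1 \in \inangle{\vecx}$, the power $u^i$ contributes only monomials of degree $\ge i$; and in characteristic zero the power-series $e$-th root of $\tilde{f}$ with constant term $1$ is unique (the derivative of $y^e - \tilde{f}$ at $(\mathbf{0}, 1)$ is $e \neq 0$). These two facts together force
\begin{equation*}
\tilde{g}(\vecx) \;\equiv\; \sum_{i=0}^{d} \binom{1/e}{i}\, u^i \pmod{\inangle{\vecx}^{d+1}},
\end{equation*}
where $d \coloneqq \deg(g)$. Maintaining running powers $u, u^2, \ldots, u^d$ (each obtained from the previous by one multiplication by $u$) and accumulating the partial sums yields a circuit $C$ of size $O(d \cdot \size(f))$ computing the polynomial $h(\vecx) \coloneqq \sum_{i=0}^{d} \binom{1/e}{i} u^i$.

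The main obstacle is that $h$ has total degree as large as $e\cdot d^2$, so a naive degree-truncation via interpolation along $t \mapsto h(t\vecx)$ (following \Cref{lem:interpolation}) would introduce a factor of $e$ in the final size and defeat the purpose. To keep the dependence only on $\size(f)$, $d$, and $n$, I would invoke Strassen's classical homogenization (the same mechanism that underlies \Cref{lem:division-elimination}): from a circuit of size $s$ computing a polynomial, one builds a circuit of size $O(s d^2)$ whose designated output gates compute the homogeneous components of the polynomial of degrees $0, 1, \ldots, d$. Summing these components on the circuit for $h$ produces exactly $\tilde{g}$ via a circuit of size $\poly(\size(f), d)$, with no $e$ dependence. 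Finally, undoing the translation $\vecx \mapsto \vecx - \alpha$ and scaling by $g(\alpha)$ yields a circuit for $g$ of the claimed size $\poly(\size(f), \deg(g), n)$.
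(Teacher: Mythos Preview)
Your argument is correct but takes a somewhat different route from the paper's sketch. The paper applies fast Newton Iteration (\Cref{lem:fast-ni}) to the auxiliary equation $y^e - f(\vecx) = 0$: after shifting so that $f(\mathbf{0}) = 1$ one has the simple root $\mu = 1$, and $O(\log \deg g)$ iterations suffice, each step working only with homogeneous components of degree at most $\deg g$; computing $\varphi_t^{\,e}$ by repeated squaring leaves a residual $\log e$ factor that is then absorbed via $e \le \deg f \le 2^{\size(f)}$. You instead write the truncated binomial series $\sum_{i\le \deg g}\binom{1/e}{i}(f-1)^i$ directly---this is exactly the paper's \Cref{eq:vp-closure-special}, but with the cutoff taken at $\deg g$ rather than $\deg f$---and then invoke Strassen's homogenization to project the resulting high-degree circuit onto degrees $\le \deg g$. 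This is a bit cleaner: $e$ enters only through the field constants $\binom{1/e}{i}$, so no $\log e$ bookkeeping is needed. Both arguments ultimately rest on the same mechanism of truncating the computation to degree $\deg g$ via homogeneous components; they differ mainly in whether the power-series $e$-th root is organized as a Newton update or as an explicit series.

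One minor remark: the aside that $g(\alpha) = f(\alpha)^{1/e}$ might require a degree-$e$ extension is unnecessary here. Since $g \in \F[\vecx]$ and $\alpha$ can be taken in $\F^n$ (or a small extension independent of $e$), the value $g(\alpha)$ already lies in that field; for an existential size bound one simply uses it as a free constant in the circuit.
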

\begin{proof-sketch}
    We cannot directly approximate the roots of $f$ (and hence $g$) as outlined in \Cref{sec:newton-iteration} since we cannot reduce to the multiplicity one case by taking $(e-1)$ partial derivatives - the size blow up will be $\poly(\size(f),e)$. Instead, we find the root of the equation $y^e - f(\vecx)$ by Newton iteration (\Cref{lem:fast-ni}). 
    
    By shifting variables we can ensure that all the $O(\log(\deg(g)))$ steps can be performed. Crucially, we only need the homogeneous components of $f$ up to $\deg(g)$ for the intermediate steps, and any involved exponentiations can be performed using repeated squaring at most $O(\log e)$ times.  Straightforward analysis will show that
    $$\size(g) = \poly(\size(f),\deg(g),n,\log e),$$
    whereby noting that $e \leq \exp(\size(f))$ gives the result.
\end{proof-sketch}

As another special case of the Factor Conjecture, Dutta, Saxena, and Sinhababu~\cite{DSS2022} showed that the conjecture also holds if the \emph{square-free} part of $f$ is of low degree. Suppose $f=\prod_{i=1}^m f_i^{e_i}$ is the complete factorisation of $f$ with $f_i$'s irreducible. Recall that the square-free part of $f$ (also called the \emph{radical}) is the polynomial $\rad(f) \coloneqq \prod_{i=1}^m f_i$.

\begin{theorem}
    Let $f(\vecx,y)$ be an $n$-variate polynomial and let $g(\vecx,y)$ be a factor of $f$. We then have
    $$\size(g) = \poly(\size(f),\deg(\rad(f)),n).$$
\end{theorem}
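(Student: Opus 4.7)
The plan is to produce a small circuit for $R := \rad(f)$ first, and then to assemble an arbitrary factor $g$ of $f$ out of the circuits for the irreducible pieces of $R$ via repeated squaring.

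\textbf{Preprocessing.} A random invertible shift $\vecx \mapsto \vecx + \mathbf{\alpha}\,y + \mathbf{\beta}$ makes $f$ monic in $y$; over $\F[[\vecx]][y]$, $f$ then splits as $f = \prod_{j=1}^m (y-\alpha_j)^{e_j}$ with $\alpha_j \in \F[[\vecx]]$ having pairwise distinct constant terms $\mu_j = \alpha_j(\mathbf{0})$. The radical becomes $R = \prod_{j=1}^m (y-\alpha_j)$, of total degree exactly $D$ and $y$-degree $m \le D$; every irreducible factor $f_i$ of $f$ appears in $R$ with multiplicity exactly one.

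\textbf{Building a circuit for $R$.} Approximate each $\alpha_j$ by Newton iteration against $f$ seeded at $\mu_j$. Since $\alpha_j$ is a root of $f$ of (possibly exponential) multiplicity $e_j$, the plain iteration fails to gain accuracy, so use the multiplicity-corrected update
$$\varphi_{j,t+1} \;\coloneqq\; \varphi_{j,t} \;-\; e_j \cdot \frac{f(\vecx,\varphi_{j,t})}{\partial_y f(\vecx,\varphi_{j,t})},$$
whose quadratic convergence yields $\varphi_{j,t} \equiv \alpha_j \pmod{\langle\vecx\rangle^{D+1}}$ after $O(\log D)$ steps. Each step composes with $f$ and $\partial_y f$ (bounded via \Cref{lem:partial-derivative}) and inverts a power series whose constant term is the unit $e_j \cdot h(\mathbf{0},\mu_j) \ne 0$, where $f = (y-\alpha_j)^{e_j} h$ locally; such an inversion mod $\langle\vecx\rangle^{D+1}$ needs $O(\log D)$ additional multiplications via the Newton trick for reciprocals. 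Multiplying the $m$ approximants and truncating at total degree $D$ using interpolation (\Cref{lem:interpolation}) then produces $R$ as a circuit of size $\poly(\size(f), D, n)$.

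\textbf{Assembling $g$.} With $R$ realised as a small circuit of degree $D$, invoke Kaltofen's factor-closure theorem (\Cref{thm:vp-closure}) on $R$: each irreducible $f_i$ has degree at most $D$ and multiplicity one inside $R$, so $\size(f_i) = \poly(\size(R), D, 1, n) = \poly(\size(f), D, n)$. Write $g = \prod_i f_i^{c_i}$ with $c_i \le e_i \le \deg f \le 2^{\size(f)}$, so $\log c_i \le \size(f)$. Compute each $f_i^{c_i}$ by repeated squaring in $O(\log c_i)$ multiplications, and multiply the at most $m \le D$ such powers to obtain $g$ with $\size(g) = \poly(\size(f), D, n)$.

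\textbf{Main obstacle.} The crux is the construction of $R$. Newton iteration against $f$ at a root of multiplicity $e_j$ stalls, and iterating directly on $R$ is unavailable since $R$ is the very target. The multiplicities $e_j$ can be read off from the univariate squarefree part $\rad(f(\mathbf{0}, y))$ (of degree at most $D$) which is enough for the existence argument, but keeping the cumulative circuit-size growth over the Newton steps at $\poly(\size(f), D, n)$ — rather than polynomial in $\deg f$ or in the $e_j$'s — is precisely the novelty that pushes this result past Kaltofen's bound in \Cref{eq:factor-kaltofen}.
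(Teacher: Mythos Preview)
Your overall architecture---get a small circuit for $\rad(f)$, then factor it by Kaltofen and reassemble $g$ via repeated squaring---matches the paper's. The gap is entirely in Step~2, the construction of $\rad(f)$.

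The claim that $\partial_y f(\vecx,\varphi_{j,t})$ ``is a power series whose constant term is the unit $e_j\cdot h(\mathbf{0},\mu_j)\ne 0$'' is false. Since $\mu_j$ is a root of $f(\mathbf{0},y)$ of multiplicity $e_j$, we have $\partial_y f(\mathbf{0},\mu_j)=0$ whenever $e_j>1$; the nonzero quantity $e_j\,h(\mathbf{0},\mu_j)$ is the constant term of $\partial_y f/(y-\alpha_j)^{e_j-1}$, not of $\partial_y f$ itself, and you have no circuit for the factor $(y-\alpha_j)^{e_j-1}$ to cancel. Concretely, writing $\psi_t=\varphi_{j,t}-\alpha_j$, both $f(\vecx,\varphi_{j,t})=\psi_t^{e_j}H_j(\varphi_{j,t})$ and $\partial_y f(\vecx,\varphi_{j,t})=\psi_t^{e_j-1}(e_j H_j(\varphi_{j,t})+\psi_t\,\partial_y H_j(\varphi_{j,t}))$ have valuation at least $e_j-1$. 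When $e_j$ is exponential in $\size(f)$---exactly the regime this theorem is meant to handle---both are identically zero modulo $\langle\vecx\rangle^{D+1}$, and your quotient is $0/0$. So the multiplicity-corrected Newton step cannot be implemented by a small circuit, which is precisely why Kaltofen's bound carries the dependence on $e$ in the first place.

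The paper's trick avoids Newton iteration on high-multiplicity roots altogether: introduce a fresh variable $z$ and observe that the auxiliary polynomial $F\coloneqq f+z\,\partial_y f$ factors as $\bigl(\prod_i f_i^{e_i-1}\bigr)\cdot G$ with $G=\rad(f)+z\cdot u$ for $u=\sum_i e_i\,\partial_y f_i\prod_{j\ne i}f_j$. The factor $G$ is coprime to $\prod_i f_i^{e_i-1}$, has degree $\deg(\rad(f))$, and occurs in $F$ with multiplicity \emph{one}, so Kaltofen's theorem applied to $F$ yields $\size(G)=\poly(\size(f),\deg(\rad(f)),n)$; setting $z=0$ gives $\rad(f)$. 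After that, your Step~3 (factor $\rad(f)$ and repeated-square) is exactly what the paper does.
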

\begin{proof-sketch}
    We follow the exposition of Sinhababu~\cite{Sin2019}. Consider the derivative
    $$
        \partial_y f(\vecx,y) = \partial_y \left(\prod_{i=1}^m f_i^{e_i}\right) = \sum_{i=1}^m e_i f_i^{e_i-1} \partial_y f_i \left(\prod_{j \neq i}f_j^{e_j}\right).
    $$
    Rewrite it as $\partial_y f = \prod_{i=1}^m f_i^{e_i-1} u$ with $u=\sum_{i=1}^m e_i \partial_y f_i \left(\prod_{j \neq i}f_j\right)$. The auxiliary polynomial $F \coloneqq f+z\partial_y f$ factors as $\prod_{i=1}^m f_i^{e_i-1} \left( \prod_{i=1}^m f_i + z \cdot u \right)$, where $z$ is a fresh variable. The factor $G \coloneqq \rad(f) + z\cdot u$ is coprime to $\prod_{i=1}^m f_i^{e_i-1}$, has the same degree as $\rad(f)$ and is of multiplicity one. Using Kaltofen's result (\Cref{thm:vp-closure}), we get $\size(G) = \poly(\size(f), \deg(\rad(f)),n)$. One can obtain $\rad(f)$ by setting $z=0$ in $G$, and any irreducible factor $f_i$ in size $\poly(\size(f), \deg(\rad(f)),n)$ by further factoring $\rad(f)$. The result follows by suitably combining powers of $f_i$ (via repeated squaring) to form $g$. 
\end{proof-sketch}

Shortly after proposing the factor conjecture, \Burgisser~\cite{Bur2004} showed its plausibility: low-degree factors can be \emph{approximated} by small circuits! We will discuss this next.

\subsection{Algebraic approximation}
\label{subsec:alg-approx}

A natural notion of approximation in Valiant's framework was studied by \Burgisser~\cite{Bur2004} (refer~\cite[Ch. 15, 20]{BCS1997} for historical motivation). A polynomial $f \in \F[\vecx]$ is said to be approximated by another polynomial $F \in \F[\varepsilon][\vecx]$ to an \emph{order of approximation} $M$ if 

\begin{equation}\label{eq:alg-approx}
    F(\vecx,\varepsilon) = \varepsilon^M f(\vecx) + \varepsilon^{M+1}\ Q(\vecx,\varepsilon),    
\end{equation}
for some polynomial $Q(\vecx,\varepsilon) \in \F[\vecx,\varepsilon]$. The \emph{approximate/border} size of $f$, denoted $\sizebar(f)$, is defined as the size of the smallest circuit \emph{over the ring of constants} $\F[\varepsilon]$ computing a polynomial $F$ that approximates $f$.

Over fields like $\R$ or $\C$, one can think of the above as an approximation in the sense $\lim_{\varepsilon \to 0} \varepsilon^{-M} F = f$. Another way of formulating the notion of approximation is to consider $F_{\varepsilon} \in \F(\varepsilon)[\vecx]$ over the rational function field $\F(\varepsilon)$ instead, and require an approximation of the form
$$F_{\varepsilon}(\vecx) = f(\vecx) + \varepsilon Q_{\varepsilon}(\vecx),$$ 
where $Q_{\varepsilon} \in \F[\varepsilon][\vecx]$ is a polynomial. The border complexity of $f$ is defined as before, but with the circuit size now calculated over $\F(\varepsilon)$. In this case, although $F_{\varepsilon=0}=f$ is defined at $\varepsilon=0$, the intermediate computations in the circuit for $F$ (over $\F(\varepsilon)$) might not be. Scaling arguments show that these two notions of approximation are equivalent~\cite[Lemma 5.6]{Bur2004}. For a detailed discussion on other natural definitions of approximation (both topological and algebraic), and their equivalence, we point the reader to~\cite[Section 5]{Bur2004}, \cite[Appendix 20.6]{BCS1997} and \cite[Section 2]{BIZ2018}. The notion of border complexity naturally suggests an approximation version of the class $\VP$.

\begin{definition}[$\VPbar$]
    A polynomial family $f=(f_n)$ is in the class $\VPbar$ if the number of variables of $f_n$, its degree, and the size of the smallest circuit over $\F[\varepsilon]$ \emph{approximating} $f_n$ are bounded by $\poly(n)$.
\end{definition}

It is clear that $\VP \subseteq \VPbar$. In an attempt to utilise sophisticated tools from algebraic geometry and representation theory to study Valiant's conjecture, Mulmuley and Sohoni~\cite{MS2001,MS2008} proposed $\VPbar \nsubseteq \VNP$ as the mathematically nicer (but possibly harder) conjecture. Further details on the \emph{Geometric Complexity Theory} (GCT) program for proving lower bounds can be found in~\cite{Reg2002,Mul2011,BLMW2011,Mul2012,Gro2012,Lan2017,BI2025}.

Returning to the factorisation problem,     \Burgisser\ showed that $\poly(n)$-degree factors of families in $\VP_{nb}$ are in $\VPbar$.

\begin{theorem}[\protect{\cite[Theorem 1.3]{Bur2004}}]
    \label{thm:factor-approx}
   Let $\F$ be a field of characteristic $0$ and let $f(\vecx,y)$ be an $(n+1)$-variate polynomial over $\F$. Suppose $g(\vecx,y)$ is a factor of $f$. Then,
   $$\sizebar(g) = \poly(\size(f),\deg(g),n).$$
\end{theorem}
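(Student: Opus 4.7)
The plan is to combine an $\varepsilon$-deformation of $f$ with Newton iteration in the power-series ring $\F(\varepsilon)[[\vecx]]$ to approximate the power-series roots of $\tilde f$, assemble an approximation of $g^e$ from the appropriate cluster of roots, and then extract $g$ via the binomial identity of equation~\eqref{eq:vp-closure-special}. The central obstruction to overcome is that Kaltofen's theorem (\Cref{thm:vp-closure}) has an explicit dependence on the multiplicity $e$ of $g$ in $f$, which can be as large as $\deg f$; the $\varepsilon$-deformation splits the $e$-fold factor $g^e$ into $e$ coprime pieces over $\F(\varepsilon)[[\vecx]]$, precisely circumventing this blow-up.

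Concretely, after standard preprocessing (an invertible coordinate change making $f$ monic in $y$, with $g(\mathbf{0},0)=1$ and with distinct non-zero constant-term roots), write $f = g^e h$ with $\gcd(g,h) = 1$ and deform to
$$\tilde f(\vecx, y, \varepsilon) \;\coloneqq\; f(\vecx,y) + \varepsilon\cdot q(\vecx,y)$$
for a simple polynomial $q$ (e.g., a monomial) chosen so that $\tilde f$, viewed as a polynomial in $y$ with coefficients in $\F(\varepsilon)[[\vecx]]$, has $\deg_y f$ distinct power-series roots. Under this deformation, each root of $g$ (appearing with multiplicity $e$ in $f$) splits into $e$ distinct power-series roots $\tilde\varphi_{i,1},\ldots,\tilde\varphi_{i,e} \in \overline{\F(\varepsilon)}[[\vecx]]$, each satisfying $\tilde\varphi_{i,j}\big|_{\varepsilon=0}$ equal to the $i$-th root of $g$. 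Apply fast Newton iteration (\Cref{lem:fast-ni}) over $\F(\varepsilon)[[\vecx]]$ to each such root: quadratic convergence means $O(\log \deg g)$ iterations suffice to approximate each root modulo $\inangle{\vecx}^{\deg g + 1}$ by a rational function whose circuit has size $\poly(\size(f), n)$ over an appropriate algebraic extension of $\F(\varepsilon)$.

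Next, assemble a border approximation of $g^e$ by multiplying the $e\cdot\deg_y g$ linear factors $(y-\tilde\varphi_{i,j})$ that converge to roots of $g$, and truncate the product modulo $\inangle{\vecx}^{\deg g + 1}$ via interpolation (\Cref{lem:interpolation}). This yields a circuit over $\F(\varepsilon)$ of size $\poly(\size(f), \deg g, n)$ for a polynomial $\tilde G$ with $\tilde G\big|_{\varepsilon=0} = g^e$. Converting this $\F(\varepsilon)$-circuit into a border circuit over $\F[\varepsilon]$ via Strassen's division elimination (\Cref{lem:division-elimination}) adapted to the ring $\F[\varepsilon]$ gives $\sizebar(g^e) \leq \poly(\size(f), \deg g, n)$. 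Finally, apply the binomial identity
$$g \;=\; (g^e)^{1/e} \;=\; \sum_{i=0}^{\deg g} \binom{1/e}{i}\, (g^e - 1)^i \pmod{\inangle{\vecx, y}^{\deg g + 1}},$$
which is valid in characteristic $0$ since $1/e \in \F$, at the level of border circuits to obtain the desired bound $\sizebar(g) \leq \poly(\size(f),\deg g, n)$.

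The hard part is the Galois-theoretic control of the root-grouping step: the $e$-fold cluster $\tilde\varphi_{i,1},\ldots,\tilde\varphi_{i,e}$ typically lives in an algebraic extension of $\F(\varepsilon)$ (for a generic deformation, $\F(\varepsilon^{1/e})$), so individual roots are not circuit-computable over $\F(\varepsilon)$. What rescues the argument is that only the symmetric functions of each cluster — in particular, the product $\prod_j (y - \tilde\varphi_{i,j})$ — are Galois-invariant and hence descend to $\F(\varepsilon)[\vecx,y]$. Encoding this Galois-invariant combination as a small circuit over $\F(\varepsilon)$, without ever instantiating individual roots, forces one to assemble the product via elementary symmetric polynomials in the Newton iterates, using identities derived from the coefficients of $\tilde f$ itself; managing the bookkeeping so that the circuit size stays polynomial in $\size(f),\deg g, n$ (and independent of $e$) is the main technical content. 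A secondary subtlety is the conversion of $\F(\varepsilon)$-circuits whose constants have poles at $\varepsilon = 0$ into genuine border circuits over $\F[\varepsilon]$; this is handled cleanly by the equivalence of the algebraic and topological definitions of border complexity established in~\cite[Section 5]{Bur2004}.
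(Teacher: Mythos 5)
Your overall strategy (perturb $f$ in $\varepsilon$ so that Newton iteration becomes applicable, then let $\varepsilon \to 0$) is the right instinct and is indeed the spirit of B\"urgisser's proof, but two steps in your plan do not go through as stated. First, the deformation $\tilde f = f + \varepsilon q$ splits the multiplicity-$e$ root into a cluster of $e$ conjugate roots that live in an algebraic (Puiseux-type) extension of $\F(\varepsilon)$, so \Cref{lem:fast-ni} cannot be run on them over $\F(\varepsilon)$: the lemma needs a simple starting root $\mu$ in the coefficient field, which your deformation does not provide. You flag the Galois-invariant reassembly of the cluster as ``the main technical content,'' but that is exactly the part that is missing -- and it is precisely the difficulty the paper's perturbation is engineered to avoid. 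B\"urgisser instead uses $F_{\varepsilon}(\vecx,y) \coloneqq f(\vecx,y+\varepsilon) - f(\mathbf{0},\varepsilon)$, after first moving to a point where $g$ vanishes but $h$ and $\partial_y g$ do not: this forces $(\mathbf{0},0)$ to be a root of $F_\varepsilon$ with $\partial_y F_\varepsilon(\mathbf{0},0) \neq 0$ in $\F[\varepsilon]$, so Newton iteration runs with constants in $\F[\varepsilon]$, no field extension, no clustering, and the constructed root satisfies $\psi_{\varepsilon=0}=\varphi$ exactly; the factor is then assembled from the $\deg_y g$ simple power-series roots of $g$ itself.

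Second, even granting the cluster construction, your assembly step breaks the size bound: in this theorem $\deg f$ is unbounded (this is the $\VP_{nb}$ setting), so the multiplicity $e$ can be exponential in $\size(f)$. Multiplying $e \cdot \deg_y g$ linear factors to form an approximation of $g^e$ costs $\Omega(e)$ gates, which is not $\poly(\size(f),\deg(g),n)$, and routing through $g^e$ at all reintroduces the dependence on $e$ that the theorem is supposed to eliminate (the subsequent binomial extraction of $(g^e)^{1/e}$ cannot repair a superpolynomial circuit for $g^e$). The paper never forms $g^e$: it approximates $g$ directly as the (truncated) product of its at most $\deg(g)$ power-series roots, each obtained by Newton iteration on the perturbed $F_\varepsilon$, which is why the final bound is independent of $e$.
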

\begin{proof-sketch}
    It suffices to show the result for irreducible $g$. Let $f=g^eh$ where $g$ and $h$ are coprime. As earlier, we would like to approximate the roots of $g$. Due to the possibly exponential multiplicity $e$, we cannot reduce to the case of a simple root by taking derivatives of $f$. We can, however, find a point $p$ (equal to $(\mathbf{0},0)$ after shifting coordinates), such that $g$ vanishes at $p$ but both $h$ and $\partial_y g$ do not~\cite[Lemma 3.3]{Bur2004}. We can then try to build the corresponding (unique) power series root $\varphi \in \F\llbracket\vecx\rrbracket$ of $g$ (which it shares with $f$) using Newton iteration (\Cref{lem:fast-ni}) on $f$. This seems quite straightforward, except that $\partial_y f = eg^{e-1}h\partial_y g + g^e \partial_y h$ always vanishes at $p$ when $e>1$. The main idea is to consider the \emph{perturbed} polynomial 
    $$F_{\varepsilon}(\vecx,y) \coloneqq f(\vecx,y+\epsilon) - f(\mathbf{0},\epsilon).$$ 
    Note that $F_{\varepsilon=0} = f$ and $F_{\varepsilon}(\mathbf{0},0)=0$. We encourage the reader to check that the derivative $\partial_y F_{\varepsilon}(\mathbf{0},0)$ does not vanish anymore, and hence we can use \Cref{lem:fast-ni} to construct the root $\psi_{\varepsilon}$ of $F_{\varepsilon}$, which in fact is an algebraic approximation of the root $\varphi$ of $f$ (and also $g$), i.e. $\psi_{\varepsilon=0} = \varphi$ (see \cite[Proposition 3.4]{Bur2004}). This way, we can efficiently \emph{approximate} the factor $g$.   
\end{proof-sketch}

Recently, an alternative proof of the above theorem was given in \cite{BKRRSS2025}, replacing the use of \Cref{lem:fast-ni} with a classical power series root identity, which provided a conceptually simpler analysis of the size of the approximating circuit. The approximate circuit constructed for the low-degree factor in \Cref{thm:factor-approx} has additional structure. Note that, a priori, the order of approximation $M$ for a polynomial $f$ can be arbitrarily large. However, \Burgisser~\cite[Theorem 5.7]{Bur2004,Bur2020} showed that over algebraically closed fields, $M=\exp(\sizebar(f))$. Therefore, the free constants from $\F[\epsilon]$ used for approximation can be assumed to be `only' exponential in degree and, hence, size. Since the circuit for the factor $g$ above is constructed via Newton iteration, the constants in $\F[\epsilon]$ are in fact circuits of size $\poly(\size(f))$ (but exponential in degree). Hence, imposing this restriction on the size of the univariate polynomials in $\epsilon$ is quite natural.

\begin{definition}[$\VPbar_\epsilon$]
    A polynomial family $f=(f_n)$ over the field $\F$ is in the class $\VPbar_{\epsilon}$ if the number of variables of $f_n$, its degree, and the size of the smallest circuit \emph{over} $\F$ approximating $f_n$ are bounded by $\poly(n)$.
\end{definition}

Note that the circuit size of the approximating polynomial is over the base field $\F$, thus also incorporating the size of the constants in $\F[\varepsilon]$. We call such circuits `presentable', and $\VPbar_\epsilon$ as \emph{presentable} $\VPbar$. Over finite fields, it can be shown that $\VPbar_{\epsilon}$ is contained in $\VNP$, a fact not known about $\VPbar$. 

\begin{theorem}[\protect{\cite[Theorem 1]{BDS2024}}]
    \label{thm:deborder-presentable}
    Let $\F$ be a finite field, and $f \in \F[\vecx]$ be an $n$-variate polynomial of degree $d$ such that the size of the smallest circuit (over $\F$) approximating $f$ is of size $s$. Then, $f$ can be written as 
    $$f(\vecx) = \sum_{\veca \in \{0,1\}^m} g(\vecx,\veca),$$
    where $m$, and $\size_{\F}(g(\vecx,\vecy))$ are bounded by $\poly(s,n,d)$.
\end{theorem}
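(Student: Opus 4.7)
\emph{Setup.} The hypothesis furnishes a polynomial $F(\vecx,\epsilon) \in \F[\vecx,\epsilon]$ and an integer $M \geq 0$ with
\[
    F(\vecx,\epsilon) \;=\; \epsilon^M\, f(\vecx) \,+\, \epsilon^{M+1}\, Q(\vecx,\epsilon),
\]
where $F$ is computed by a size-$s$ circuit over the base field $\F = \F_q$, $q = p^a$. The crucial structural observation is that, since the circuit uses constants from $\F$ rather than from $\F(\epsilon)$, each gate at most doubles the individual degree, so $\deg_\epsilon(F) \leq 2^s$. In particular, the order of approximation satisfies $0 \leq M \leq 2^s$, so $\inangle{M}$ occupies at most $s$ bits and can be absorbed into polynomial-sized non-uniform advice.

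\emph{Main argument.} The plan is to express the coefficients of $f$ as specific coefficients of $F$ and then apply Valiant's criterion in both directions. Since $F$ has a size-$s$ circuit over $\F_q$, a standard coefficient-extraction argument---the finite-field converse of \Cref{prop:valiant-criterion} that already powered the last (Frobenius) case in the proof of \Cref{thm:vnp-closure} (cf.~\cite[Section~4.3]{Bur2000})---produces a $\#_p\mathsf{P}/\mathsf{poly}$ function $\phi_F$ such that, on binary-encoded inputs $(\inangle{\vece},\inangle{k})$, $\phi_F$ returns the boolean encoding of $[\vecx^\vece\epsilon^k]\, F$. Because $f(\vecx) = [\epsilon^M]\, F(\vecx,\epsilon)$, setting
\[
    \phi_f(\inangle{\vece}) \;\coloneqq\; \phi_F\!\left(\inangle{\vece},\, \inangle{M}\right),
\]
with $M$ hard-coded in the advice, yields a $\#_p\mathsf{P}/\mathsf{poly}$ coefficient function for $f$ of overall complexity $\poly(s,n,d)$. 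Since $f$ has $n$ variables and $\deg(f) \leq d$, the forward direction of Valiant's criterion over finite fields (\Cref{prop:valiant-criterion}) now places $f$ in $\VNP$ and, more precisely, exhibits the desired hypercube expression $f(\vecx) = \sum_{\veca \in \{0,1\}^m} g(\vecx,\veca)$ with both $m$ and $\size_\F(g)$ bounded by $\poly(s,n,d)$.

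\emph{Main obstacle.} The subtle point is controlling the order of approximation $M$: a priori, $M$ is part of the data defining the approximation and could be astronomically large. If $F$ were only promised to arise from a small circuit over $\F[\epsilon]$ or $\F(\epsilon)$---as in the definition of $\VPbar$---the best available bound on $M$ would be doubly exponential in the parameters, so $\inangle{M}$ would be too long to serve as polynomial-sized advice, and the above non-uniform reduction would collapse. The strength of the $\VPbar_\epsilon$ hypothesis is precisely that the circuit is over the base field $\F$, which forces $\deg_\epsilon(F) \leq 2^s$ and hence $\inangle{M}$ of length at most $s$. Once this bookkeeping is in place, the proof is essentially a direct application of Valiant's criterion; the analogous containment $\VPbar \subseteq \VNP$ remains open for exactly this reason.
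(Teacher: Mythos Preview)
Your approach is correct and is essentially the paper's: bound $\deg_\epsilon F \le 2^s$ so that $M$ fits in $s$ bits of non-uniform advice, extract the $\epsilon^M$-coefficient of $F$ by interpolation combined with boolean simulation of the size-$s$ circuit over $\F_q$ to obtain a $\#\P/\poly$ (resp.\ $\#_p\mathsf{P}/\poly$) coefficient function for $f$, and then apply Valiant's criterion to produce the hypercube sum.

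One small caveat: the ``converse of Valiant's criterion'' you invoke for $F$ is stated for $\VNP$ families, i.e., with a polynomial degree bound, whereas $F$ has $\epsilon$-degree up to $2^s$ and thus does not literally lie in $\VNP$. The paper accordingly does not cite that converse as a black box; it carries out the underlying interpolation-over-roots-of-unity argument directly, using that a size-$s$ circuit over a finite field can be evaluated in $\poly(s,\log q)$ boolean time irrespective of its (possibly exponential) degree. This is the step that makes the $\#\P/\poly$ bound go through, and your write-up would be tighter if you made it explicit rather than absorbing it into the citation.
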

\begin{proof-sketch}
    The main idea is to use Valiant's criterion (\Cref{prop:valiant-criterion}) to show that the polynomial $f = \sum_{\vece} c_{\vece} \vecx^\vece$ has coefficients $c_\vece$ that are ``easy to describe''. Notice that the polynomial $F$ approximating $f$ has a small circuit but exponential degree, where the high degree is only due to $\epsilon$.
    
    In the approximating expression for $f$ (\Cref{eq:alg-approx}), we interpolate $F$ on \emph{all} the variables (including $\epsilon$) using appropriate powers of unity. Consequently, each coefficient $c_\vece$ of $f$ can be written as a hypercube sum over a small-size high-degree circuit. When $\F$ is a finite field, we can \emph{simulate} this algebraic circuit using a small Boolean circuit. As a result, we further obtain that the coefficient function of $f$ is in $\#\P/\poly$, whence we can apply Valiant's criterion.
\end{proof-sketch}

As a corollary of \Cref{thm:deborder-presentable}, and the observation that the approximate circuit in \Cref{thm:factor-approx} is presentable, we can conclude that the families of low-degree factors of high-degree circuits are in $\VNP$, making partial progress towards the Factor Conjecture~(\Cref{conj:factor-conj}) which postulates that low-degree factors are in $\VP$.

\begin{corollary}
    Let $f \in \F_q[\vecx]$ be an $n$-variate polynomial over a finite field $\F_q$ of characteristic $p$. Suppose $g$ is a $\poly(n)$-degree irreducible factor of $f$ with multiplicity coprime to the characteristic $p$. If $\size(f) = s$, then $g$ can be written as
    $$g(\vecx) = \sum_{\veca \in \{0,1\}^m} u(\vecx,\veca),$$
    where $m$, and $\size(u(\vecx,\vecy))$ are bounded by $\poly(s,n,d)$.
\end{corollary}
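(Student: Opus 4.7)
The corollary packages two ingredients already stated in the excerpt: the $\VPbar$ upper bound for low-degree factors (\Cref{thm:factor-approx}) and the debordering theorem \Cref{thm:deborder-presentable}, which shows that $\VPbar_{\varepsilon} \subseteq \VNP$ over finite fields. The plan is to produce an approximating circuit for $g$ that lies in $\VPbar_{\varepsilon}$ (i.e., is \emph{presentable}, meaning its $\varepsilon$-coefficients are themselves small circuits over $\F_q$), and then feed this circuit into \Cref{thm:deborder-presentable}.

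First I would revisit the proof sketch of \Cref{thm:factor-approx}. Writing $f = g^e h$ with $\gcd(g,h)=1$, one picks a point where $g$ vanishes but $h$ and $\partial_y g$ do not, and considers the perturbation $F_{\varepsilon}(\vecx,y) := f(\vecx, y+\varepsilon) - f(\mathbf{0},\varepsilon)$. The critical identity $\partial_y f = e\, g^{e-1} h\, \partial_y g + g^e\, \partial_y h$ forces the factor of $e$ into the derivative at the base point, and over characteristic $p$ this factor is nonzero precisely when $p \nmid e$, which is exactly the hypothesis of the corollary. Under this condition the quadratic-rate Newton iteration of \Cref{lem:fast-ni} can be run over $\F_q[[\vecx,\varepsilon]]$ to produce, in $O(\log \deg(g))$ steps, a power-series root $\psi_{\varepsilon}$ of $F_{\varepsilon}$ that specializes at $\varepsilon=0$ to the power-series root $\varphi$ of $f$ shared with $g$. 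Truncating at degree $d$ and multiplying together the approximations of all the roots of $g$ yields an approximating circuit for $g$ of size $\poly(s,n,d)$.

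Second, I would verify presentability. Every intermediate expression in Newton iteration is a rational function in the $\partial_y F_{\varepsilon}(\mathbf{0},0)^{-1}$ and coefficients of $F_{\varepsilon}$; since division elimination (\Cref{lem:division-elimination}) and interpolation (\Cref{lem:interpolation}) are available and introduce only polynomial blow-up, the resulting circuit sits over $\F_q$ itself rather than requiring exotic constants from $\F_q[\varepsilon]$. Hence $g \in \VPbar_{\varepsilon}$ with approximating size $\poly(s,n,d)$. Applying \Cref{thm:deborder-presentable} directly to this approximating circuit yields the desired expression
\[
g(\vecx) \;=\; \sum_{\veca \in \{0,1\}^m} u(\vecx,\veca),
\]
with $m,\;\size(u)$ both $\poly(s,n,d)$, completing the argument.

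The main obstacle is the characteristic-$p$ extension of Bürgisser's approximation result: one must check that every step in the Newton-iteration construction of \Cref{thm:factor-approx} — the coordinate shift placing $g$ at the origin, the non-vanishing of $\partial_y F_{\varepsilon}(\mathbf{0},0)$, the division by it, and the assembly of $g$ from its approximated power-series roots — still goes through over $\F_q$. The coprimality $\gcd(e,p)=1$ is the single algebraic hypothesis needed to keep the derivative nonzero; once that is verified, the rest of the proof is a routine translation. The presentability check and the invocation of \Cref{thm:deborder-presentable} are then essentially bookkeeping.
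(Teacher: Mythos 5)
Your proposal matches the paper's own argument: the corollary is obtained exactly by observing that the Newton-iteration-based approximating circuit from \Cref{thm:factor-approx} is presentable (its $\F_q[\varepsilon]$-constants are themselves small circuits over $\F_q$, so $g$ lies in $\VPbar_{\varepsilon}$ with approximating size $\poly(s,n,d)$), and then invoking \Cref{thm:deborder-presentable} to convert this into a hypercube sum of a small circuit. Your identification of the coprimality hypothesis $p \nmid e$ as precisely what keeps $\partial_y F_{\varepsilon}$ nonzero in the perturbation step is the correct reason the characteristic-$p$ translation of Bürgisser's argument goes through, so the proposal is sound and essentially identical to the paper's route.
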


The use of interpolation and Valiant's criterion as in~\Cref{thm:deborder-presentable} does not seem helpful over infinite fields, and it is an open problem to extend these results to all fields.

\begin{questype}{Open Problem}
    Can we show that $\VPbar_{\varepsilon} \subseteq \VNP$ over \emph{any} field $\F$?
\end{questype}



\section{Newton iteration to Hensel lifting and back}
\label{sec:hensel-newton}

Hensel lifting (\Cref{sec:hensel-lifting}) and Newton iteration (\Cref{sec:newton-iteration}), though different when viewed from afar, are quite similar upon closer inspection.

Let $f(\vecx,y)$ be a monic (with respect to $y$) polynomial of degree $d$ with a root $g(\vecx)$ (with respect to $y$) of multiplicity one, i.e. $f(\vecx,g(\vecx))=0$, but $\partial_y f(\vecx,g(\vecx)) \neq 0$. When we wanted to find iteratively better approximations to $g$, we started with an approximation $g_0$ that was consistent with $g$ modulo the ideal $\inangle{\vecx}$ and used Newton's method to obtain a better approximation $g_1$ that was consistent with $g$ modulo $\inangle{\vecx}^2$. The polynomial $g$ being a root of $f$ implies that $f$ can be factored as $f(\vecx,y) = (y-g(\vecx))h(\vecx,y)$ for some polynomial $h$. So, we can instead view our Newton iteration process as a Hensel lift of the factorisation $f = (y-g_0)h_0 \bmod \inangle{\vecx}$ to the factorisation $f = (y-g_1)h_1 \bmod \inangle{\vecx}^2$. Since $g$ was a root of multiplicity one, the factors $(y-g(\vecx))$ and $h(\vecx,y)$ were coprime. In this sense, Newton iteration is a special case of Hensel lifting.

In the other direction, we can also view Hensel lifting as a special case of (multivariate) Newton iteration. Suppose that we have the factorisation $f(\vecx,y) = g(\vecx,y) h(\vecx,y)$. Let us write $f = \sum_{i=0}^d f_i y^i$, and similarly $g = \sum_{i=0}^{d_1} g_i y^i$, $h = \sum_{i=0}^{d_2} h_i y^i$ where $f_i,g_i,h_i \in \F[\vecx]$ for all $i$, and $d_1 + d_2 = d$. Noting that $f_d=1$, we can view this factorisation as a system of equations:
\begin{align*}
    f_0 &= g_0 h_0 \\
    f_1 &= g_0 h_1 + g_1 h_0 \\
    \vdots \\
    f_{d-1} &= g_{d_1-1}h_{d_2} + g_{d_1}h_{d_2-1} = g_{d_1-1} + h_{d_2-1}\,.
\end{align*}

For a new set of $d$ variables $\{\vecu,\vecw\} = \{u_0,\ldots,u_{d_1-1}, w_0,\ldots,w_{d_2-1}\}$, consider the $d$ equations 
\begin{align*}
    \varphi_0 &\coloneqq f_0 - u_0 w_0 \\
    \varphi_1 &\coloneqq f_1 - u_0 w_1 - u_1 w_0 \\
    \vdots \\
    \varphi_{d-1} &\coloneqq f_{d-1} - u_{d_1-1} - w_{d_2-1}.
\end{align*} 

The coefficients $(g_0,\ldots,g_{d_1-1},h_0,\ldots,h_{d_2-1})$ are a common zero of the equations $\varphi = (\varphi_0,\ldots,\varphi_{d-1})$. Given an approximate root $(\veca,\vecb) \in \F[\vecx]^d$ modulo the ideal $\inangle{\vecx}$, a \emph{multivariate generalisation} of Newton iteration~\cite{vG2013} gives a better approximation 
$$(\veca^*,\vecb^*) = (\veca,\vecb) - J^{-1} \varphi(\veca,\vecb)$$
modulo the ideal $\inangle{\vecx}^2$, where the matrix of polynomials $J = (\partial_t \varphi_i)_{0 \leq i \leq d-1, t \in \{\vecu,\vecw\}}$ is the \emph{Jacobian} of $\varphi$.\footnote{If we want to approximate a root with respect to a single variable $y$ of a single equation $\varphi(\vecx, y)$, the Jacobian is a $(1 \times 1)$ matrix $[\partial_y \varphi]$, and this is just the usual Newton iteration.} We can now view a Hensel lift of the factorisation $f = g^{(0)} h^{(0)} \bmod \inangle{\vecx}$ to $f = g^{(1)} h^{(1)} \bmod \inangle{\vecx}^2$ as a \emph{multivariate} Newton iteration step of improving an approximate root to the above set of equations modulo $\inangle{\vecx}$, to modulo $\inangle{\vecx}^2$. The Jacobian and the Sylvester Matrix (\Cref{def:resultant}) are the same up to permutation of rows and columns~\cite[Lemma 3.2]{CKS2019a}. Hence, the invertibility of the Jacobian is equivalent to the coprimality of $g$ and $h$~\cite[Exercise 15.21]{vG2013}. Thus, Newton iteration generalises Hensel lifting. 

More generally, the two methods can be derived from one another over valuation rings. This is what von zur Gathen~\cite{von1984} has to say:

\say{\emph{Note that while Yun~\cite{Yun1976} motivates the Hensel method as a special form of the Newton method (``Hensel meets Newton"), here the Newton method is a corollary of the Hensel method (``Hensel beats Newton").}}

This folklore connection between Hensel lifting and Newton iteration has also appeared explicitly in a few places~\cite{Zip1981, von1984, Art2022}. Chou, Kumar, and Solomon~\cite{CKS2019a} used this connection to give a simplified proof of Kaltofen's $\VP$ closure result using the multivariate Newton iteration. For more restricted models, the method that is convenient while factoring depends on the efficiency of the factoring primitive that the model affords.  

\section{Factoring via Lagrange Inversion}
\label{sec:lagrange-inversion}

Given a power series $z=f(y)$ it is natural to ask if the inverse function $f^{-1}(z)$ has a power series as well. Lagrange was the first to attempt this {\em reversion of series} in the late 18th century to study planetary motion, which was generalised by Hans Heinrich B\"urmann.

\begin{theorem}[Lagrange inversion formula]
\label{thm:lagrange-inversion}
    Let $\F$ be a field of characteristic zero. Let $f(y) \in \F\llbracket y \rrbracket$ be a formal power series with a simple root at $y=0$, meaning $f(0)=0$ and $\partial_y f(0) \ne 0$.
    Then, the unique compositional inverse $g(z) \in \F\llbracket z \rrbracket$ satisfying $f(g(z)) = z$ is given by
    \begin{equation}\label{eq:inversion}
        g(z) = \sum_{n \ge 1} c_n z^n,
    \end{equation}
    where
    \[
        c_n = \frac{1}{n} \coef_{y^{n-1}} \left(\frac{y}{f(y)}\right)^n.
    \]
\end{theorem}

There are many proofs of the Lagrange inversion formula (see e.g.,~\cite[Section 5.4]{Sta2024}). We sketch the key insights of a straightforward derivation of the above theorem using complex analysis and contour integration~\cite{WW2021}. 

\begin{proof}[Proof Sketch by Complex Analysis]

Let $C$ be a small enough {\em circle} around $0$; let $u$ refer to a point in $C$. Without loss of generality,~$f$ is injective on $C$. Let $\zeta \coloneqq f(u)$ refer to a point in $f(C)$. It follows: $d\zeta = f'(u)\, du$ and $u = f^{-1}(\zeta) = g(\zeta)$.

$$g(z) \;=\; \frac{1}{2\pi i} \oint \limits_{f(C)} \frac{g(\zeta)}{\zeta-z} d\zeta \;=\; \frac{1}{2\pi i} \oint \limits_{C} \frac{uf'(u)}{f(u)-z} du \,.$$
We can write this as a formal power series in $z$,
$$g(z)\;=\; \sum_{n\ge0} z^n \left( \frac{1}{2\pi i} \oint \limits_{C} \frac{uf'(u)}{f(u)^{n+1}} du \right) \,.$$
Thus, $$c_n \,=\, \frac{1}{2\pi i} \oint \limits_{C} \frac{uf'(u)}{f(u)^{n+1}} du.$$
By {\em integration by parts}, it simplifies to
$$c_n\,=\, \frac{1}{2\pi i} \oint \limits_{C} \frac{1}{nf(u)^{n}} du.$$
By the {\em Residue Theorem}, we get
$$c_n\,=\, \frac{1}{n} \coef_{y^{-1}}\left( \frac{1}{f(y)^n}\right) \,=\, \frac{1}{n}\coef_{y^{n-1}}\left( \frac{y}{f(y)}\right)^n \,.$$
\end{proof}

The expression for the coefficient above is purely algebraic and holds over any field of characteristic zero, and not just the complex numbers. To that end, we now discuss a more elementary proof by Stanley \cite[First proof of Theorem 5.4.2]{Sta2024}.

\begin{proof}[Proof Sketch by Algebra]
    Let the compositional inverse be written as $g(z) = \sum_{m \ge 1} c_m \cdot z^m$. 
    Since $g(z)$ is the compositional inverse of $f(y)$, substituting $z = f(y)$ gives $g(f(y)) = y$. Substituting this into the series for $g$ yields:
    \[
        y \;=\; \sum_{m \ge 1} c_m \cdot f(y)^m
    \]
    Differentiating both sides with respect to $y$ gives:
    \[
        1 \;=\; \sum_{m \ge 1} m \cdot c_m \cdot f(y)^{m-1} \cdot \partial_y f(y)
    \]
    For any $n\in \N$, to isolate the $n$-th coefficient, we multiply both sides by $(y/f(y))^n$:
    \[
        \left(\frac{y}{f(y)}\right)^n \;=\; \sum_{m \ge 1} m \cdot c_m \cdot y^n \cdot f(y)^{m-n-1} \cdot \partial_y f(y).
    \]

    We can split the sum on the right-hand side into two parts: the terms where $m \neq n$, and the single term where $m = n$. Notice that when $m \neq n$, the expression $f(y)^{m-n-1} \partial_y f(y)$ is the derivative of $\frac{1}{m-n} f(y)^{m-n}$. 
    \[
        \left(\frac{y}{f(y)}\right)^n \;=\; \left( \sum_{m \neq n} \frac{m \cdot c_m}{m - n} \cdot y^n \cdot \partial_y \left( f(y)^{m-n} \right) \right) + n \cdot c_n \cdot y^n \cdot \frac{\partial_y f(y)}{f(y)}
    \]

    Next, we take the coefficient of $y^{n-1}$ on both sides. A fundamental property of formal Laurent series is that the derivative of any series never contains a $y^{-1}$ term. Therefore, the entire sum of derivatives vanishes to give:
    \[
        \coef_{y^{n-1}} \left(\frac{y}{f(y)}\right)^n \;=\; n \cdot c_n \cdot \coef_{y^{-1}} \frac{\partial_y f(y)}{f(y)}
    \]

    To evaluate the remaining residue on the right, recall that $f(y)$ has a simple root at $y=0$. We can write $f(y) = y \cdot h(y)$ where $h(0) \neq 0$. Using the product rule:
    \[
        \frac{\partial_y f(y)}{f(y)} \;=\; \frac{h(y) + y \partial_y h(y)}{y h(y)} \;=\; \frac{1}{y} + \frac{\partial_y h(y)}{h(y)}
    \]
    Since $h(0) \neq 0$, the term $\frac{\partial_y h(y)}{h(y)}$ is a standard power series with no negative powers. 
    Thus, the coefficient of $y^{-1}$ is exactly $1$. 
    Therefore,
    \[
        n \cdot c_n \; =\; \coef_{y^{n-1}} \left(\frac{y}{f(y)}\right)^n
    \]
\end{proof}

Lagrange Inversion can be used as the basis of a \emph{unifying} factoring technique for all the models of computation we have seen previously. As an important example, we consider algebraic \emph{formulas}, a model slightly weaker than circuits (\Cref{subsec:small-circuits}) and branching programs (\Cref{subsec:abp}), but nonetheless extremely natural.

\subsection{Small algebraic formulas \& Shallow circuits}
\label{subsec:formula-factoring}

The class $\VF$ is a natural restriction of $\VP$ where the graph underlying the circuit is a tree (\Cref{fig:alg-ckt-formula}). Since a formula of size $s$ can never compute a polynomial of degree greater than $O(s)$, we do not need to impose any additional degree restriction.

\begin{definition}[$\VF$]
    A family of polynomials $f=(f_n)$ is said to be in $\VF$ over the field $\F$ if the minimum size of the \emph{formula} (over $\F$) computing $f_n$ is bounded by a polynomial function in the number of variables $n$.
\end{definition}

\begin{remark}
    It is not hard to see that $\VF \subseteq \VBP \subseteq \VP$. None of the containments are known to be strict.
\end{remark}

The factors obtained via monic Hensel lifting (\Cref{subsec:small-circuits}) turn out to require polynomial divisions with remainder (essentially circuits), even if we start with a formula. We do not know how to convert circuits to formulas without a quasi-polynomial blow up in the formula size. The best result we knew in this direction, until recently, was the factor closure of quasi-polynomial sized formulas, first shown by Dutta, Saxena, and Sinhababu~\cite[Theorem 3]{DSS2022}. We state here the version from the work of Chou, Kumar, and Solomon~\cite{CKS2019} which gives an alternative proof of the same result using the ideas in \Cref{subsec:low-depth}. 

\begin{theorem}[\protect{\cite[Theorem 9.1]{CKS2019}}]
    Let $f(\vecx)$ be a polynomial of degree $d$ in $n$ variables which can be computed by a formula of size $s$. If $g(\vecx)$ is a degree-$r$ factor of $f$, then it can be computed by a formula of size $\poly(s,n,d,r^{O(\log r)})$.
\end{theorem}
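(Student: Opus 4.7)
The plan is to adapt the Chou--Kumar--Solomon strategy for \Cref{thm:cks-low-depth} to the formula setting: replace their depth-$3$ reduction of the auxiliary polynomial $A_r$ by the VSBR depth reduction tailored for formula size. Dropping the constant-depth constraint allows the sharper VSBR reduction, which is precisely what brings the exponent down from $r^{O(\sqrt{r})}$ to $r^{O(\log r)}$.

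I first carry out the standard preprocessing. Distinguish one variable $y$ among the $n$ variables, and apply a random invertible shift $\vecx \mapsto \vecx + \vec\alpha\, y + \vec\beta$ so that $f$ (and hence every factor, up to rescaling) is monic in $y$, and $f$ splits over $\F[[\vecx]][y]$ into linear factors $y - \varphi_i(\vecx)$ with $\varphi_i \in \F[[\vecx]]$ having pairwise distinct non-zero constant terms $\mu_i$. Since $g$ is monic in $y$ after this shift, $g(\vecx, y) = \prod_{i \in S}(y - \varphi_i(\vecx))$ for some subset $S$ of size at most $r$. If $g$ occurs in $f$ with multiplicity $e > 1$, I reduce to simple roots by replacing $f$ with $\partial_y^{e-1} f$, which is computable as a formula of size $\poly(s,d)$ via $y$-coefficient extraction and recombination.

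The heart of the argument is slow Newton iteration (\Cref{lem:slow-ni}) on $f$, seeded at each $\mu_i$ with $i \in S$. After $r$ iterations, the polynomial $\varphi_r^{(i)}$ satisfies $\varphi_r^{(i)} \equiv \varphi_i \pmod{\inangle{\vecx}^{r+1}}$, and by the structural observation of Chou--Kumar--Solomon,
\[
\varphi_r^{(i)} \;=\; A_r^{(i)}\bigl(\partial_y^0 f,\, \partial_y^1 f,\, \ldots,\, \partial_y^r f\bigr),
\]
where $A_r^{(i)}$ is an $(r+1)$-variate polynomial of degree at most $r$ computable by a circuit of size $\poly(r)$. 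Applying VSBR depth reduction to this low-degree, polynomial-size circuit yields a circuit of depth $O(\log^2 r)$ and size $\poly(r)$, which in turn unfolds to a formula of size $r^{O(\log r)}$ for $A_r^{(i)}$. Each partial derivative $\partial_y^k f$ admits a formula of size $\poly(s,d)$: writing $f = \sum_{j=0}^d f_j(\vecx) y^j$, each $f_j$ is obtained by Lagrange interpolation in $y$ as a formula of size $O(sd)$, whence $\partial_y^k f = \sum_{j \ge k} j(j-1)\cdots(j-k+1) f_j(\vecx) y^{j-k}$ has formula size $\poly(s,d)$. Composing the $r^{O(\log r)}$-size formula for $A_r^{(i)}$ with these partial-derivative formulas yields a formula for $\varphi_r^{(i)}$ of size $r^{O(\log r)} \cdot \poly(s,d)$. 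Truncating to degree $\le r$ (via the substitution $\vecx \mapsto z\vecx$ followed by interpolation in $z$) adds at most a $\poly(r)$ factor, producing $\tilde\varphi_i$. Finally, multiplying $\prod_{i \in S}(y - \tilde\varphi_i)$ with $|S| \le r$ and inverting the shift yields $g$, with total formula size $\poly(s,n,d,\,r^{O(\log r)})$.

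The main obstacle is to argue that $r^{O(\log r)}$ is the only super-polynomial cost. This rests on two delicate points: (i)~VSBR depth reduction applied to the low-degree, polynomial-size circuit $A_r^{(i)}$ must be converted all the way to a \emph{formula} of size $r^{O(\log r)}$, not merely a log-depth circuit, by unfolding the depth-$O(\log^2 r)$ circuit; and (ii)~partial derivatives with respect to the single distinguished variable $y$ are extractable from the $y$-coefficients of $f$ in formula size $\poly(s,d)$, sidestepping the compounding blow-up that iterated symbolic differentiation of formulas would otherwise incur.
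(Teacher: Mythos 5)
Your proposal follows essentially the same route as the proof the paper points to for this statement (Chou--Kumar--Solomon's alternate proof via the ideas of \Cref{subsec:low-depth}): Newton iteration together with the structural lemma writing $\varphi_r$ as a $\poly(r)$-size, degree-$r$ polynomial $A_r$ in the partial derivatives of $f$, with the depth-$3$ squashing of $A_r$ (which gives $r^{O(\sqrt{r})}$) replaced by a depth-reduction/circuit-to-formula conversion giving $r^{O(\log r)}$, composed with $\poly(s,d)$-size formulas for the derivatives. The only glossed points are sketch-level and match the granularity of the paper's own sketches: a final truncation modulo $\inangle{\vecx}^{r+1}$ is needed after multiplying the truncated roots, and one should reduce to irreducible $g$ (or handle each root's multiplicity in $f$, not just the multiplicity of $g$ as a factor, since $g$ may share roots with its cofactor) before applying the derivative trick.
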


To avoid divisions, a reasonable attempt would be to extend the techniques of Sinhababu and Thierauf for factoring branching programs (\Cref{subsec:abp}). All but the last step of solving a linear system (equivalently, computing a determinant) can indeed be done using small formulas. Unfortunately, we do not know of any $\poly(n)$-sized formulas for the $n \times n$ symbolic determinant. Thus the question of factor closure for formulas remained open, till a breakthrough appeared in 2025 \cite{BKRRSS2025}. This paper eliminated the need for iterations, unlike Newton or Hensel. It is based on a fundamental, analytic result on the power series root due to Furstenberg \cite[Prop.2]{Fur1967}. As the authors also note~\cite[Section 1.2.1]{BKRRSS2025}, we can alternatively use Lagrange inversion (\Cref{thm:lagrange-inversion}), whose surprising connection to root-finding we now present. 

\begin{theorem}[\protect{\cite[Corollary 1.6]{BKRRSS2025}}]
\label{thm-root-in-BKRRSS25}
    Let $\FF$ be a characteristic zero field and $F(x,y) \in \FF[x,y]$. We are interested in a {\em simple} root $\varphi(x)\in \FF\llbracket x \rrbracket$ such that ~$F(x,y) = (y-\varphi)\cdot Q(x,y)$ with $\partial_y F(0,0) =1$ and $\varphi(0)=0$. Then,
    \begin{equation}\label{eqn-Furstenberg}
        \varphi(x) \,=\, \sum_{m\ge1}\coef_{y^{m-1}}\left[\frac{(y-F(x,y))^m}{m}\right]\,.        
    \end{equation}
\end{theorem}

\begin{proof-sketch}
    The proof relies on identifying the root $\varphi(x)$ as a specific evaluation of a formal power series derived via the Lagrange inversion formula.
    Define the auxiliary function:
    \[
        z \;=\; \frac{y}{y - F(x,y)} \,=: f(y)\,.
    \]
    Rearranging this, we get $y = z \cdot (y - F(x,y))$. Observe that if we substitute $z=1$, we obtain $y = y - F(x,y)$, which implies $F(x,y) = 0$. Thus, evaluating the inverse function $y(z) \coloneqq f^{-1}(z)$ at $z=1$ should yield the well-defined root $\varphi(x) \in \F\llbracket x \rrbracket$.

    We apply \Cref{thm:lagrange-inversion} to express $y(z)$ as a power series in $z$.
    Note that $y/f(y) = y - F(x,y)$.
    Substituting this into the inversion formula (\Cref{eq:inversion}), we obtain:
    \begin{align}
        y(z) \;&=\; \sum_{m \ge 1} \frac{z^m}{m} \cdot \coef_{y^{m-1}}\left[ \left( \frac{y}{f(y)} \right)^m \right] \nonumber \\
        &=\; \sum_{m \ge 1} \frac{z^m}{m} \cdot \coef_{y^{m-1}}\left[ (y - F(x,y))^m \right] \in \F[x]\llbracket z \rrbracket. \label{eq:lif-application}
    \end{align}

    Now we try to `pass' from the power series convergence notion in $\FF[x]\llbracket z \rrbracket$ to that in $\FF\llbracket x \rrbracket$. To recover $\varphi(x)$, we must justify setting $z=1$ in \Cref{eq:lif-application}. This operation is valid in the ring $\FF\llbracket x \rrbracket$ if the series converges.
    
    Recall the hypothesis $\partial_y F(0,0) = 1$ and $F(0,0) = 0$.
    This implies that the linear term of $F(x,y)$ in $y$ is exactly $y$.
    Consequently, the difference $y - F(x,y)$ contains no constant term and no linear term in $y$ (when $x=0$).
    We can therefore write:
    \[
        y - F(x,y) \;\in\; \inangle{x, y^2}.
    \]
    Now, consider the term $(y - F(x,y))^m$. It is a sum of monomials of the form $x^a (y^2)^b$ where $a+b = m$.
    We are interested in the coefficient of $y^{m-1}$.
    For a term to contribute to $y^{m-1}$, we must have $2b \le m-1$.
    Therefore, the $x$-degree $a$ must satisfy $a \ge (m+1)/2$.
    
    Since the degree of $x$ grows with $m$, the series in \Cref{eq:lif-application} converges well in $\FF\llbracket x \rrbracket$ when $z=1$.
    Setting $z=1$ yields precisely the formula in \Cref{eqn-Furstenberg}, and by construction, this value satisfies $F(x, \varphi(x)) = 0$.
\end{proof-sketch}

With familiarity of the factoring techniques discussed in \Cref{sec:newton-iteration} and the structural results from \Cref{subsec:struct-results}, we can now prove \Cref{thm:vp-closure} in a single step rather than using iterative techniques. 
In addition to providing a simpler proof of known results, this \emph{magical} identity, in fact, resolves long-standing factor-closure questions. 
In what follows, we prove that bounded depth circuits are closed under factoring. We will first sketch the proof for algebraically closed fields of characteristic zero, where the argument is relatively straightforward, and then discuss how to extend it to arbitrary fields.

\begin{theorem}[\protect{\cite[Theorem 4.3]{BKRRSS2025}}]{\em\bf (Shallow circuit)}
\label{thm:const-dp-closure}
    Let $\F$ be an algebraically closed field of characteristic zero, and $f \in   \F[\vecx, y]$ be an $(n+1)$-variate, degree $d$ polynomial computable by a circuit of size $s$ and depth $\Delta$. 
    Then, any factor $g(\vecx,y)$ of $f$ is computable by a circuit of size $\poly(s, n, d)$ and depth $O(\Delta)$.
\end{theorem}

\begin{proof-sketch}
    The main structure of the proof follows the template from \Cref{sec:newton-iteration}. We begin by preprocessing $f$ so that all the roots $\varphi_i \in \F\llbracket\vecx\rrbracket$ in the factorisation $f(\vecx, y) = \prod_{i=1}^d (y-\varphi_i(\vecx))$ are simple. We can assume $g$ shares the first $r$ roots with $f$:
    \begin{equation}\label{eq:factor-g}
        g(\vecx,y) = \prod_{i=1}^r (y-\varphi_i(\vecx)). 
    \end{equation}
        
    In particular, $g(\mathbf{0}, y) = \prod_{i=1}^r (y-\varphi_i(\mathbf{0}))$ where $\mu_i \coloneqq \varphi_i(\mathbf{0}) \in \F$ since $\F$ is closed. In order to lift $\mu_i$ to $\varphi_i$, we would like to use \Cref{thm-root-in-BKRRSS25}. By scaling $f$, we can ensure that $\partial_y f(\mathbf{0},0)=1$. Notice that for any $\mu_i$, $\tilde{\varphi}_i \coloneqq \varphi_i - \mu_i$ is a simple root of $\tilde{f}_i(\vecx,y) \coloneqq f(\vecx, y+\mu_i)$ with constant term $\tilde{\varphi}_i(\mathbf{0})=0$. 
    
    We can now apply \Cref{thm-root-in-BKRRSS25} to get a power series expression for each of the $\tilde{\varphi}_i$'s and hence, also the $\varphi_i$'s. From the discussion in the latter part of the proof of \Cref{thm-root-in-BKRRSS25}, it can be seen that terms with index $m > 2r$ contribute only to degrees strictly greater than $r$ in the $\vecx$ variables. Hence, we can truncate $\varphi_i$'s to index at most $2r$. Given that $f$ (and thus, $\tilde{f}$) has a size $s$ circuit of depth $\Delta$, all these operations can be performed in size $\poly(s,n,d)$ and depth $\Delta+O(1)$ with the help of interpolation (\Cref{lem:interpolation}). Plugging in these circuits for the roots into \Cref{eq:factor-g} and extracting the components of degree at most $r$ gives us the result.
        
\end{proof-sketch}

As seen earlier (\cref{rem:roots-in-ext}), if the field $\F$ is not algebraically closed, the roots $\{\mu_i\}_{i \in [r]}$ of $g(\mathbf{0}, y)$ come from a very high degree splitting field $\K \supseteq \F$. Notice that the circuit for the power series roots $\{\varphi_i\}$ (and hence, also $g$) just uses constants from $\F$ if we treat the roots $\{\mu_i\}$ as input variables. Moreover, the coefficient of $y^j$ in \Cref{eq:factor-g} is the $j$-th elementary symmetric polynomial $e_j(\varphi_1,\ldots,\varphi_r) = \sum_{S \subseteq [r], |S|=j} \prod_{i \in S} \varphi_i$ in the power series roots. One can now use a result by Andrews and Wigderson~\cite[Theorem I.8]{AW2024} on efficiently computing elementary symmetric polynomials to obtain a circuit for these coefficients (and hence $g$) over $\F$. This is what the authors of \cite{BKRRSS2025} do. We describe a more direct approach based on a related recent result by the same authors~\cite{BKRRSS-gcd2025}.

We view $g(\vecx,y)$ (which is truly a polynomial over $\F$) as a polynomial (over $\F[\vecx,y]$) in the \emph{root variables} $\vecmu \coloneqq (\mu_1,\ldots,\mu_r)$. Note that $g$ is \emph{symmetric} with respect to the roots $\{\mu_i\}$: any permutation of the roots merely permutes the factors of the product in \Cref{eq:factor-g}. The fundamental theorem of symmetric polynomials (e.g., see~\cite[Theorem IV.6.1]{Lang2002} and~\cite{BC2017}) states that there is a polynomial $G$ (over $\F[\vecx,y]$) such that $g(\vecmu) = G(e_1(\vecmu), \ldots, e_r(\vecmu))$, where $e_j(\vecmu)$ is the $j$-th elementary symmetric polynomial. Crucially, the $e_j(\vecmu)$'s are the (signed) coefficients of $g(\mathbf{0},y)$. Therefore, we can instead write $g$ as a polynomial in these coefficients, which are in the base field $\F$. We know from the proof of \Cref{thm:const-dp-closure} that $g$ has a circuit of size $\poly(s, n, d)$ and depth $O(\Delta)$, and as it turns out, so does $G$. This is the bounded-depth analogue of a similar result by Bl\"{a}ser and Jindal~\cite[Theorem 4]{BJ2019} for general circuits. 


\begin{theorem}[\protect{\cite[Theorem 1.2]{BKRRSS-gcd2025}}]
\label{thm:sym-poly-complexity}
    Let $g \in \F[\vecx, y][\vecmu]$ be a polynomial of degree $d$ which is symmetric in $\vecmu$ variables and computable by a circuit of size $s$ and depth $\Delta$. Let $G \in \F[\vecx, y][\vecz]$ be the unique degree $d$ polynomial such that $f = G(e_1(\vecmu), \dots, e_r(\vecmu))$, where $e_1, \dots, e_r$ are the elementary symmetric polynomials. Then $G$ is computable by a circuit of size $\poly(s,r,d)$ and depth $\Delta + O(1)$.
\end{theorem}

\begin{remark}
    We note that \cite[Theorem 1.2]{BKRRSS-gcd2025} assumes the symmetric polynomial is an element of $\F[\vecmu]$. The slight generalisation presented here follows directly from their proof and is more convenient  for our application.
\end{remark}

The above theorem shows that $G$ also has a circuit of size $\poly(s, n, d)$ and depth $\Delta + O(1)$ over $\F[\vecx,y]$. All that remains is to undo the preprocessing steps to obtain the factors of $f$ over the field $\F$. 


\begin{remark}
    There is a version of Lagrange's inversion formula for fields of positive characteristic as well (see e.g., Theorem 1 and Remark 2 in~\cite{Hu2016}). If the field $\F$ is polynomially large in $snd$, \Cref{thm:sym-poly-complexity} also holds. But due to inseparability issues (\Cref{rem:multiplicity}), if the field $\F$ is of characteristic $p$, then the closure only holds up to $p$-th powers~\cite[Theorem 5.1]{BKRRSS-gcd2025}.
\end{remark}

Inspecting the proof of \Cref{thm:const-dp-closure}, one can see that it also works for more general models like formulas, branching programs, and circuits. In particular, we have the closure of $\VF$ under factorisation.

\begin{theorem}[\protect{\cite[Theorem 1.1]{BKRRSS2025}}]
\label{thm:formula-closure}
    Let $\F$ be a field of characteristic zero, and $f$ be an $n$-variate, degree $d$ polynomial over $\F$ computed by an algebraic formula of size $s$. Then, any factor $g$ of $\F$ can be computed by a formula of size $\poly(s,n,d)$ over $\F$.
\end{theorem}

\begin{questype}{Open Problem}
    Is the class $\VF$ closed under taking factors over fields of positive characteristic?
\end{questype}

\section{Factoring via Convex Geometry}
\label{sec:geometry}

We will now study a model that is \emph{not} closed under factoring in general, but has been a natural and elegant crucible of factoring ideas for decades. 

\subsection{Sparse polynomials}
\label{subsec:sparse-factorisation}

\emph{Sparse polynomials} are perhaps the simplest class that one can study. The sparsity of a polynomial $f$, denoted by $\norm{f}$, is the number of monomials in it. Alternatively, one can think of the sparsity as the size of a $\Sigma\Pi$ circuit computing $f$. The work of von zur Gathen and Kaltofen~\cite{vK1985}, which initiated the study of factoring sparse polynomials, gave a \emph{randomised} algorithm that outputs a factor in time polynomial in the sparsity of the factor. Naturally, it makes sense to ask if factors of sparse polynomials are sparse. Unfortunately, as they showed, this is \emph{not} true in general. The sparsity of a factor can be superpolynomial in the sparsity of the original polynomial.

\begin{example}[\protect{\cite{vK1985}}]
\label{eg:quasi-poly-blowup}
    The polynomial $f = \prod_{i=1}^n (x_i^d - 1)$ has sparsity $\norm{f} = 2^n$, but one of its factors $g = \prod_{i=1}^n (1+x_i+\ldots + x_i^{d-1})$ has sparsity $\norm{g} = d^n = \norm{f}^{\log d}$.
\end{example}

If the individual degree is comparable to the number of variables, then the blow-up can be exponential.

\begin{example}[\protect{\cite{BSV2020}}]
\label{eg:exp-blowup}
    Over the field $\F_p$, the polynomial $f = \sum_{i=1}^n x_i^p$ has sparsity $\norm{f} = n$, but the factor $g = \left( \sum_{i=1}^n x_i \right)^d$, for $0 < d < p$ has sparsity $\norm{g} = \binom{n+d-1}{d} \approx \norm{f}^d$.
\end{example}

The above examples are essentially the ``worst'' that we have, and it is natural to wonder whether they are indeed the worst \emph{possible}. One might then still hope that factors of a polynomial with \emph{bounded} individual degree are sparse. 

\begin{questype}{Open Problem}
\label{conj:sparse-fac}
    Let $f = g \cdot h$ be a polynomial with bounded (constant) individual degree. Then, does the following hold: $\norm{g} = \poly(\norm{f})$?
\end{questype}

Tools from convex geometry have been useful in recent progress in studying this question. For a polynomial $f = \sum_{\vece} c_{\vece} \vecx^\vece$, consider the set of exponent vectors in its support, 
$$\sup(f)\; \coloneqq\; \{\vece : c_{\vece} \neq 0\} \subseteq \Z^n.$$ 

The convex hull of the points in the support denoted $\Conv(\sup(f))$ is called the \emph{Newton Polytope} corresponding to $f$:

$$P_f \coloneqq \Conv(\sup(f)) = \left\{\sum_{\vece} \alpha_{\vece} \vece : \sum_{\vece} \alpha_{\vece} = 1, 0 \leq \alpha_{\vece} \in \R, \vece \in \sup(f)\right\} \subseteq \R^n.$$

A \emph{vertex} of $P_f$ is a point in the polytope that \emph{cannot} be written as a \emph{non-trivial} convex combination (i.e., $\alpha_{\vece} < 1$ for all $\vece$) of points in $P_f$. We will denote the vertex set of a polytope $P$ by $V(P)$. Ostrowski~\cite{Ost1999} observed that for polynomials $g$ and $h$,
$$P_{gh}\; =\; P_g + P_h,$$
where the addition is a \emph{Minkowski sum}, consisting of all points $\veca + \vecb$, such that $\veca \in P_g$ and $\vecb \in P_h$. It can be shown that
\begin{equation}\label{eq:newt-poly}
    \max\{{|V(P_g)|,|V(P_h)|}\}\; \leq\; |V(P_g + P_h)|\; \leq\; |V(P_g)| \cdot |V(P_h)|.    
\end{equation}

The upper bound on $|V(P_g + P_h)|$ is straightforward. For the lower bound, see~\cite[Proposition 3.2]{BSV2020} and Appendix K in the book of Schinzel~\cite{Sch2000}. This suggests a way to prove sparsity bounds. For a polynomial $f=gh$, if we can show $g$ is `dense' by showing a lower bound on $|V(P_g)|$, then by the above inequality, we also get a lower bound on $|V(P_g + P_h)|$, and in turn the sparsity of $f$. In other words, if $f$ is sparse, so is $g$.

Consider the case when $f=gh$ and $g$ is \emph{multilinear}, i.e., the individual degree of $g$ is at most $1$. A moment's thought shows that every monomial of $g$ corresponds to a vertex, i.e., $\sup(g)=V(P_g)$. 
Combining this with the lower bound in \Cref{eq:newt-poly}, we get
\begin{equation}
    \label{eq:mult-sparsity}
    \norm{f}\; \geq \;|V(P_f)| \;=\; |V(P_g + P_h)| \geq |V(P_g)|\; =\; \norm{g}.
\end{equation}

Therefore, if $f=gh$ is a multilinear polynomial, then $g$ is also multilinear, and we get $\norm{g} \leq \norm{f}$ from above, showing that sparse multilinear polynomials are closed under factoring. Shpilka and Volkovich~\cite{SV2010} gave an efficient \emph{deterministic} algorithm for factoring sparse multilinear polynomials. Volkovich~\cite{Vol2015} used the sparsity bound in \Cref{eq:mult-sparsity} to first extend their result to sparse polynomials that split into multilinear factors. In a later work~\cite{Vol2017}, he proved that factors of \emph{multiquadratic} polynomials are also sparse, and gave an efficient deterministic algorithm to factor such polynomials.

However, in general, the size of the vertex set $|V(P_g)|$ could be much smaller than the sparsity $\norm{g}$ of the polynomial. The polynomial $g=\left( \sum_{i=1}^n x_i \right)^d$ in \Cref{eg:exp-blowup} has only $n$ vertices in its Newton Polytope (corresponding to $x_1^d,\ldots, x_n^d$), but has $O(n^d)$ monomials. If the individual degree of a polynomial is bounded, Bhargava, Saraf, and Volkovich~\cite{BSV2020} showed that the vertex set is not too small either.

\begin{theorem}
\label{thm:vertex-bd}
    Let $\F$ be an arbitrary field, and let $g$ be a polynomial over $\F$ in $n$ variables of individual degree $d$. Then,
    $$|V(P_g)| \;\geq\; \norm{g}^{1/O(d^2\log n)}.$$
\end{theorem}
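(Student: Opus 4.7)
The plan is induction on the number of variables $n$, establishing the equivalent bound $\norm{g} \leq |V(P_g)|^{cd^2 \log n}$ for a suitable absolute constant $c$, via dimension-halving. The base case, $n = O(1)$, follows from the trivial bound $\norm{g} \leq (d+1)^n$ together with $|V(P_g)| \geq 2$ whenever $g$ is not a monomial, since $|V(P_g)|^{cd^2}$ then dominates $(d+1)^{O(1)}$ for any sufficiently large $c$.

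For the inductive step, partition the $n$ variables into two halves $\vecx \cup \vecy$ of size $n/2$ each and decompose $g(\vecx, \vecy) = \sum_\beta h_\beta(\vecx)\, \vecy^\beta$, so that $\norm{g} = \sum_\beta \norm{h_\beta}$. Two observations would control this sum. First, the number of nonzero $h_\beta$ equals $|\pi_{\vecy}(\sup(g))| \leq |\pi_{\vecy}(P_g) \cap \Z^{n/2}|$; since each vertex of the projected polytope is the image of some vertex of the original, $\pi_{\vecy}(P_g)$ has at most $k := |V(P_g)|$ vertices, and the inductive hypothesis at $n/2$ variables yields $|\pi_{\vecy}(\sup(g))| \leq k^{cd^2 \log(n/2)}$. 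Second, for any $\beta$ that is a vertex of $\pi_{\vecy}(P_g)$, a lifting argument shows every vertex $\alpha$ of $P_{h_\beta}$ gives a vertex $(\alpha, \beta)$ of $P_g$: a nontrivial convex combination witnessing $(\alpha,\beta)$ as non-extremal in $P_g$ would, projected to $\vecy$, violate $\beta$'s extremality in $\pi_{\vecy}(P_g)$. Hence $|V(P_{h_\beta})| \leq k$ and induction gives $\norm{h_\beta} \leq k^{cd^2 \log(n/2)}$.

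The main obstacle is that the straightforward combination $\norm{g} \leq |\pi_{\vecy}(\sup(g))| \cdot \max_\beta \norm{h_\beta}$ yields $\norm{g} \leq k^{2cd^2 \log(n/2)}$, producing the recurrence $f(n) = 2 f(n/2)$ whose solution is $f(n) = \Theta(n)$---whereas the theorem demands $\Theta(\log n)$. To hit the target, the halving step must cost only a single inductive invocation, i.e.\ the recurrence must take the form $f(n) = f(n/2) + O(d^2)$. I would pursue this via a charging scheme: associate each monomial $(\alpha, \beta) \in \sup(g)$ with a distinguished vertex of $P_g$ so that each vertex is charged at most $k^{O(d^2)}$ monomials. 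The $k^{O(d^2)}$ budget intuitively reflects that edges emanating from any vertex of $P_g \subseteq [0,d]^n$ have coordinates in $[-d, d]$, bounding the lattice structure in its tangent cone.

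The hardest step will be controlling $\beta$'s that are \emph{not} vertices of $\pi_{\vecy}(P_g)$, where the lifting argument breaks down and $|V(P_{h_\beta})|$ can a priori be as large as $k^2$ via the edge-slicing bound. I expect the correct resolution to involve a face-stratification of $\pi_{\vecy}(P_g)$: partition the non-vertex $\beta$'s by the minimal face containing them, and consolidate the contributions from each face into an auxiliary polynomial whose Newton polytope inherits at most $k^{O(1)}$ vertices from the face structure. Making this consolidation rigorous---while ensuring that the combined blow-up in the exponent is additive $O(d^2)$ per halving rather than multiplicative---is where the delicate combinatorial accounting must reside.
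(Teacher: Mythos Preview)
Your proposal has a genuine gap that you yourself identify but do not close. The dimension-halving induction you set up produces the recurrence $f(n)=2f(n/2)$, which solves to $\Theta(n)$ rather than the required $\Theta(\log n)$ in the exponent. Your suggested repairs---a charging scheme against vertices and a face-stratification of the projected polytope---are stated as hopes (``I would pursue'', ``I expect the correct resolution to involve'') rather than arguments, and you explicitly concede that ``making this consolidation rigorous\ldots is where the delicate combinatorial accounting must reside.'' Nothing in the proposal explains why a single vertex should only be charged $k^{O(d^2)}$ monomials, nor why the non-vertex $\beta$'s can be consolidated without reintroducing the second inductive call. As written, this is an outline with its key step missing.

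The paper's proof takes a completely different and non-inductive route: it applies the \emph{approximate} \Caratheodory\ theorem. Scale $\sup(g)$ down by $1/d$ so that it sits in $[0,1]^n$; distinct support points are then $\ell_\infty$-separated by at least $1/d$. Approximate \Caratheodory\ says every point of the convex hull is $\varepsilon$-close (in $\ell_\infty$) to a \emph{uniform} convex combination of at most $k=O((\log n)/\varepsilon^2)$ vertices. Taking $\varepsilon$ just below $1/(2d)$ makes these approximants distinct for distinct support points, so $\norm{g}=|\sup(g)|\le |V(P_g)|^{k}$ with $k=O(d^2\log n)$. The $\log n$ in the exponent comes directly from the sampling bound in approximate \Caratheodory, not from any halving recursion; this is why the argument lands on $O(d^2\log n)$ in a single stroke where your induction stalls.
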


The bound is tight with regard to dependence on $n$ (see \cite[Claim 4.4]{BSV2020}). As an immediate corollary of the above theorem, we deduce a sparsity bound for the factors of polynomials with bounded individual degrees.

\begin{corollary}\label{cor:ideg-sparsity-bd}
    Let $f=gh$ be a polynomial in $n$ variables of individual degree $d$ and sparsity $\norm{f}=s$. Then, the sparsity of $g$ is $\norm{g} \leq s^{O(d^2\log n)}$. 
\end{corollary}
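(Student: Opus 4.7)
\begin{proof-sketch}
The plan is to chain together three ingredients already present in the text: (i) the trivial inequality $\|f\| \geq |V(P_f)|$, since every vertex of the Newton polytope must be the exponent of a monomial with nonzero coefficient; (ii) Ostrowski's identity $P_f = P_g + P_h$ together with the Minkowski-sum inequality from \eqref{eq:newt-poly}; and (iii) the vertex lower bound of \Cref{thm:vertex-bd} applied to $g$.

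First I would observe that since $g$ divides $f$ and $f$ has individual degree at most $d$, the factor $g$ also has individual degree at most $d$ (the individual degree of $g$ in each variable cannot exceed that of the product). This justifies invoking \Cref{thm:vertex-bd} on $g$, giving $|V(P_g)| \geq \|g\|^{1/O(d^2 \log n)}$.

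Next, I would combine the chain of inequalities
\[
    s \;=\; \|f\| \;\geq\; |V(P_f)| \;=\; |V(P_g + P_h)| \;\geq\; |V(P_g)| \;\geq\; \|g\|^{1/O(d^2 \log n)},
\]
where the first inequality is because each vertex of $P_f$ corresponds to a distinct monomial appearing in $f$, the equality in the middle is Ostrowski's identity for Newton polytopes of products, and the penultimate inequality is the lower bound in \eqref{eq:newt-poly}. Rearranging yields $\|g\| \leq s^{O(d^2 \log n)}$, as required.

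There is no real obstacle here: this corollary is essentially immediate from \Cref{thm:vertex-bd} once one notes that the factor inherits the bounded individual degree and that Ostrowski's identity propagates vertex counts multiplicatively in Minkowski sums. The entire technical content was absorbed into the proof of \Cref{thm:vertex-bd}; this step is just arithmetic on the resulting bound.
\end{proof-sketch}
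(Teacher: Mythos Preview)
Your proposal is correct and essentially identical to the paper's own proof: it notes that $g$ inherits the individual-degree bound, then chains $\|f\| \geq |V(P_f)| = |V(P_g + P_h)| \geq |V(P_g)| \geq \|g\|^{1/O(d^2\log n)}$ via \eqref{eq:newt-poly} and \Cref{thm:vertex-bd}, and rearranges. There is nothing to add.
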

\begin{proof}
    Note that the individual degree of $g$ is bounded by $d$ as well. Using \Cref{eq:newt-poly} and \Cref{thm:vertex-bd}, we get
    $$\norm{f} \geq |V(P_f)| = |V(P_g + P_h)| \geq |V(P_g)| \geq \norm{g}^{1/O(d^2\log n)},$$
    and thus, the required bound.
\end{proof}

Although the sparsity bound of \Cref{cor:ideg-sparsity-bd} is worse than \Cref{eg:exp-blowup}, it can be viewed as affirmative evidence towards \Cref{conj:sparse-fac}. Using the above sparsity bound, one can also obtain a deterministic \emph{quasi-polynomial} time algorithm for sparse polynomials of bounded individual degree~\cite[Theorem 2]{BSV2020} (also see~\cite{HG2023}). We now give a brief overview of the ideas in the proof of \Cref{thm:vertex-bd}. 

\begin{proof-sketch}[\Cref{thm:vertex-bd}]
    The classical \Caratheodory's theorem of convex geometry states that any point $\vecx \in \Conv(X)$ in the convex hull of a set of points $X \subseteq \R^n$ can be written as a convex combination of at most $n+1$ points in $X$. In fact, the convex combination only needs to use the vertices of $\Conv(X)$. We need much fewer than $n+1$ points if we are okay with \emph{approximating} $\vecx$ by a convex combination. 
 
    In the approximate version of \Caratheodory's theorem~\cite[Theorem 3.6]{BSV2020}, we have a set $X$ with bounded $\ell_{\infty}$ norm (i.e., $\max_{\vecy \in X} \norm{\vecy}_{\infty} \leq 1$). Given an $\varepsilon > 0$, any point {$\vecx \in \Conv(X)$} that lies in the convex hull of $X$ can be \emph{$\varepsilon$-approximated} by a point $\vecx'$ (i.e., $\norm{\vecx-\vecx'}_{\infty} \leq \varepsilon$), and moreover, $\vecx'=\sum_\vecy \alpha'_\vecy \vecy$ is a \emph{uniform} convex combination ($\alpha'_\vecy = 1/k$ whenever $\alpha'_\vecy \neq 0$) of at most $k=O(\log n/\varepsilon^2)$ vertex points from $V(\Conv(X))$. The bound on $k$ follows from sampling points $\vecy \in V(\Conv(X))$ with probability $\alpha_\vecy$ where $\vecx = \sum_\vecy \alpha_\vecy \vecy$, and then using Chernoff's inequality to show that sampling $O(\log n/\varepsilon^2)$ points and taking their uniform convex combination $\varepsilon$-approximates $\vecx$. For us, $X$ will be the scaled-down version of $\sup(g)$, i.e., 
    $$X\; =\; \{1/d\cdot \vece : \vece \in \sup(g)\}.$$
    Note that $|X| = \norm{g}$ and $|V(\Conv(X))| = |V(P_g)|$. Moreover, for distinct $\vecx \neq \vecy \in X$, we have $\norm{\vecx-\vecy}_{\infty} \geq 1/d$. Hence, if we choose $\varepsilon$ to be something slightly smaller than $1/2d$, by the triangle inequality, a point $\vecx' \in \Conv(X)$ can $\varepsilon$-approximate only one of $\vecx$ or $\vecy$, not both. So, every point $\vecx \in X$ is guaranteed a distinct point $\vecx' \in \Conv(X)$ that approximates it \emph{and} is a uniform convex combination of at most $k = O(\log n/\varepsilon^2)$ points of $V(\Conv(X))$. The number of such $\vecx'$ is at most $|V(\Conv(X))|^k$. Hence, $|X| \leq |V(\Conv(X))|^k$, and we can choose $\varepsilon$ sufficiently close to (but smaller than) $1/2d$ to get $k=O(d^2 \log n)$, and thus our desired bound.
\end{proof-sketch}

Note that the polynomials in \Cref{eg:quasi-poly-blowup} and \Cref{eg:exp-blowup} with dense factors were \emph{symmetric}. Bisht and Saxena~\cite{BS2025} answered \Cref{conj:sparse-fac} in the affirmative in this case.

\begin{theorem}[\protect{\cite[Lemma 4.12]{BS2025}}]
    Let $f=gh$ be a factorisation with $g$ being a \emph{symmetric} polynomial in $n$ variables of \emph{individual degree} $d$ and sparsity $\norm{f}=s$. Then, $\norm{g} \leq s^{O(d^2\log d)}$.
\end{theorem}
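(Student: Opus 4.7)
The target bound $\|g\| \leq s^{O(d^2 \log d)}$ differs from \Cref{cor:ideg-sparsity-bd} only in replacing $\log n$ by $\log d$, so my plan is to prove a \emph{symmetric} strengthening of \Cref{thm:vertex-bd}: for a symmetric polynomial $g$ of individual degree $d$,
\[
|V(P_g)| \;\geq\; \|g\|^{1/O(d^2 \log d)}.
\]
Granted this bound, the theorem follows exactly as in \Cref{cor:ideg-sparsity-bd}. By Ostrowski's identity and the Minkowski-sum inequality in \Cref{eq:newt-poly}, $|V(P_f)| = |V(P_g + P_h)| \geq |V(P_g)|$, while $\|f\| \geq |V(P_f)|$ is immediate. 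Chaining gives $s \geq \|g\|^{1/O(d^2 \log d)}$, which rearranges to the claim. Notice that, unlike \Cref{cor:ideg-sparsity-bd}, the bound on $h$'s individual degree is irrelevant here: all we need is the vertex bound on $g$ alone.

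To prove the symmetric vertex bound, I would revisit the proof sketch of \Cref{thm:vertex-bd}, where the $\log n$ factor arises purely from the approximate \Caratheodory step in the $n$-dimensional ambient space. The key fact for symmetric $g$ is that $\sup(g)$ is a union of $S_n$-orbits, and each orbit is parameterized by its multiplicity vector $(c_0, c_1, \ldots, c_d) \in \Z_{\geq 0}^{d+1}$, with $c_i(a) = |\{j : a_j = i\}|$ and $\sum_i c_i = n$. These multiplicity vectors live in an effectively $d$-dimensional simplex, so the problem has effectively dropped from dimension $n$ to dimension $d$. I would then work with sorted representatives of orbits and, after scaling exponents by $1/d$ so distinct sorted vectors remain $\ell_\infty$-separated by at least $1/d$, apply a refined approximate \Caratheodory-type statement in this $d$-dimensional parameterization with $\varepsilon$ slightly below $1/(2d)$. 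This targets $k = O(\log d / \varepsilon^2) = O(d^2 \log d)$ approximating vertices, yielding $|\sup(g)^{\mathrm{sort}}| \leq |V(P_g^{\mathrm{sort}})|^{O(d^2 \log d)}$. Because both $\|g\|$ and $|V(P_g)|$ decompose over $S_n$-orbits and matching orbits contribute identical multiplicities on both sides, the symmetric vertex bound follows.

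The hardest step is turning this dimensional-reduction heuristic into a rigorous ambient-dimension-$d$ approximate \Caratheodory statement. The natural multiplicity map $a \mapsto (c_0(a), \ldots, c_d(a))$ is piecewise linear rather than linear, so one cannot simply push $\sup(g)$ into $\R^{d+1}$ and invoke the standard approximate \Caratheodory verbatim; a direct projection changes the scaling (multiplicities sum to $n$) and would produce the useless bound $O(n^2 \log d)$ instead. I would therefore aim to prove a tailored approximation lemma that, for sorted vectors taking at most $d+1$ distinct coordinate values, gives an approximating convex combination of size $O(\log d / \varepsilon^2)$, plausibly by running the probabilistic argument behind approximate \Caratheodory over the $(d+1)$-valued alphabet rather than over all $n$ coordinates. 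A secondary concern is to discard the spurious vertices of $P_g^{\mathrm{sort}}$ that appear on the Weyl-chamber walls when passing between orbits of $\sup(g)$ and orbits of $V(P_g)$, but this is mild bookkeeping compared to the dimension-reduction step itself.
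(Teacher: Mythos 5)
The survey states this result only as a citation to \cite{BS2025} and gives no proof, so your argument has to stand on its own. The outer reduction is fine, and it is indeed the natural one: $s=\norm{f}\geq|V(P_f)|=|V(P_g+P_h)|\geq|V(P_g)|$ needs nothing about $h$, so everything rests on your proposed symmetric vertex bound $|V(P_g)|\geq\norm{g}^{1/O(d^2\log d)}$. But that bound --- replacing $\log n$ by $\log d$ in \Cref{thm:vertex-bd} --- is the entire content of the theorem, and your plan for it is not yet a proof. The approximate \Caratheodory\ step in the proof of \Cref{thm:vertex-bd} needs two things at once: a union bound over an ambient ``dimension'' of size about $d$, and $\ell_\infty$-separation $1/d$ after scaling. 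Your two candidate spaces each deliver only one of these, as you yourself note: sorted representatives still live in $\R^n$ (separation $1/d$, but a union bound over $n$ coordinates, hence $\log n$), while count vectors live in $\R^{d+1}$ (dimension $d+1$, but separation $1/n$ after normalizing, hence an $n^2$ factor). The ``tailored approximation lemma'' you defer to is exactly where the improvement must happen, and it is far from clear that the Maurey-type sampling argument localizes to the $(d+1)$-valued alphabet: the empirical average of sampled vertices is not constant on the level sets of the target vector, and the sorting/counting map is only piecewise linear, so it does not commute with convex combinations. As written, the key lemma is a hope, not a step.

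Even granting that lemma, two further steps are flawed. First, the closing transfer ``both $\norm{g}$ and $|V(P_g)|$ decompose over $S_n$-orbits and matching orbits contribute identical multiplicities'' does not yield the bound: what the dimension-reduced argument would give is a bound on the number of support \emph{orbits} by a power of the number of vertex \emph{orbits}, and orbit sizes are wildly unequal. A single non-vertex orbit (a balanced count vector, of size exponential in $n$) can dominate $\norm{g}$ while every vertex orbit has size $\poly(n)$; already in \Cref{eg:exp-blowup} one has $|V(P_g)|=n$ while $\norm{g}\approx n^d$, so orbit counting alone loses the quantitative content and you must control orbit sizes as well. Second, the ``spurious wall vertices'' are not mild bookkeeping and they cut in the wrong direction: a vertex of $P_g$ intersected with the sorted chamber (or of the image under the count map) can lie on a chamber wall without coming from any vertex of $P_g$, which inflates $|V(P_g^{\mathrm{sort}})|$; since your chain needs a \emph{lower} bound on $|V(P_g)|$ in terms of the sorted picture, this has to be excluded or repaired by an actual argument. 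So the proposal identifies a plausible shape for the proof, but the two essential ingredients --- a rigorous dimension-$d$ approximate-\Caratheodory\ statement compatible with the non-linear orbit parameterization, and the orbit-size (not just orbit-count) accounting --- are missing, and the second is incorrect as currently stated.
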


Furthermore, they used the above result to give a deterministic polynomial-time factoring algorithm~\cite[Theorem 1.2]{BS2025} in this case. For general sparse polynomials, the aforementioned result of Bhattacharjee, Kumar, Ramanathan, Saptharishi, and Saraf~\cite{BKR+2025} gives a subexponential deterministic algorithm as a special case when the depth is two.

\section*{Acknowledgements}
The authors thank the conducive atmosphere of the {\em Workshop on Algebraic Complexity Theory 2023} in University of Warwick to initiate new ideas; and the {\em Workshop on Recent Trends in Computer Algebra 2023} in Institut Henri Poincar\'e, Paris for giving N.S.~the opportunity to chalk out the details of \cite{Sax2023}. C.S.B.~and N.S.~thank the organisers and the participants in Ruhr University Bochum for the interesting discussions in the {\em 8th Workshop on Algebraic Complexity Theory} (\href{https://qi.rub.de/events/wact25/}{WACT 2025}). We also thank Amit Sinhababu and Varun Ramanathan for insightful discussions on Newton iteration and for pointing us to some relevant references.

C.S.B.~thanks the European Union (ERC, CountHom, 101077083) for funding part of this work.

N.S.~thanks the DST-SERB agencies for funding support through the Core Research Grant (CRG/2020/000045) and the J.C.~Bose National Fellowship (JCB/2022/57), as well as the N.~Rama~Rao Chair (2019--) of the Department of CSE, IIT Kanpur.

P.D.~thanks the \emph{Independent Research Fund Denmark} for funding support (FLows 10.46540/3103-00116B), and also acknowledges the support of Basic Algorithms Research Copenhagen (BARC) through the Villum Investigator Grant 54451.

We thank the anonymous reviewers for providing useful feedback that greatly helped in improving the presentation and correctness of the survey.

\printbibliography
\end{document}